\newtheorem{theorem}{Theorem}[section]
\newtheorem{proposition}{Proposition}[section]
\newtheorem{lemma}{Lemma}[section]
\theoremstyle{remark}
\newtheorem*{remark}{Remark}
\theoremstyle{definition}
\newtheorem{definition}{Definition}[section]
\definecolor{brickred}{cmyk}{0,0.89,0.94,0.28}
\definecolor{goldenrod}{cmyk}{0,0.10,0.84,0}
\definecolor{purple}{cmyk}{0.45,0.86,0,0}
\definecolor{rawsienna}{cmyk}{0,0.72,1,0.45}
\definecolor{olivegreen}{cmyk}{0.64,0,0.95,0.40}
\definecolor{peach}{cmyk}{0,0.5,0.7,0}
\definecolor{darkolive}{rgb}{0.,0.4,0.}
\colorlet{grey}{gray!40}
\begin{document}

\title{Coding Schemes Based on Reed-Muller Codes for $(d,\infty)$-RLL Input-Constrained Channels}
%
%
%

\author{V.~Arvind~Rameshwar,~\IEEEmembership{Student Member,~IEEE,}
        and~Navin~Kashyap,~\IEEEmembership{Senior~Member,~IEEE}
\thanks{The work of V.~A.~Rameshwar was supported by a Prime Minister's Research Fellowship, from the Ministry of Education, Govt. of India. {Parts of this work were presented at the 2022 IEEE International Symposium on Information Theory (ISIT) and the 2022 IEEE Information Theory Workshop (ITW)}.}
\thanks{The authors are with the Department of Electrical Communication Engineering, Indian Institute of Science, Bengaluru 560012, India (e-mail: vrameshwar@iisc.ac.in;~nkashyap@iisc.ac.in).}
}

\markboth{IEEE Transactions on Information Theory,~Vol.XX, No.~XX, Month and Year}%
{Rameshwar and Kashyap: Reed-Muller Codes for $(d,\infty)$-RLL Input-Constrained Channels}

\maketitle

\begin{abstract}
The paper considers coding schemes derived from Reed-Muller (RM) codes, for transmission over input-constrained memoryless channels. Our focus is on the $(d,\infty)$-runlength limited (RLL) constraint, which mandates that any pair of successive $1$s be separated by at least $d$ $0$s. 
In our study, we first consider $(d,\infty)$-RLL subcodes of RM codes, taking the coordinates of the RM codes to be in the standard lexicographic ordering. We show, via a simple construction, that RM codes of rate $R$ have linear $(d,\infty)$-RLL subcodes of rate $R\cdot{2^{-\left \lceil \log_2(d+1)\right \rceil}}$. We then show that our construction is essentially rate-optimal, by deriving an upper bound on the rates of linear $(d,\infty)$-RLL subcodes of RM codes of rate $R$. Next, for the special case when $d=1$, we prove the existence of potentially non-linear $(1,\infty)$-RLL subcodes that achieve a rate of $\max\left(0,R-\frac38\right)$. This, for $R > 3/4$, beats the $R/2$ rate obtainable from linear subcodes. We further derive upper bounds on the rates of $(1,\infty)$-RLL subcodes, not necessarily linear, of a certain canonical sequence of RM codes of rate $R$. We then shift our attention to settings where the coordinates of the RM code are not ordered according to the lexicographic ordering, and derive rate upper bounds for linear $(d,\infty)$-RLL subcodes in these cases as well. 
Finally, we present a new two-stage constrained coding scheme, again using RM codes of rate $R$, which outperforms any linear coding scheme using $(d,\infty)$-RLL subcodes, for values of $R$ close to $1$. 
\end{abstract}

\begin{IEEEkeywords}
Reed-Muller codes, runlength-limited constraints, BMS channels
\end{IEEEkeywords}

\IEEEpeerreviewmaketitle

\section{Introduction}
\label{sec:intro}
%
%
%
%
\IEEEPARstart{C}{onstrained} coding is a method that helps alleviate error-prone sequences in data recording and communication systems, by allowing the encoding of arbitrary user data sequences into only those sequences that respect the constraint (see, for example, \cite{Roth} or \cite{Immink}). In this work, we consider the construction of constrained codes derived from the Reed-Muller (RM) family of codes.

Our focus is on the $(d,\infty)$-runlength limited (RLL) input constraint, which mandates that there be at least $d$ $0$s between every pair of successive $1$s in the binary input sequence. For example, when $d=2$, the sequence $100100010$ respects the $(2,\infty)$-RLL constraint, but the sequence $10100010$, does not. It can also be checked that the $(1,\infty)$-RLL constraint is the same as a ``no-consecutive-ones'' constraint. The $(d,\infty)$-RLL constraint is a special case of the $(d,k)$-RLL constraint, which admits only binary sequences in which successive $1$s are separated by at least $d$ $0$s, and the length of any run of $0$s is at most $k$. In a magnetic recording system, for example, the $(d,\infty)$-RLL constraint on the data sequence (where the bit $1$ corresponds to a voltage peak of high amplitude and the bit $0$ corresponds to no peak) ensures that successive $1$s are spaced far enough apart, so that there is little inter-symbol interference between the voltage responses corresponding to the magnetic transitions. Reference \cite{Immink2} contains many examples of $(d,k)$-RLL codes used in practice in magnetic storage and recording. More recently, $(d,k)$-RLL input constrained sequences have also been investigated for joint energy and information transfer performance \cite{infoenergy}.

We are interested in the transmission of $(d,\infty)$-RLL constrained codes over an input-constrained, noisy binary-input memoryless symmetric (BMS) channel, without feedback. BMS channels form a subclass of discrete memoryless channels (DMCs). Examples of BMS channels include the binary erasure channel (BEC) and binary symmetric channel (BSC), shown in Figures \ref{fig:bec} and \ref{fig:bsc}. Figure \ref{fig:gen_const_DMC} illustrates our system model of an input-constrained BMS channel, without feedback. Input-constrained DMCs fall under the broad class of discrete finite-state channels (DFSCs). 

\begin{figure}[!h]
	\centering
	\subfloat[]{
		\resizebox{0.21\textwidth}{!}{

			\tikzset{every picture/.style={line width=0.75pt}} 
			
			\begin{tikzpicture}[x=0.75pt,y=0.75pt,yscale=-1,xscale=1]
				
				\draw    (454.24,136.67) -- (521.85,136.67) ;
				\draw [shift={(523.85,136.67)}, rotate = 180] [color={rgb, 255:red, 0; green, 0; blue, 0 }  ][line width=0.75]    (10.93,-3.29) .. controls (6.95,-1.4) and (3.31,-0.3) .. (0,0) .. controls (3.31,0.3) and (6.95,1.4) .. (10.93,3.29)   ;
				\draw    (454.24,136.67) -- (519.94,155.67) ;
				\draw [shift={(521.86,156.22)}, rotate = 196.13] [color={rgb, 255:red, 0; green, 0; blue, 0 }  ][line width=0.75]    (10.93,-3.29) .. controls (6.95,-1.4) and (3.31,-0.3) .. (0,0) .. controls (3.31,0.3) and (6.95,1.4) .. (10.93,3.29)   ;
				\draw    (454.24,173.79) -- (521.85,173.79) ;
				\draw [shift={(523.85,173.79)}, rotate = 180] [color={rgb, 255:red, 0; green, 0; blue, 0 }  ][line width=0.75]    (10.93,-3.29) .. controls (6.95,-1.4) and (3.31,-0.3) .. (0,0) .. controls (3.31,0.3) and (6.95,1.4) .. (10.93,3.29)   ;
				\draw    (454.24,173.79) -- (519.93,156.72) ;
				\draw [shift={(521.86,156.22)}, rotate = 165.43] [color={rgb, 255:red, 0; green, 0; blue, 0 }  ][line width=0.75]    (10.93,-3.29) .. controls (6.95,-1.4) and (3.31,-0.3) .. (0,0) .. controls (3.31,0.3) and (6.95,1.4) .. (10.93,3.29)   ;
				
				\draw (476.48,146) node [anchor=north west][inner sep=0.75pt]  [font=\footnotesize]  {$\epsilon $};
				\draw (476.48,158) node [anchor=north west][inner sep=0.75pt]  [font=\footnotesize]  {$\epsilon $};
				\draw (477.13,123.4) node [anchor=north west][inner sep=0.75pt]  [font=\footnotesize]  {$1-\epsilon $};
				\draw (477.13,175) node [anchor=north west][inner sep=0.75pt]  [font=\footnotesize]  {$1-\epsilon $};
				\draw (443.86,131.32) node [anchor=north west][inner sep=0.75pt]  [font=\footnotesize]  {${\displaystyle 0}$};
				\draw (443.86,166.96) node [anchor=north west][inner sep=0.75pt]  [font=\footnotesize]  {$1$};
				\draw (525.01,131.32) node [anchor=north west][inner sep=0.75pt]  [font=\footnotesize]  {$1$};
				\draw (525.01,149.88) node [anchor=north west][inner sep=0.75pt]  [font=\footnotesize]  {$0$};
				\draw (525.01,167.7) node [anchor=north west][inner sep=0.75pt]  [font=\footnotesize]  {$-1$};

			\end{tikzpicture}
		}
		\label{fig:bec}
	}
	\qquad
	\subfloat[]{
		\resizebox{0.2\textwidth}{!}{

			\tikzset{every picture/.style={line width=0.75pt}} 
			
			\begin{tikzpicture}[x=0.75pt,y=0.75pt,yscale=-1,xscale=1]
				
				\draw    (449.24,119.67) -- (516.85,119.67) ;
				\draw [shift={(518.85,119.67)}, rotate = 180] [color={rgb, 255:red, 0; green, 0; blue, 0 }  ][line width=0.75]    (10.93,-3.29) .. controls (6.95,-1.4) and (3.31,-0.3) .. (0,0) .. controls (3.31,0.3) and (6.95,1.4) .. (10.93,3.29)   ;
				\draw    (449.24,119.67) -- (517.09,155.85) ;
				\draw [shift={(518.85,156.79)}, rotate = 208.07] [color={rgb, 255:red, 0; green, 0; blue, 0 }  ][line width=0.75]    (10.93,-3.29) .. controls (6.95,-1.4) and (3.31,-0.3) .. (0,0) .. controls (3.31,0.3) and (6.95,1.4) .. (10.93,3.29)   ;
				\draw    (449.24,156.79) -- (516.85,156.79) ;
				\draw [shift={(518.85,156.79)}, rotate = 180] [color={rgb, 255:red, 0; green, 0; blue, 0 }  ][line width=0.75]    (10.93,-3.29) .. controls (6.95,-1.4) and (3.31,-0.3) .. (0,0) .. controls (3.31,0.3) and (6.95,1.4) .. (10.93,3.29)   ;
				\draw    (449.24,156.79) -- (517.09,120.61) ;
				\draw [shift={(518.85,119.67)}, rotate = 151.93] [color={rgb, 255:red, 0; green, 0; blue, 0 }  ][line width=0.75]    (10.93,-3.29) .. controls (6.95,-1.4) and (3.31,-0.3) .. (0,0) .. controls (3.31,0.3) and (6.95,1.4) .. (10.93,3.29)   ;
				
				\draw (485.05,124) node [anchor=north west][inner sep=0.75pt]  [font=\footnotesize]  {$p$};
				\draw (483.48,144) node [anchor=north west][inner sep=0.75pt]  [font=\footnotesize]  {$p$};
				\draw (471.13,104.4) node [anchor=north west][inner sep=0.75pt]  [font=\footnotesize]  {$1-p$};
				\draw (470.13,159.4) node [anchor=north west][inner sep=0.75pt]  [font=\footnotesize]  {$1-p$};
				\draw (438.86,114.32) node [anchor=north west][inner sep=0.75pt]  [font=\footnotesize]  {${\displaystyle 0}$};
				\draw (438.86,149.96) node [anchor=north west][inner sep=0.75pt]  [font=\footnotesize]  {$1$};
				\draw (520.01,114.32) node [anchor=north west][inner sep=0.75pt]  [font=\footnotesize]  {$1$};
				\draw (520.01,150.7) node [anchor=north west][inner sep=0.75pt]  [font=\footnotesize]  {$-1$};

			\end{tikzpicture}
		}
		\label{fig:bsc}}
	\caption{(a) The binary erasure channel (BEC$(\epsilon)$) with erasure probability $\epsilon$ and output alphabet $\mathscr{Y} = \{-1,0,1\}$. (b) The binary symmetric channel (BSC$(p)$) with crossover probability $p$ and output alphabet $\mathscr{Y} = \{-1,1\}$.}
\end{figure}
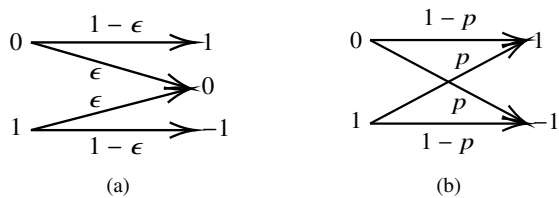

For the setting of \emph{unconstrained} DMCs without feedback, explicit codes achieving the capacities, or whose rates achieved are very close to the capacities, have been derived in works such as \cite{polar, kud1, luby, ru1, kud2}. However, to the best of our knowledge, there are no explicit coding schemes with provable rate and probability of error guarantees, for input-constrained DMCs without feedback. We mention also that unlike the case of the unconstrained DMC, whose capacity is characterized by Shannon's  single-letter formula, $C_{\text{DMC}} = \sup_{P(x)} I(X;Y)$, the computation of the capacity of a DFSC, given by the maximum mutual information rate between inputs and outputs, is a much more difficult problem to tackle, and is open for many simple channels. In fact, the computation of the mutual information rate even for the simple case of Markov inputs, over input-constrained DMCs, reduces to the computation of the entropy rate of a hidden Markov process---a well-known hard problem (see \cite{ZW88,Hanerasure,Han1,erikordentlich,Ordentlich,arnk21ncc, arnk20}).

In this paper, we make progress on the construction of explicit codes over $(d,\infty)$-RLL input-constrained DMCs. Our motivation for constructing $(d,\infty)$-RLL constrained codes using RM codes, are the very recent results of Reeves and Pfister \cite{Reeves} and Abbe and Sandon \cite{abbesandon} that Reed-Muller (RM) codes achieve, under bitwise maximum a-posteriori probability (bit-MAP) {and blockwise maximum a-posteriori probability (block-MAP) decoding, respectively,} any rate $R\in [0,C)$, over unconstrained BMS channels, where $C$ is the capacity of the BMS channel.
We note that for the specific setting of the BEC, Kudekar et al. in \cite{kud1} were the first to show that Reed-Muller (RM) codes are capacity-achieving. As a consequence of these results, the constrained coding schemes we construct have {bit- and block-error probabilities going to $0$ (using the bit-MAP and block-MAP decoders for the RM code, respectively)}, as the blocklength goes to infinity, if the codes are used over $(d,\infty)$-RLL input-constrained BMS channels.
\subsection{Other Approaches From Prior Art}

There is extensive literature on constrained code constructions that can correct a fixed number of errors, under a combinatorial error model (see Chapter 9 in \cite{Roth} and the references therein). Another related line of work that aims to limit error propagation during the decoding of constrained codes can be found in \cite{Bliss, Mansuripur, Imm97, FC98}. More recently, the work \cite{deSouza} proposed the insertion of parity bits for error correction, into those positions of binary constrained codewords that are unconstrained, i.e., whose bits can be flipped, with the resultant codeword still respecting the constraint. However, such works do not analyze the resilience of constrained codes to stochastic channel noise---the standard noise model in information theory, which is taken up in our paper. 

To the best of our knowledge, the paper \cite{pvk}, on rates achievable by $(d,k)$-RLL subcodes of cosets of a linear block code, was the first to consider the problem of coding schemes over RLL input-constrained DMCs. Specifically, from Corollary 1 of \cite{pvk}, we see that there exist cosets of capacity-achieving (over the unconstrained BMS channel) codes, whose $(d,k)$-RLL constrained subcodes have rate at least ${\kappa_{d,k}} + C -1$, where ${\kappa_{d,k}}$ is the noiseless capacity of the $(d,k)$-RLL constraint. However, the work in \cite{pvk} does not identify \emph{explicit} codes over RLL input-constrained channels. 

In the context of designing coding schemes over (input-constrained) BMS channels, it would be remiss to not comment on the rates achieved by polar codes---another family of explicit codes that is capacity-achieving over a broad class of channels (see, for example, \cite{polar,tal2, sasoglutal1}, and references therein). Following the work of Li and Tan in \cite{litan}, we have that the capacity without feedback of the class of input-constrained DMCs can be approached arbitrarily closely using stationary, ergodic Markov input distributions of finite order. Moreover, from the results in \cite{sasoglutal1}, it follows that polar codes can achieve the mutual information rate of any stationary, ergodic finite-state Markov input process, over any DMC. In particular, this shows that polar codes achieve the capacity of $(d,\infty)$-RLL input-constrained DMCs, which includes the class of $(d,\infty)$-RLL input-constrained BMS channels. However, this observation is not very helpful for the following reasons:
\begin{itemize}
	\item We do not possess knowledge of an optimal sequence of Markov input distributions.
	\item The polar code construction described above is not explicit, since the choice of bit-channels to send information bits over is not explicitly known for an arbitrary BMS channel.
	\item It is hard to compute the rate achieved by such a coding scheme, since such a computation reduces to the calculation of the mutual information rate of a hidden Markov process.
\end{itemize}



\begin{figure}[!t]
	\centering
	\resizebox{0.7\textwidth}{!}{

		\tikzset{every picture/.style={line width=0.75pt}} 
		
		\begin{tikzpicture}[x=0.75pt,y=0.75pt,yscale=-1,xscale=1]
			
			\draw   (321,106.32) -- (391,106.32) -- (391,174.32) -- (321,174.32) -- cycle ;
			\draw    (390.5,137.65) -- (438.5,137.65) ;
			\draw [shift={(440.5,137.65)}, rotate = 180] [color={rgb, 255:red, 0; green, 0; blue, 0 }  ][line width=0.75]    (10.93,-3.29) .. controls (6.95,-1.4) and (3.31,-0.3) .. (0,0) .. controls (3.31,0.3) and (6.95,1.4) .. (10.93,3.29)   ;
			\draw   (441,105) -- (511,105) -- (511,180) -- (441,180) -- cycle ;
			\draw   (561,117.65) -- (631,117.65) -- (631,157.65) -- (561,157.65) -- cycle ;
			\draw    (510.5,137.65) -- (558.5,137.65) ;
			\draw [shift={(560.5,137.65)}, rotate = 180] [color={rgb, 255:red, 0; green, 0; blue, 0 }  ][line width=0.75]    (10.93,-3.29) .. controls (6.95,-1.4) and (3.31,-0.3) .. (0,0) .. controls (3.31,0.3) and (6.95,1.4) .. (10.93,3.29)   ;
			\draw    (630.5,137.65) -- (678.5,137.65) ;
			\draw [shift={(680.5,137.65)}, rotate = 180] [color={rgb, 255:red, 0; green, 0; blue, 0 }  ][line width=0.75]    (10.93,-3.29) .. controls (6.95,-1.4) and (3.31,-0.3) .. (0,0) .. controls (3.31,0.3) and (6.95,1.4) .. (10.93,3.29)   ;
			\draw    (250.5,136.65) -- (318.5,136.65) ;
			\draw [shift={(320.5,136.65)}, rotate = 180] [color={rgb, 255:red, 0; green, 0; blue, 0 }  ][line width=0.75]    (10.93,-3.29) .. controls (6.95,-1.4) and (3.31,-0.3) .. (0,0) .. controls (3.31,0.3) and (6.95,1.4) .. (10.93,3.29)   ;

			\draw (461,151.01) node [anchor=north west][inner sep=0.75pt]  [font=\normalsize]  {$P_{Y|X}$};
			\draw (649,121.01) node [anchor=north west][inner sep=0.75pt]  [font=\normalsize]  {$\hat{m}$};
			\draw (529,120.01) node [anchor=north west][inner sep=0.75pt]  [font=\normalsize]  {$y^{n}$};
			\draw (409,121.01) node [anchor=north west][inner sep=0.75pt]  [font=\normalsize]  {$x^{n}$};
			\draw (571,132.61) node [anchor=north west][inner sep=0.75pt]  [font=\normalsize] [align=left] {Decoder};
			\draw (323,124) node [anchor=north west][inner sep=0.75pt]   [align=left] {{\normalsize Constrained}\\{\normalsize \ \ Encoder}};
			\draw (461,123) node [anchor=north west][inner sep=0.75pt]   [align=left] {BMS};
			\draw (255,116) node [anchor=north west][inner sep=0.75pt]  [font=\normalsize]  {${\displaystyle m\in \left[ 2^{nR}\right]}$};

		\end{tikzpicture}
	}	
	\caption{System model of an input-constrained binary-input memoryless symmetric (BMS) channel without feedback.}
	\label{fig:gen_const_DMC}
\end{figure}
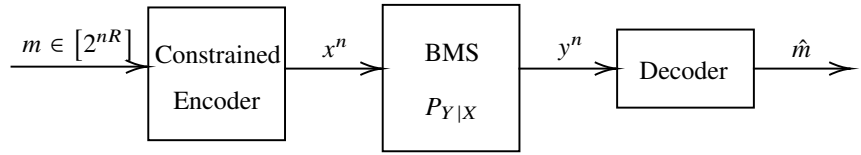

\subsection{Our Contributions}
In this work, we propose explicit coding schemes using RM codes, with computable rate lower bounds, over input-constrained BMS channels. In the first part of the paper, we fix the ordering of the coordinates of the RM codes we consider to be the standard lexicographic ordering. Suppose that $C$ is the capacity of the unconstrained BMS channel. Our first approach to designing $(d,\infty)$-RLL constrained codes is simply to identify \emph{linear} $(d,\infty)$-RLL subcodes of RM codes of rate $R$. Our results identify subcodes of rate $R\cdot{2^{-\left \lceil \log_2(d+1)\right \rceil}}$, in the limit as the blocklength goes to infinity, thereby showing that a rate of $C\cdot{2^{-\left \lceil \log_2(d+1)\right \rceil}}$ is achievable over $(d,\infty)$-RLL input-constrained BMS channels, using linear RM subcodes. Next, we present a lower bound on rates of \emph{non-linear} $(1,\infty)$-RLL subcodes of RM codes. Our result shows that a rate of $\max\left(0,C-\frac{3}{8}\right)$ is achievable, using this approach. {Note that the rates mentioned are achievable under both bit-MAP (from \cite{Reeves}) and block-MAP decoding (from \cite{abbesandon}).}

We then consider the question of obtaining upper bounds on the rates of $(d,\infty)$-RLL subcodes of RM codes of rate $R$, under the lexicographic coordinate ordering. First, we impose the additional restriction that the subcodes be linear. We prove that any linear $(d,\infty)$-RLL subcode can have rate at most $\frac{R}{d+1}$, in the limit as the blocklength goes to infinity. This shows that the linear $(d,\infty)$-RLL subcodes we previously identified are essentially rate-optimal. Next, we consider general (not necessarily linear) constrained subcodes of RM codes, and derive an upper bound on the rate of the largest $(1,\infty)$-RLL subcodes of a certain canonical sequence of RM codes of rate $R$, which we had also used in our lower bounds. Our novel method of analysis involves using an alternative characterization of $(1,\infty)$-RLL codewords of RM codes, and employs properties of the weight distribution of RM codes---a topic that has received revived attention over the last decade (see, for example, the survey \cite{rm_survey} and the papers \cite{abbe1, kaufman, sberlo, anuprao}). Unfortunately, this method does not readily extend to the $(d,\infty)$-RLL case, upper bounds for which remain an open problem.

Since permutations of coordinates have the potential to convert a binary word that does not respect the $(d,\infty)$-RLL constraint to one that does, we then ask the question if under alternative coordinate orderings, we can obtain \emph{linear} $(d,\infty)$-RLL subcodes of RM codes of rate $R$, of rate larger than the upper bound of $\frac{R}{d+1}$ that we had derived for the case when the coordinates follow the lexicographic ordering. {With this in mind, we consider RM codes of large blocklength, whose coordinates have been ordered according to an arbitrary but fixed permutation. We show that for almost all coordinate permutations, linear $(d,\infty)$-RLL subcodes of such permuted large-blocklength RM codes must respect a rate upper bound of $\frac{R}{d+1}+\delta$, where $\delta>0$ can be arbitrarily small.}

As an improvement over the rates achievable using $(d,\infty)$-RLL {subcodes}, we propose a new explicit two-stage (or concatenated) coding scheme using RM codes. The rate achieved, {under block-MAP decoding}, by this scheme, is at least $\frac{{\kappa_{d}}\cdot C^2\cdot 2^{-\left \lceil \log_2(d+1)\right \rceil}}{C^2\cdot 2^{-\left \lceil \log_2(d+1)\right \rceil} + 1-C+\epsilon}$, where ${\kappa_{d}}$ is the noiseless capacity of the input constraint, and $\epsilon>0$ can be taken to be as small as needed. For example, when $d=1$, the rates achieved using this two-stage scheme are better than those achieved by any scheme that uses linear $(1,\infty)$-RLL subcodes of RM codes (under almost all coordinate orderings), when $C\gtrapprox 0.7613$, and better than the rate achieved by our non-linear subcodes for all $C\lessapprox 0.55$ and $C\gtrapprox 0.79$.  Moreover, as the capacity of the channel approaches $1$, i.e., as the channel noise approaches $0$, the rate achieved by our two-stage coding scheme can be made arbitrarily close to ${\kappa_{d}}$, which is the largest rate achievable, at zero noise, given the constraint.

The remainder of the paper is organized as follows: Section \ref{sec:notation} introduces the notation and provides some preliminary background. Section \ref{sec:main} states our main results. The achievable rates using $(d,\infty)$-RLL subcodes of RM codes of rate $R$, under the lexicographic coordinate ordering, are discussed in Section \ref{sec:rm}. Section \ref{sec:rmlinub} then describes the derivation of upper bounds on the rates of linear $(d,\infty)$-RLL subcodes of RM codes of rate $R$, and Section \ref{sec:rmub} contains a derivation of upper bounds on the rates of general $(1,\infty)$-RLL subcodes of a canonical sequence of RM codes of rate $R$, at all times assuming the ordering of coordinates to be the lexicographic ordering. Section \ref{sec:perm} then discusses rate upper bounds for linear $(d,\infty)$-RLL subcodes of RM codes of rate $R$, under other coordinate orderings. Section \ref{sec:cosets} then presents our construction of a two-stage or concatenated $(d,\infty)$-RLL constrained coding scheme using RM codes. Finally, Section \ref{sec:conclusion} contains concluding remarks and a discussion on possible future work.

\section{Notation and Preliminaries}
\label{sec:notation}
\subsection{Notation}
The notation $[n]$ denotes the set, $\{1,2,\ldots,n\}$, of integers, and the notation $[a:b]$, for $a<b$, denotes the set of integers $\{a,a+1,\ldots,b\}$. Moreover, for a real number $x$, we use $\left \lceil x \right \rceil$ to denote the smallest integer larger than or equal to $x$. For vectors $\mathbf{w}$ and $\mathbf{v}$ of length $n$ and $m$, respectively, we denote their concatenation by the $(m+n)$-length vector, $\mathbf{w}\mathbf{v}$. The notation 
$x^N$ denotes the vector $(x_1,\ldots,x_N)$. We also use the notation $\mathbf{e}_i^{(n)}$ to denote the standard basis vector of length $n$, with a $1$ at position $i$, and $0$s elsewhere, for $i\in [n]$. When $n=2^m$, for some $m$, we interchangeably index the coordinates of vectors $\mathbf{v}\in \{0,1\}^n$ by integers $i\in [0:n-1]$ and by $m$-tuples $\mathbf{b} = (b_1,\ldots,b_m)\in \{0,1\}^m$. We let {\textbf{B}$_m(i)$} denote the length-$m$ binary representation of $i$, for $0\leq i\leq 2^m-1$. {We drop the subscript `$m$' if the length of the representation is clear from the context.} We then use the notation $\mathbf{e}_\mathbf{b}^{(n)}$ to denote the standard basis vector of length $n$, with a $1$ at position $\mathbf{b}$, and $0$s elsewhere. The superscript `$(n)$' will be dropped when clear from {the} context. For any $n\in \mathbb{N}$, we denote by $S_{(d,\infty)}^{(n)}$, the set of all $n$-length binary words that respect the $(d,\infty)$-RLL constraint, and we set $S_{(d,\infty)}=\bigcup_{n\geq 1} S_{(d,\infty)}^{(n)}$. 

The notation $X\sim \mathcal{N}(\mu, \sigma^2)$ refers to the fact that the random variable $X$ is drawn according to the Gaussian distribution, with mean $\mu$ and variance $\sigma^2>0$. 
Also, the notation $h_b(p):=-p\log_2 p - (1-p)\log_2(1-p)$ is the binary entropy function, for $p\in [0,1]$. All through, the empty summation is defined to be $0$, and the empty product is defined to be $1$. 
We write exp$_2(z)$ for $2^z$, where $z\in \mathbb{R}$, and the notation $\ln$ refers to the natural logarithm. Throughout, we use the convenient notation $\binom{m}{\le r}$ to denote the summation $\sum\limits_{i=0}^r \binom{m}{i}$. For sequences $(a_n)_{n\geq 1}$ and $(b_n)_{n\geq 1}$ of positive reals, we say that $a_n = O(b_n)$, if there exists $n_0\in \mathbb{N}$ and a positive real $M$, such that $a_n\leq M\cdot b_n$, for all $n\geq n_0$. We say that $a_n = o(b_n)$, if $\lim_{n\to \infty} \frac{a_n}{b_n} = 0$.

\subsection{Information Theoretic Preliminaries}
\subsubsection{Block Codes and Constrained Codes}
We recall the following definitions of block codes and linear codes over $\mathbb{F}_2$ and their rates (see, for example, Chapter 1 of \cite{roth_coding_theory}).

\begin{definition}
	An $(n,M)$ block code $\mathcal{C}$ over $\mathbb{F}_2$ is a nonempty subset of $\mathbb{F}_2^n$, with $|\mathcal{C}| = M$. The rate of the block code $\mathcal{C}$ is given by
	\[
	\text{rate}(\mathcal{C}) := \frac{\log_2 M}{n}.
	\]
\end{definition}
Moreover, given a sequence of codes $\{\mathcal{C}^{(n)}\}_{n\geq 1}$, if rate$(\mathcal{C}^{(n)})\xrightarrow{n\to \infty} R$, for some $R\in [0,1]$, then we say that $\{\mathcal{C}^{(n)}\}_{n\geq 1}$ is of rate $R$.
{
\begin{definition}
	An $[n,k]$ linear code $\mathcal{C}$ over $\mathbb{F}_2$ is an $(n,2^k)$ block code that is a subspace of $\mathbb{F}_2^n$.
\end{definition}
}
Consider now the set $S_{(d,\infty)}^{(n)}$ of length-$n$ binary words that satisfy the $(d,\infty)$-RLL constraint. {Recall that this constraint admits only those binary sequences that have at least $d$ $0$s between successive $1$s.} We recall the following definition (see, for example, Chapter 3 of \cite{Roth}):
\begin{definition}
	The noiseless capacity ${\kappa_d}$ of the $(d,\infty)$-RLL constraint is defined as
	\[
	{\kappa_d}:= \lim_{n\to \infty} \frac{\log_2\left\lvert S_{(d,\infty)}^{(n)}\right \rvert}{n} = \inf_n \frac{\log_2\left\lvert S_{(d,\infty)}^{(n)}\right \rvert}{n},
	\]
where the last equality follows from the subadditivity of the sequence $\left(\log_2\left\lvert S_{(d,\infty)}^{(n)}\right\rvert\right)_{n\geq 1}$. 
\end{definition}
\subsubsection{Linear Codes Over BMS Channels}
\label{sec:lincodesBMS}

A binary-input channel $W$ is defined by the binary-input input alphabet $\mathscr{X} = \{0,1\}$, the output alphabet $\mathscr{Y}\subseteq {\mathbb{R}}$, and the transition probabilities $\left(P(y|x): x\in \mathscr{X}, y\in \mathscr{Y}\right)$, where $P(\cdot|x)$ is a density function with respect to the counting measure, if $\mathscr{Y}$ is discrete, and with respect to the Lebesgue measure, if $\mathscr{Y} = \mathbb{R}$ and the output distribution is continuous. A binary-input memoryless symmetric (BMS) channel obeys a memorylessness property, i.e., $P(y_i|x^{i},y^{i-1}) = P(y_i|x_i)$, for all $i$. Further, the channel is symmetric, in that $P(y|1) = P(-y|0)$, for all $y\in \mathscr{Y}$. Every such channel can be expressed as a multiplicative noise channel, in the following sense: if at any $i$ the input random symbol is $X_i\in \{0,1\}$, then the corresponding output symbol $Y_i\in \mathscr{Y}$ is given by
\[
Y_i = (-1)^{X_i}\cdot Z_i,
\]
where the noise random variables $Z^n$ are independent and identically distributed, and the noise process $(Z_i)_{i\geq 1}$ is independent of the input process $(X_i)_{i\geq 1}$. Common examples of such channels include the binary erasure channel (BEC$(\epsilon)$), with $P(Z_i = 1) = 1-\epsilon$ and $P(Z_i=0)=\epsilon$, the binary symmetric channel (BSC$(p)$), with $P(Z_i = 1) = 1-p$ and $P(Z_i=-1)=p$, and the binary additive white Gaussian noise (BI-AWGN) channel, where $Z_i\sim \mathcal{N}(1,\sigma^2)$. Figures \ref{fig:bec} and \ref{fig:bsc} depict the BEC and BSC, pictorially. 

In this work, we are interested in designing codes over input-constrained BMS channels. We refer the reader to \cite{Reeves} for definitions of the {bitwise maximum a-posteriori probability (bit-MAP)} decoder and bit-error probability $P_b^{(n)}$ (under bit-MAP decoding), indexed by the blocklength $n$ of the code. We also refer the reader to standard texts on information theory such as \cite{cover_thomas}, \cite{gallager} for definitions of blockwise maximum a-posteriori probability (block-MAP) decoding and block-error probability $P_B^{(n)}$.

Specifically, we shall be using constrained subcodes of linear codes $\left\{\mathcal{C}^{(n)}\right\}_{n\geq 1}$ over BMS channels. At the decoder end, we shall employ the bit-MAP {or block-MAP} decoders of $\left\{\mathcal{C}^{(n)}\right\}_{n\geq 1}$. Since $\mathcal{C}^{(n)}$ is linear, the average bit-error probability $P_b^{(n)}$ (resp. the average block-error probability $P_B^{(n)}$), under bit-MAP decoding (resp. block-MAP decoding) of $\mathcal{C}^{(n)}$ is the same as the average bit-error probability (resp. the average block-error probability) when a subcode of $\mathcal{C}^{(n)}$ is used (see \cite{Reeves} for a discussion). 

We say that a rate $R$ is achieved by $\left\{\mathcal{C}^{(n)}\right\}_{n\geq 1}$ over a BMS channel, under bit-MAP decoding {(resp. under block-MAP decoding)}, if $\text{rate}\left(\mathcal{C}^{(n)}\right)\xrightarrow{n\to\infty} R$, with $P_b^{(n)}\xrightarrow{n\to \infty} 0$ {(resp. $P_B^{(n)}\xrightarrow{n\to \infty} 0$)}. Hence, any sequence of subcodes of $\left\{\mathcal{C}^{(n)}\right\}_{n\geq 1}$ are also such that their bit-error {and block-error} probabilities go to zero, as the blocklength goes to infinity.

\subsection{Reed-Muller Codes: Definitions and BMS Channel Performance}
\label{sec:rmintro}
We recall the definition of the binary Reed-Muller (RM) family of codes. Codewords of binary RM codes consist of the evaluation vectors of multivariate polynomials over the binary field $\mathbb{F}_2$. Consider the polynomial ring $\mathbb{F}_2[x_1,x_2,\ldots,x_m]$ in $m$ variables. Note that in the specification of a polynomial $f\in \mathbb{F}_2[x_1,x_2,\ldots,x_m]$, only monomials of the form $\prod_{j\in S:S\subseteq [m]} x_j$ need to be considered, since $x^2 = x$ over the field $\mathbb{F}_2$, for an indeterminate $x$. For a polynomial $f\in \mathbb{F}_2[x_1,x_2,\ldots,x_m]$ and a binary vector $\mathbf{z} = (z_1,\ldots,z_m)\in \mathbb{F}_2^m$, we write {$f(\mathbf{z})=f(z_1,\ldots,z_m)$} as the evaluation of $f$ at $\mathbf{z}$. We let the evaluation points be ordered according to the standard lexicographic order on strings in $\mathbb{F}_2^m$, i.e., if $\mathbf{z} = (z_1,\ldots,z_m)$ and $\mathbf{z}^{\prime} = (z_1^{\prime},\ldots,z_m^{\prime})$ are two distinct evaluation points, then, $\mathbf{z}$ occurs before $\mathbf{z}^{\prime}$ in our ordering if and only if for some $i\geq 1$, we have $z_j = z_j^{\prime}$ for all $j<i$, and $z_i < z_i^{\prime}$. Now, let Eval$(f):=\left({f(\mathbf{z})}:\mathbf{z}\in \mathbb{F}_2^m\right)$ be the evaluation vector of $f$, where the coordinates $\mathbf{z}$ are ordered according to the standard lexicographic order. 

\begin{definition}[see \cite{mws}, Chap. 13, or \cite{rm_survey}]
	{For $0\leq r\leq m$}, the $r^{\text{th}}$-order binary Reed-Muller code RM$(m,r)$ is defined as
	\[
	\text{RM}(m,r):=\{\text{Eval}(f): f\in \mathbb{F}_2[x_1,x_2,\ldots,x_m],\ \text{deg}(f)\leq r\},
	\]
	where $\text{deg}(f)$ is the degree of the largest monomial in $f$, and the degree of a monomial $\prod_{j\in S: S\subseteq [m]} x_j$ is simply $|S|$. 
\end{definition}

It is a known fact that the evaluation vectors of all the distinct monomials in the variables $x_1,\ldots, x_m$ are linearly independent over $\mathbb{F}_2$. It then follows that RM$(m,r)$ has dimension $\binom{m}{\le r} := \sum_{i=0}^{r}{m \choose i}$. Furthermore, RM$(m,r)$ has minimum Hamming distance ${d}_{\text{min}}(\text{RM}(m,r))=2^{m-r}$. The weight of a codeword $\mathbf{c} = \text{Eval}(f)$ is the number of $1$s in its evaluation vector, i.e,
\[
\text{wt}\left(\text{Eval}(f)\right):=\left\lvert\{\mathbf{z}\in \mathbb{F}_2^m: f(\mathbf{z})=1\}\right\rvert.
\]

The number of codewords in RM$(m,r)$ of weight $w$, for $w\in[2^{m-r}: 2^m]$, is given by the weight distribution function at $w$:
\[
A_{m,r}(w):=\left\lvert\{\mathbf{c}\in \text{RM}(m,r): \text{wt}\left(\mathbf{c}\right) = w\}\right\rvert.
\]
The subscripts $m$ and $r$ in $A_{m,r}$ will be suppressed when clear from context.

We also set $G_{\text{Lex}}(m,r)$ to be the generator matrix of $\text{RM}(m,r)$ consisting of rows that are the evaluations, in the lexicographic order, of monomials of degree less than or equal to $r$. The columns of $G_{\text{Lex}}(m,r)$ will be indexed by $m$-tuples $\mathbf{b} = (b_1,\ldots,b_m)$ in the lexicographic order.

In this work, we shall use $(d,\infty)$-RLL constrained subcodes of RM codes, over BMS channels. We now recall the main results of Reeves and Pfister in \cite{Reeves} and Abbe and Sandon in \cite{abbesandon}, which provides context to our using RM codes over input-constrained BMS channels.
For a given $R\in (0,1)$, consider any sequence of RM codes $\left\{{\hat{\mathcal{C}}_m} = \text{RM}(m,r_m)\right\}_{m\geq 1}$, under the lexicographic ordering of coordinates, such that rate$\left({\hat{\mathcal{C}}_m}\right) \xrightarrow{m\to \infty}R$. We let $C$ be the capacity of the unconstrained BMS channel under consideration. The following theorems then hold true:
\begin{theorem}[see Theorem 1 of \cite{Reeves} and Theorem 1 of \cite{abbesandon}]
	\label{thm:Reeves}
	Any rate $R\in [0,C)$ is achieved by the sequence of codes $\left\{{\hat{\mathcal{C}}_m}\right\}_{m\geq 1}$, under bit-MAP and {block-MAP} decoding.
\end{theorem}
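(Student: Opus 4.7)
The plan is to combine three ingredients central to the modern analysis of RM codes over BMS channels: the doubly-transitive symmetry of $\text{RM}(m,r)$ under the lexicographic ordering (the permutation automorphism group contains the general affine group $\text{GA}(m,2)$, which acts doubly transitively on the $2^m$ coordinates); the EXIT (or GEXIT) area theorem relating the integral of a bit-wise conditional-entropy function to the normalized dimension of the code; and a sharp-threshold phenomenon forcing the EXIT curve of a sequence of doubly-transitive codes of diverging length to transition from $0$ to $1$ over an interval of vanishing width.

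Concretely, first embed the given BMS channel into a smooth, degradation-monotone one-parameter family $\{W_t\}_{t\in [0,1]}$ with $W_0$ noiseless and $W_1$ useless, parametrized so that the per-symbol conditional entropy $H(X_1 \mid Y_1)$ equals $t$ under uniform $X_1$. For the code $\hat{\mathcal{C}}_m(R)$ define the bit-MAP EXIT function
\[
h_m(t) \;=\; \frac{1}{2^m}\sum_{i=1}^{2^m} H\bigl(X_i \mid Y_{\sim i}(t)\bigr),
\]
where $Y_{\sim i}(t)$ denotes the outputs of all coordinates other than $i$ through $W_t$. By double transitivity all summands coincide, so $h_m(t) = H(X_1 \mid Y_{\sim 1}(t))$, and $h_m$ is monotone in $t$. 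The EXIT area theorem then reads $\int_0^1 h_m(t)\,\mathrm{d}t = R_m$ with $R_m\to R$, tightly constraining the set on which $h_m$ is neither near $0$ nor near $1$.

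The heart of the proof---and the main obstacle---is to promote this average statement to a \emph{sharp threshold}: to exhibit $t^*_m$ with $h_m(t^*_m - \delta) \to 0$ and $h_m(t^*_m + \delta) \to 1$ for every fixed $\delta > 0$. Over the BEC this is exactly the argument of Kudekar et al., obtained by applying the Bourgain--Kalai sharp-threshold theorem to the monotone, coordinate-transitive Boolean event ``bit $X_1$ is recoverable from the surviving coordinates''; the double transitivity of $\hat{\mathcal{C}}_m(R)$ supplies the vanishing ``influences.'' For a general BMS channel the Boolean threshold machinery does not apply verbatim, and one must work directly with the continuous bit-MAP posterior. My plan is to bound the variance of $h_m(t)$ around its mean via a second-moment estimate that again exploits the $\text{GA}(m,2)$-symmetry (pair exchangeability of the log-likelihood ratios across coordinates), and to couple this variance control to the area theorem through a nested-channel interpolation that compares $W_t$ to a dominating BEC at each $t$, thereby transferring the BEC sharpness to the continuous-output setting.

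Finally, combining the area theorem with the sharp threshold: $h_m$ converges pointwise off $t^*$ to the step function $\mathbf{1}\{t > t^*\}$, and the area theorem pins $t^* = 1 - R$. Since $C(W_t) + H(W_t) = 1$ for BMS channels under uniform input, this translates to $C(W_{t^*}) = R$. For any channel $W_t$ with $C(W_t) > R$ we therefore have $t < t^*$, so $h_m(t) \to 0$; a standard Fano-type bound on the average bit-MAP error probability in terms of $H(X_i \mid Y_{\sim i})$ then gives $P_b^{(m)} \to 0$, which is precisely the claim.
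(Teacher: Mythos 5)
This statement is not proved in the paper at all: it is quoted verbatim as Theorem~1 of the cited work of Reeves and Pfister, and the authors use it as a black box. So the relevant question is whether your sketch would stand on its own as a proof of that external result, and it would not. Your outline correctly names the known ingredients (double transitivity of $\mathrm{RM}(m,r)$ under the affine group, the EXIT/GEXIT area theorem pinning $\int_0^1 h_m(t)\,\mathrm{d}t = R_m$, and a sharp-threshold statement that forces $h_m$ to a step function at $t^* = 1-R$), but the entire difficulty of the theorem lives in the step you yourself flag as ``the main obstacle'' and then only gesture at. A second-moment bound from pair exchangeability alone does not make the variance of the extrinsic quantity vanish: double transitivity tells you the two-coordinate covariance is the same for every pair, so $\mathrm{Var}(h_m)$ reduces to a single covariance term, but showing that this covariance tends to zero is precisely the hard part, and it is what Reeves and Pfister spend most of their paper on (via an I-MMSE/two-look boosting argument combined with the nesting $\mathrm{RM}(m-1,r)\subseteq$ punctured $\mathrm{RM}(m,r)$, not via Boolean threshold machinery).

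The specific mechanism you propose for general BMS channels --- ``a nested-channel interpolation that compares $W_t$ to a dominating BEC at each $t$, thereby transferring the BEC sharpness'' --- fails quantitatively. A BMS channel $W$ can be sandwiched between two erasure channels in the degradation order, but the sandwich is loose: for $\mathrm{BSC}(p)$ the natural upgraded/degraded erasure channels have erasure parameters on the order of $2p$ and $h_b(p)$ respectively, whose capacities do not coincide. Transferring the BEC threshold through such a sandwich locates the transition of $h_m$ only inside a window of non-vanishing width, which is not enough to conclude $h_m(t)\to 0$ for every $t$ with $C(W_t)>R$; it would only give achievability up to a strictly smaller rate. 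So the final Fano step is fine, the area-theorem bookkeeping is fine, but the proof has a genuine gap exactly where the theorem is deep, and the patch you propose for it does not close the gap.
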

As an example, {for $R\in (0,1)$,} consider the sequence of RM codes $\left\{{{\mathcal{C}_m}} = \text{RM}(m,r_m)\right\}_{m\geq 1}$ with 
\begin{equation}
	\label{eq:rmval}
	r_m = \max \left\{\left \lfloor \frac{m}{2}+\frac{\sqrt{m}}{2}Q^{-1}(1-R)\right \rfloor,0\right\},
\end{equation}
where $Q(\cdot)$ is the complementary cumulative distribution function (c.c.d.f.) of the standard normal distribution, i.e.,
\[
Q(t) = \frac{1}{\sqrt{2\pi}}\int_{t}^{\infty}e^{-\tau^2/2}d\tau, \ t\in \mathbb{R}.
\]

From Remark 24 in \cite{kud1}, it follows that rate$\left({{\mathcal{C}_m}}\right) \xrightarrow{m\to \infty}R$. Theorem \ref{thm:Reeves} then states that the sequence of codes $\{{{\mathcal{C}_m}}\}_{m\geq 1}$ achieves a rate $R$ over any unconstrained BMS channel, so long as $R<C$.

\begin{remark}
We note that by Theorem \ref{thm:Reeves}, for the unconstrained BEC (resp. unconstrained BSC) with erasure probability $\epsilon \in (0,1)$ (resp. crossover probability $p\in (0,0.5)\cup (0.5,1)$), the sequence of codes $\{\mathcal{C}_m\}_{m\geq 1}$ {with $R = 1-\epsilon-\delta$ (resp. with $R = 1-h_b(p)-\delta)$)} achieves a rate of $1-\epsilon-\delta$ (resp. a rate of $1-h_b(p)-\delta$), for all $\delta>0$ suitably small. 
\end{remark}
The following important property of RM codes, which is sometimes called the Plotkin decomposition (see  \cite[Chap. 13]{mws} and \cite{rm_survey}), will come of use several times in this paper: any Boolean polynomial $f \in \mathbb{F}_2[x_1,\ldots,x_m]$, such that Eval$(f) \in \text{RM}(m,r)$ {(or equivalently, with deg$(f)\leq r$)}, can be expressed as:
\begin{equation}
	\label{eq:rmdecomp}
	f(x_1,\ldots,x_m) = g(x_1,\ldots,x_{m-1}) + x_m\cdot h(x_1,\ldots,x_{m-1}), 
\end{equation}
where $g,h$ are such that Eval$(g)\in \text{RM}(m-1,r)$ and Eval$(h)\in \text{RM}(m-1,r-1)${, and the `$+$' operation is over $\mathbb{F}_2$}. {Note that the polynomials $g$ and $h$ above are uniquely determined, given the polynomial $f$.} Given the sequence of RM codes $\left\{{{\mathcal{C}_m}} = \text{RM}(m,r_m)\right\}_{m\geq 1}$, where $r_m$ is as in \eqref{eq:rmval}, we use the notations ${\mathcal{C}_{m,+}}:=\text{RM}(m-1,r_m)$ and ${\mathcal{C}_{m,-}}:=\text{RM}(m-1,r_m-1)$, with rate$\left({\mathcal{C}_{m,+}}\right):={R_{m,+}}$ and rate$\left({\mathcal{C}_{m,-}}\right):={R_{m,-}}$.

In our $(d,\infty)$-RLL constrained code constructions in this paper, we shall use constrained subcodes of a sequence $\left\{{\hat{\mathcal{C}}_m} = \text{RM}(m,v_m)\right\}_{m\geq 1}$ of rate $R$, and explicitly compute the rates of the coding schemes. From the discussion in Section \ref{sec:lincodesBMS} above, we arrive at the fact that using the bit-MAP or block-MAP decoders of ${\hat{\mathcal{C}}_m}$, the rates of the constrained subcodes of ${\hat{\mathcal{C}}_m}$, computed in this paper, are in fact achievable over $(d,\infty)$-RLL input-constrained BMS channels, so long as $R<C$.

\section{Main Results}
\label{sec:main}
In this section, we briefly state our main theorems, and provide comparisons with the literature. We assume that the BMS channel that we are working with, has an unconstrained capacity of $C\in (0,1)$.
{\subsection{Rates of Subcodes Under the Lexicographic Coordinate Ordering}}
We first fix the coordinate ordering of the RM codes to be the standard lexicographic ordering. Our first approach to designing $(d,\infty)$-RLL constrained codes using RM codes is constructing a sequence of \emph{linear} subcodes of $\{{\mathcal{C}_m} = \text{RM}(m,r_m)\}_{m\geq 1}$, with $r_m$ as in \eqref{eq:rmval}, which respect the $(d,\infty)$-RLL input-constraint, and analyzing the rate of the chosen subcodes. We obtain the following result:

\begin{theorem}
	\label{thm:rm}
	For any $R \in (0,1)$, there exists a sequence of linear $(d,\infty)$-RLL codes $\bigl\{{\mathcal{C}_{m}^{(d)}}\bigr\}_{m\geq 1}$, where ${\mathcal{C}_{m}^{(d)}} \subset {\mathcal{C}_m}$, of rate  $\frac{R}{2^{\left \lceil \log_2(d+1)\right \rceil}}$.
\end{theorem}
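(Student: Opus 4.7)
Let $k := \lceil \log_2(d+1)\rceil$, so that $2^k \geq d+1$. The plan is to construct $\mathcal{C}_m^{(d,\infty)}(R)$ as the subcode of $\mathcal{C}_m(R) = \text{RM}(m,r_m)$ generated by polynomials of the form
\[
f(x_1,\ldots,x_m) = (1+x_{m-k+1})(1+x_{m-k+2})\cdots (1+x_m)\cdot q(x_1,\ldots,x_{m-k}),
\]
where $q$ ranges over polynomials in $\mathbb{F}_2[x_1,\ldots,x_{m-k}]$ of degree at most $r_m-k$ (for $m$ large enough that $r_m \geq k$, which holds since $r_m\to\infty$). Since $\deg(f)\leq k + (r_m-k) = r_m$, we automatically have $\text{Eval}(f)\in \text{RM}(m,r_m)$, so the subcode is contained in $\mathcal{C}_m(R)$ as required.

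The next step is to verify the $(d,\infty)$-RLL property. The factor $(1+x_{m-k+1})\cdots(1+x_m)$ evaluates to $1$ on a tuple $\mathbf{z}\in \mathbb{F}_2^m$ if and only if its last $k$ coordinates are all $0$; under the lexicographic ordering of evaluation points, these are precisely the coordinate positions that are multiples of $2^k$. Consequently every codeword in $\mathcal{C}_m^{(d,\infty)}(R)$ has support contained in positions of the form $j\cdot 2^k$, so consecutive $1$s are separated by at least $2^k-1\geq d$ zeros, which is exactly the $(d,\infty)$-RLL constraint.

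To compute the dimension, note that the map $q\mapsto \text{Eval}(f)$ is injective: the evaluation of $f$ at $(\mathbf{z},0,\ldots,0)$ equals $q(\mathbf{z})$ for any $\mathbf{z}\in \mathbb{F}_2^{m-k}$, and distinct polynomials in $\mathbb{F}_2[x_1,\ldots,x_{m-k}]$ of degree at most $r_m-k$ have distinct evaluation vectors (this is a standard property of RM codes). Hence $\dim \mathcal{C}_m^{(d,\infty)}(R) = \binom{m-k}{\leq r_m-k}$, so
\[
\text{rate}(\mathcal{C}_m^{(d,\infty)}(R)) \;=\; \frac{1}{2^k}\cdot \frac{\binom{m-k}{\leq r_m-k}}{2^{m-k}}.
\]

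The remaining, and main, step is to show that the second factor on the right converges to $R$ as $m\to\infty$. The idea is the same CLT argument used in Remark 24 of \cite{kud1}: if $X\sim \text{Bin}(m-k,1/2)$, then the second factor equals $\Pr(X\leq r_m-k)$, and with $r_m = \lfloor m/2 + (\sqrt{m}/2)Q^{-1}(1-R)\rfloor$ one checks that
\[
\frac{(r_m-k) - (m-k)/2}{\sqrt{(m-k)/4}} \;=\; \frac{\sqrt{m}\,Q^{-1}(1-R) - k}{\sqrt{m-k}} \;\longrightarrow\; Q^{-1}(1-R)
\]
as $m\to\infty$, so $\Pr(X\leq r_m-k)\to 1-Q(Q^{-1}(1-R))=R$. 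Combining gives $\text{rate}(\mathcal{C}_m^{(d,\infty)}(R))\to R/2^{\lceil \log_2(d+1)\rceil}$, as desired. The only subtlety, and the place where I expect the cleanest care is needed, is handling the floor and the $-k$ shift in the CLT normalization; everything else is a direct assembly of the construction.
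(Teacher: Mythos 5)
Your proof is correct and follows essentially the same route as the paper: the paper's construction multiplies by $\prod_{i=m-z+1}^{m} x_i$ (support on positions $\equiv 2^z-1 \bmod 2^z$) rather than your $\prod_{i=m-k+1}^{m}(1+x_i)$ (support on multiples of $2^k$), but both are degree-$z$ indicator polynomials of the last $z$ coordinates that force the support onto a lattice of spacing $2^z \geq d+1$, and the dimension count is identical. The paper handles the limit $\frac{1}{2^{m-z}}\binom{m-z}{\le r_m - z} \to R$ via its Lemma on $\frac{1}{2^{m-t}}\binom{m-t}{\le r_m}$ plus a ratio argument, whereas you fold the $-k$ shift directly into the CLT normalization; both are valid.
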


The proof of Theorem \ref{thm:rm} (which is Theorem III.2 in \cite{arnk22isit}), which contains an explicit identification of the subcodes $\bigl\{{\mathcal{C}_{m}^{(d)}}\bigr\}_{m\geq 1}$, is provided in Section \ref{sec:rm}. 
From the discussion in Section \ref{sec:rmintro}, we see that by Theorem \ref{thm:rm}, using linear constrained subcodes of RM codes over the $(d,\infty)$-RLL input-constrained BEC, a rate of $\frac{1}{d+1}(1-\epsilon)$ is achievable when $d=2^t-1$, for some $t\in \mathbb{N}$, and a rate of $\frac{1}{2(d+1)}(1-\epsilon)$ is achievable, otherwise. We note, however, that using random coding arguments, or using the techniques in \cite{ZW88} or \cite{arnk20}, a rate of ${\kappa_d}(1-\epsilon)$ is achievable over the  $(d,\infty)$-RLL input-constrained BEC, where ${\kappa_d}$ is the noiseless capacity of the input constraint (for example, ${\kappa_1}\approx 0.6942$ and ${\kappa_2}\approx 0.5515$). 
For the $(d,\infty)$-RLL input-constrained BSC, similarly, a rate of $\frac{1}{d+1}(1-h_b(p))$ is achievable when $d=2^t-1$, for some $t\in \mathbb{N}$, and a rate of $\frac{1}{2(d+1)}(1-h_b(p))$ is achievable, otherwise. Such a result is in the spirit of, but is weaker than, the conjecture by Wolf \cite{Wolf} that a rate of ${\kappa_d}(1-h_b(p))$ is achievable over the $(d,\infty)$-RLL input-constrained BSC.

For the specific case when $d=1$, we now state an existence result that provides another lower bound on rates of (potentially non-linear) $(1,\infty)$-RLL constrained subcodes of RM codes of rate $R$.

\begin{theorem}
	\label{thm:nonlin}
	For any $R\in (0,1)$ and for any sequence of codes $\left\{{\hat{\mathcal{C}}_m} = \text{RM}(m,r_m)\right\}_{m\geq 1}$ of rate $R$, following the lexicographic coordinate ordering, there exists a sequence of {$(1,\infty)$-RLL subcodes} of rate at least $\max\left(0,R-\frac38\right)$.
\end{theorem}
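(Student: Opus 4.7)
The plan is to exploit the Plotkin decomposition of $\hat{\mathcal{C}}_m(R) = \text{RM}(m, r_m)$: every codeword arises as $\text{Eval}(g + x_m h)$ for some $g \in \text{RM}(m-1, r_m)$ and $h \in \text{RM}(m-1, r_m - 1)$, and in the lexicographic order the codeword is the interleaving $(g[0], g[0] + h[0], g[1], g[1] + h[1], \ldots)$ of length $n = 2^m$. I would first translate the $(1,\infty)$-RLL condition into pointwise constraints on $(g, h)$: the intra-pair condition $(c_{2j}, c_{2j+1}) \neq (1, 1)$ yields ``$g[j] = 1 \Rightarrow h[j] = 1$'' and the inter-pair condition $(c_{2j+1}, c_{2j+2}) \neq (1, 1)$ yields ``$g[j+1] = 1 \Rightarrow h[j] = g[j]$''. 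Combining, $h[j]$ is forced to equal $g[j]$ at every index $j$ with $g[j] = 1$ or $g[j+1] = 1$, and is free when $g[j] = g[j+1] = 0$. I would denote the number of forced positions by $c(g) = W(g) + U(g)$, where $W(g)$ is the Hamming weight of $\text{Eval}(g) \in \{0,1\}^{n/2}$ and $U(g) = |\{j \in [0, n/2 - 2] : g[j] = 0, g[j+1] = 1\}|$ counts the $0 \to 1$ ascents.

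To avoid a feasibility obstruction for the induced evaluation constraints on $h$, I would further restrict attention to the sub-subcode where $g$ is constrained to lie in $\text{RM}(m-1, r_m - 1) \subseteq \text{RM}(m-1, r_m)$. For such a $g$, the explicit choice $h = g$ satisfies all constraints (producing the $(1,\infty)$-RLL codeword $(g[0], 0, g[1], 0, \ldots)$), so the set of valid $h \in \text{RM}(m-1, r_m - 1)$ forms a nonempty affine subspace of cardinality $N(g) \geq 2^{\binom{m-1}{\leq r_m - 1} - c(g)}$. Summing over this restricted $g$ and invoking Jensen's inequality for the convex function $x \mapsto 2^{-x}$ yields
\[
\bigl|\tilde{\mathcal{C}}_m^{(1,\infty)}(R)\bigr| \;\geq\; \sum_{g \in \text{RM}(m-1, r_m - 1)} 2^{\binom{m-1}{\leq r_m - 1} - c(g)} \;\geq\; 2^{2\binom{m-1}{\leq r_m - 1} - \mathbb{E}_g[c(g)]},
\]
where the expectation is over $g$ uniform in $\text{RM}(m-1, r_m - 1)$.

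I would then use the pairwise coordinate uniformity of $\text{RM}(m-1, r_m - 1)$ (valid for $r_m - 1 \geq 1$, since its generator matrix has pairwise linearly independent nonzero columns) to obtain $\mathbb{E}_g[W(g)] = n/4$ and $\mathbb{E}_g[U(g)] = (n/2 - 1)/4$, so $\mathbb{E}_g[c(g)] \leq 3n/8$. The rate of $\tilde{\mathcal{C}}_m^{(1,\infty)}(R)$ is then at least $\binom{m-1}{\leq r_m - 1}/2^{m-1} - 3/8$. Combining Pascal's identity with the Stirling estimate $\binom{m-1}{r_m}/2^{m-1} = O(1/\sqrt{m})$, one sees that $\binom{m-1}{\leq r_m - 1}/2^{m-1} \to R$ for any sequence $\{r_m\}$ with $\binom{m}{\leq r_m}/2^m \to R$, so the limiting rate is at least $R - 3/8$. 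The cases $R \leq 3/8$ and the finite regime of small $r_m$ (where $R \to 0$) are handled by the trivial subcode $\{\mathbf{0}\}$, producing the $\max(0, \cdot)$ in the statement.

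The main obstacle is precisely the feasibility step: the pointwise bound $N(g) \geq 2^{\dim(\text{RM}(m-1, r_m-1)) - c(g)}$ can fail when the evaluation constraints $h|_{P(g)} = g|_{P(g)}$ are inconsistent in $\text{RM}(m-1, r_m - 1)$, which may occur once $c(g)$ exceeds its dual distance $2^{r_m}$. The restriction $g \in \text{RM}(m-1, r_m - 1)$ resolves this by exhibiting $h = g$ as an explicit valid solution for every admissible $g$, at the cost of a vanishing rate loss $\binom{m-1}{r_m}/2^{m-1}$. A tighter approach summing directly over all of $\text{RM}(m-1, r_m)$ would need to control the infeasible tail via weight-enumerator estimates on the degree-$r_m$ component of $g$, which I expect to be significantly more delicate.
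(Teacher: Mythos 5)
Your proof is correct and follows the same overall architecture as the paper's: Plotkin-decompose $\hat{\mathcal{C}}_m(R)$, translate the $(1,\infty)$-RLL condition into pointwise constraints on $(g,h)$ (your forced set is exactly the paper's $\text{supp}(\text{Eval}(g)) \cup \Gamma(g)$), lower-bound the number of admissible $h$ for each $g$ by the size of a shortened code, and finish with Jensen's inequality together with $\mathbb{E}[\text{wt}(\text{Eval}(g))] = N/2$ and an expected count of $0\to 1$ transitions of roughly $N/4$, both obtained from dual distance at least $3$. The one genuine difference is your restriction of $g$ to $\text{RM}(m-1,r_m-1)$. The paper sums over all $g$ with $\text{Eval}(g) \in \text{RM}(m-1,r_m)$ and asserts that the number of $h$ satisfying the affine conditions ``is precisely the number of codewords in the code obtained by shortening'' $\text{RM}(m-1,r_m-1)$ at $S = \text{supp}(\text{Eval}(g)) \cup \Gamma(g)$; this presupposes that the system $h|_S = v$ is feasible in $\text{RM}(m-1,r_m-1)$ for every such $g$, which the paper does not justify --- the target pattern $v$ is the restriction of $\text{Eval}(g)$, and $g$ may have degree $r_m$, one more than $h$ is allowed, so the solution set could a priori be empty for some $g$. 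Your fix --- exhibiting $h=g$ as an explicit solution once $\deg(g) \le r_m-1$, so that the solution set is a nonempty coset of the shortened code of size at least $2^{\binom{m-1}{\le r_m-1} - c(g)}$ --- removes this issue cleanly, and the rate you forgo, $\binom{m-1}{r_m}/2^{m-1} = O(1/\sqrt{m})$, vanishes, so you recover the same limit $R - \tfrac38$ (the paper's prefactor $\tfrac12(\hat{R}_m^{(g)} + \hat{R}_m^{(h)})$ and your $\hat{R}_m^{(h)}$ both tend to $R$). In short: same method, but your version is the more careful one; it patches a feasibility step that the paper's argument leaves implicit.
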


The proof of Theorem \ref{thm:nonlin} is provided in Section \ref{sec:rm}. Again, we note from the discussion in Section \ref{sec:rmintro}, that Theorem \ref{thm:nonlin} implies that rates of at least $\max\left(0,C-\frac38\right)$ are achievable over the $(d,\infty)$-RLL input-constrained BMS channel. Further, we observe that the lower bound in the theorem above beats the achievable rate of $\frac{R}{2}$ in Theorem \ref{thm:rm}, when $R>0.75$. 

Next, we state a theorem that provides upper bounds on the largest rate of \emph{linear} $(d,\infty)$-RLL subcodes of RM codes, where the coordinates are ordered according to the lexicographic ordering. Fix any sequence of codes $\left\{{\hat{\mathcal{C}}_m} = \text{RM}(m, r_m)\right\}_{m\geq 1}$ of rate $R\in (0,1)$ and let ${\overline{\mathsf{R}}^{(d)}(\hat{\mathcal{C}})}$ be the largest rate of linear $(d,\infty)$-RLL subcodes of $\left\{{\hat{\mathcal{C}}_m}\right\}_{m\geq 1}$. Formally,
\begin{equation}
	\label{eq:Rub}
	{\overline{\mathsf{R}}^{(d)}(\hat{\mathcal{C}})}:=\limsup_{m\to \infty} \max_{{\overline{\mathcal{H}}^{(d)}}\subseteq {\hat{\mathcal{C}}_m}} \frac{\log_2\left \lvert {\overline{\mathcal{H}}^{(d)}}\right\rvert}{2^m},
\end{equation}
where the maximization is over linear $(d,\infty)$-RLL subcodes ${\overline{\mathcal{H}}^{(d)}}$ of ${\hat{\mathcal{C}}_m}$. Then,

\begin{theorem}
	\label{thm:rmlinub}
	For any $R\in (0,1)$ and for any sequence of codes $\left\{{\hat{\mathcal{C}}_m} = \text{RM}(m,r_m)\right\}_{m\geq 1}$ {of rate $R$}, following the lexicographic coordinate ordering,
	\[
	{\overline{\mathsf{R}}^{(d)}(\hat{\mathcal{C}})} \leq \frac{R}{d+1}.
	\]
\end{theorem}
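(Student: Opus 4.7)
The plan is to establish a recursive upper bound on the dimension of any linear $(d,\infty)$-RLL subcode of $\text{RM}(m,r)$ under the lexicographic ordering, and then to unroll the recursion using a simple base-case bound. Let $f(m,r)$ denote the maximum dimension of such a subcode. The first step is to show that
\[
f(m,r) \le f(m-1,r) + f(m-1,r-1).
\]
For this, I use the Plotkin decomposition with respect to $x_1$ (the \emph{most} significant variable in the lexicographic ordering, so that the two halves of the decomposition correspond precisely to the first and second halves of the lex-ordered codeword): every codeword of $\text{RM}(m,r)$ has the form $(\mathbf{u},\mathbf{u}+\mathbf{v})$ with $\mathbf{u} \in \text{RM}(m-1,r)$ and $\mathbf{v} \in \text{RM}(m-1,r-1)$, corresponding to the sub-cubes $x_1 = 0$ and $x_1 = 1$. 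Given a linear $(d,\infty)$-RLL subcode $\overline{\mathcal{C}} \subseteq \text{RM}(m,r)$, the first-half projection $\mathbf{c} \mapsto \mathbf{u}$ is linear, and since any contiguous prefix of a $(d,\infty)$-RLL word is itself $(d,\infty)$-RLL, its image is a linear $(d,\infty)$-RLL subcode of $\text{RM}(m-1,r)$, hence of dimension at most $f(m-1,r)$. The kernel consists of codewords of the form $(\mathbf{0}, \mathbf{v})$; the Plotkin form forces $\mathbf{v} \in \text{RM}(m-1, r-1)$, and since the first half is identically zero, the $(d,\infty)$-RLL property of $(\mathbf{0}, \mathbf{v})$ is equivalent to that of $\mathbf{v}$ alone. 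Hence the kernel is isomorphic to a linear $(d,\infty)$-RLL subcode of $\text{RM}(m-1, r-1)$, of dimension at most $f(m-1, r-1)$, and the recursion follows by rank-nullity.

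Next, as a base-case bound, I would establish $f(m,r) \le \lceil 2^{m}/(d+1)\rceil$ for every $m$ and $r$, via a support-set argument. Fix any pair of coordinates $i<j$ with $j-i\le d$. Linearity together with the RLL constraint forces that either $c_i = 0$ for all $\mathbf{c} \in \overline{\mathcal{C}}$, or $c_j = 0$ for all such $\mathbf{c}$: otherwise there would exist $\mathbf{c}^{(1)},\mathbf{c}^{(2)} \in \overline{\mathcal{C}}$ with $c^{(1)}_i = c^{(2)}_j = 1$; then, by the $(d,\infty)$-RLL property of each individual codeword, $c^{(1)}_j = c^{(2)}_i = 0$, and the sum $\mathbf{c}^{(1)} + \mathbf{c}^{(2)}$ is a codeword with $1$'s at both $i$ and $j$, contradicting the constraint. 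Consequently, the ``active support'' $A := \{i : c_i = 1 \text{ for some } \mathbf{c} \in \overline{\mathcal{C}}\}$ contains at most one element from every window of $d+1$ consecutive positions, so $|A| \le \lceil 2^m/(d+1)\rceil$, and $\dim(\overline{\mathcal{C}}) \le |A|$.

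To close the argument, I unroll the recursion down to a fixed level $m_0$. Iterating for $k = m-m_0$ steps in Pascal's-triangle fashion gives
\[
f(m,r_m) \le \sum_{j=0}^{k}\binom{k}{j}\, f(m_0, r_m - j) \le \left\lceil\tfrac{2^{m_0}}{d+1}\right\rceil \cdot \binom{m-m_0}{\le r_m},
\]
where one interprets $f(m_0, r')$ as $f(m_0, m_0)$ for $r' > m_0$ and as $0$ for $r' < 0$. Dividing by $2^m$ and passing to the limit in $m$, the remaining input is that $\binom{m-m_0}{\le r_m}/2^{m-m_0} \to R$ for each fixed $m_0$ whenever $\binom{m}{\le r_m}/2^m \to R$; this follows from a central-limit-theorem estimate of the same flavor as the one in Remark 24 of \cite{kud1} that justifies \eqref{eq:rmval}, since shifting from $m$ to $m-m_0$ trials perturbs the normalized threshold by only $O(m_0/\sqrt{m})$. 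Hence $\limsup_m f(m,r_m)/2^m \le (\lceil 2^{m_0}/(d+1)\rceil/2^{m_0})\cdot R$, and letting $m_0 \to \infty$ yields $\overline{\mathsf{R}}_{\hat{\mathcal{C}}}^{(d,\infty)}(R) \le R/(d+1)$. I expect the main technical obstacle to be verifying this CLT step for an \emph{arbitrary} rate-$R$ sequence $\{r_m\}$ and not only the specific one in \eqref{eq:rmval}; but this reduces to showing that any such sequence must satisfy $r_m = m/2 + \Theta(\sqrt{m})$, which is standard.
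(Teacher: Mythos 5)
Your proposal is correct, but it takes a genuinely different route from the paper. The paper's proof is information-set based: it shows (Proposition \ref{prop:linub}) that if an information set of an $[N,K]$ code contains $t$ disjoint windows of $d+1$ consecutive coordinates, any linear $(d,\infty)$-RLL subcode loses $dt$ dimensions; it then identifies the explicit information set $\mathcal{I}_{m,r_m}$ of coordinates of Hamming weight at most $r_m$ (Lemmas \ref{lem:quotient} and \ref{lem:infoset}), counts the runs of $\mathcal{I}_{m,r_m}$ under the lexicographic order to be $\binom{m-1}{r_m}$ (Lemma \ref{lem:runs}), and concludes $t_m \ge \frac{K_m}{d+1} - \binom{m-1}{r_m}$. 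Your argument instead exploits the recursive structure of RM codes directly: the $(\mathbf{u},\mathbf{u}+\mathbf{v})$ Plotkin split with respect to $x_1$ does align with the two contiguous halves of the lex-ordered codeword, your image/kernel analysis is sound, and the resulting recursion $f(m,r)\le f(m-1,r)+f(m-1,r-1)$ combined with the universal base bound $f(m_0,\cdot)\le\lceil 2^{m_0}/(d+1)\rceil$ (which is essentially the bound of \cite{lechner}, and whose two-codeword-sum argument mirrors the proof of Proposition \ref{prop:linub}) closes the argument. The CLT step you flag does go through for an arbitrary rate-$R$ sequence: $\binom{m}{\le r_m}/2^m\to R\in(0,1)$ forces $(r_m-m/2)/(\tfrac12\sqrt{m})\to Q^{-1}(1-R)$ (your ``$\Theta(\sqrt{m})$'' phrasing is slightly imprecise at $R=\tfrac12$, but your $O(m_0/\sqrt{m})$ perturbation statement is the one that matters and is correct), and the paper's own Lemma \ref{lem:rate} together with a Berry--Esseen estimate, as used in its Appendix A, supplies exactly this. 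The trade-off: your route is more self-contained and avoids the information-set machinery, but it is tied to the lexicographic ordering (the halves must be contiguous), whereas the paper's information-set method is reused essentially verbatim for Gray and near-arbitrary coordinate orderings in Theorems \ref{thm:grayinf} and \ref{thm:genperminf}; the paper's method also yields a cleaner non-asymptotic bound $K_m - dt_m$ at each finite $m$.
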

Thus, Theorem \ref{thm:rmlinub} shows that the sequence of simple linear subcodes $\left\{{\mathcal{C}_{m}^{(d)}}\right\}_{m\geq 1}$, identified in Theorem \ref{thm:rm}, is rate-optimal whenever $d+1$ is a power of $2$, in that it achieves the rate upper bound of $\frac{R}{d+1}$. Moreover, it follows from Theorem \ref{thm:rmlinub}, that the subcodes in Theorem \ref{thm:nonlin} must be \emph{non-linear} when $R>0.75$. Theorem \ref{thm:rmlinub} is proved in Section \ref{sec:rmlinub}. Also, from the discussion in Section \ref{sec:notation} and from {Theorem} \ref{thm:rmlinub}, we see that the largest rate achievable over a $(d,\infty)$-RLL input-constrained BMS channel, using linear $(d,\infty)$-RLL subcodes of RM codes, when the sub-optimal bit-MAP {or block-MAP} decoders of the larger RM codes are used, is bounded above by $\frac{C}{d+1}$.

We remark here that the problem of identifying linear codes that are subsets of the set of $(d,\infty)$-RLL sequences of a fixed length, has been studied in \cite{lechner}. The results therein show that the largest linear code within $S_{(d,\infty)}^{(m)}$ has rate no larger than $\frac{1}{d+1}$, as $m\to \infty$. However, such a result offers no insight into rates achievable over BMS channels.

Concluding the discussion on rates achievable using {subcodes} of RM codes, following the lexicographic coordinate ordering, we state a theorem that provides an upper bound on the largest rate of $(1,\infty)$-RLL subcodes of the sequence $\{{\mathcal{C}_m} = \text{RM}(m,r_m)\}_{m\geq 1}$, where $r_m$ is as in \eqref{eq:rmval}.
We formally define this largest rate to be 
\[
{\mathsf{R}^{(1)}(\mathcal{C})}:=\limsup_{m\to \infty} \max_{{\mathcal{H}^{(1)}}\subseteq {\mathcal{C}_m}}\frac{\log_2\left\lvert{\mathcal{H}^{(1)}}\right\rvert}{2^m},
\]
where the maximization is over $(1,\infty)$-RLL subcodes ${\mathcal{H}^{(1)}}$ of ${\mathcal{C}_m}$. Then,
\begin{theorem}
	\label{thm:rmub}
	For the sequence of codes $\{{{\mathcal{C}_m}} = \text{RM}(m,r_m)\}_{m\geq 1}$, with $r_m$ as in \eqref{eq:rmval},
	\[
	{\mathsf{R}^{(1)}(\mathcal{C})}\leq \min\left(\frac{7R}{8},{\kappa_1}\right),
	\]
	where ${\kappa_1} = \log_2\left(\frac{1+\sqrt{5}}{2}\right)\approx 0.6942$ is the noiseless capacity of the $(1,\infty)$-RLL constraint.
\end{theorem}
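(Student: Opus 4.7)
The plan is to establish the two inequalities $\mathsf{R}^{(1,\infty)}_{\mathcal{C}}(R) \leq C_0^{(1)}$ and $\mathsf{R}^{(1,\infty)}_{\mathcal{C}}(R) \leq 7R/8$ separately. The first is essentially immediate: any $(1,\infty)$-RLL subcode $\mathcal{C}^{(1,\infty)} \subseteq \mathcal{C}_m(R)$ is contained in $S_{(1,\infty)}^{(2^m)}$, so $|\mathcal{C}^{(1,\infty)}| \leq |S_{(1,\infty)}^{(2^m)}|$; dividing $\log_2$ by $2^m$ and invoking the definition of $C_0^{(1)}$, the bound follows upon taking $\limsup$.

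For the $7R/8$ bound, I would first invoke the Plotkin decomposition: each codeword $\text{Eval}(f) \in \mathcal{C}_m(R)$ corresponds to $f = g + x_m h$ with $g \in \mathcal{C}_m^{(g)} = \text{RM}(m-1,r_m)$ and $h \in \mathcal{C}_m^{(h)} = \text{RM}(m-1,r_m-1)$. Under the lexicographic ordering of coordinates, $\text{Eval}(f)$ is the interleaving of $\text{Eval}(g)$ and $\text{Eval}(g+h)$; writing $g_i = \text{Eval}_i(g)$ and $h_i = \text{Eval}_i(h)$, the $(1,\infty)$-RLL condition then translates into the alternative characterization: (I) $\text{supp}(\text{Eval}(g)) \subseteq \text{supp}(\text{Eval}(h))$ (within-pair constraint, arising from $g_i(g_i+h_i) = 0$ pointwise), and (II) for each $i \in [0{:}2^{m-1}-2]$, $(g+h)_i = 1$ forces $g_{i+1} = 0$ (across-pair constraint, arising from $(g_i+h_i)g_{i+1} = 0$). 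This is the algebraic reformulation alluded to in the introduction.

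Next, I would bound the number of valid $(g,h)$ pairs by fixing $g$ and counting feasible $h$. Constraint (I) forces $h_i = 1$ on $\text{supp}(g)$; constraint (II) forces $h_i = 0$ on $P_g := \{i : g_i = 0,\, g_{i+1} = 1\}$; on the remaining positions $F_g := [0{:}2^{m-1}-1] \setminus (\text{supp}(g) \cup P_g)$, $h$ is free subject only to $h \in \mathcal{C}_m^{(h)}$. Counting the size of this shortened/affine subcode and summing over $g$ yields
\[
\lvert\mathcal{C}^{(1,\infty)}\rvert \;\leq\; \sum_{g \in \mathcal{C}_m^{(g)}} \min\!\left(2^{|F_g|},\; \lvert\mathcal{C}_m^{(h)}\rvert\right),
\]
where the first term accounts for the free-bit freedom and the second enforces the RM structure of $h$.

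The final step---which I expect to be the main obstacle---is estimating this sum sharply using the weight distribution of $\mathcal{C}_m^{(g)}$ in the canonical regime of equation \eqref{eq:rmval}, leveraging concentration and tail estimates from the RM weight-distribution literature (e.g., \cite{kud1, Reeves, sberlo, abbe1, anuprao}). The idea is to partition $g$'s by weight: atypical low-weight $g$'s yield large $|F_g|$ and so are bounded by $\lvert\mathcal{C}_m^{(h)}\rvert$ each but are exponentially rare in the canonical regime (where RM weights concentrate near $2^{m-2}$), while typical $g$'s contribute only $2^{|F_g|}$ with $|F_g|$ far below $2^{m-1}$ because $|\text{supp}(g)| + |P_g|$ is large for $g$ of near-half weight. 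Balancing these two regimes against each other should give the bound $2^{(7R/8)\cdot 2^m(1+o(1))}$ and hence $\mathsf{R}^{(1,\infty)}_{\mathcal{C}}(R) \leq 7R/8$. The technical difficulty is executing this balance with enough precision: the naive "typical-$g$" estimate undercounts (at $R=1$ it would predict rate $5/8$ instead of the correct $C_0^{(1)}$), while a crude worst-case sum merely recovers the trivial rate bound $R$, so the argument must carefully interpolate between the two regimes via the canonical-regime weight-distribution results.
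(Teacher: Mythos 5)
Your handling of the $C_0^{(1)}$ part is exactly the paper's, and your alternative characterization of $(1,\infty)$-RLL codewords (your (I) and (II) are equivalent to conditions (C1) and (C2) of Proposition \ref{lem:necesscond}) as well as the overall plan --- fix $g$, count admissible $h$, sum against the weight distribution of $\mathcal{C}_m^{(g)}$ via the bound of \cite{anuprao} --- also match the paper. The gap sits precisely where you flag the ``main obstacle'': your count of admissible $h$ for a fixed $g$, namely $\min\bigl(2^{|F_g|}, |\mathcal{C}_m^{(h)}|\bigr)$ with $|F_g| = 2^{m-1} - \mathrm{wt}(g) - |P_g|$, cannot be pushed to $7R/8$. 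First, $|P_g|$ is not controlled by $\mathrm{wt}(g)$: there are codewords of $\mathcal{C}_m^{(g)}$ of weight $2^{m-2}$ with a single run of $1$s (e.g.\ $\mathrm{Eval}(x_1)$, which under the lexicographic ordering is $0^{2^{m-2}}1^{2^{m-2}}$), so a bound valid for all $g$ of a given weight must discard the $|P_g|$ term, and no joint (weight, run-count) enumerator for RM codes is available to do better. Second, even after discarding $|P_g|$, the resulting optimization over the normalized weight $\lambda = \mathrm{wt}(g)/2^{m-1}$, namely $\max_\lambda \tfrac12\bigl[R\,h_b(\lambda) + \min(1-\lambda,\,R)\bigr]$, evaluates at $\lambda = \tfrac12$ to the trivial bound $R$ whenever $R \le \tfrac12$: for near-half-weight $g$ the free-coordinate count $2^{2^{m-1}-\mathrm{wt}(g)}$ already exceeds $|\mathcal{C}_m^{(h)}|$, so the minimum buys nothing, while such $g$ constitute essentially all of $\mathcal{C}_m^{(g)}$. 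The balance you describe therefore does not close.

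The missing ingredient is the paper's Lemma \ref{lem:short} (Lemma 36 of \cite{abbe1}): any set of at least $2^{m-u}$ coordinates of an RM code supports rank greater than $\binom{m-u}{\le r}$. Applied to $\mathcal{V} = \mathrm{supp}(\mathrm{Eval}(g))$ with $\mathrm{wt}(g) \ge 2^{m-1-u}$, and combined with Lemma \ref{lem:shorten}, this bounds the number of $h$ with $\mathrm{supp}(\mathrm{Eval}(g)) \subseteq \mathrm{supp}(\mathrm{Eval}(h))$ by $\exp_2\bigl(\binom{m-1}{\le r_m} - \binom{m-1-u}{\le r_m}\bigr)$ --- a quantity depending only on $\mathrm{wt}(g)$, of order $\exp_2(R\cdot 2^{m-2})$ for half-weight $g$ versus your $\exp_2(2^{m-2})$. (Note that the paper uses only condition (C1) for the upper bound; (C2) is not needed, and your attempt to exploit it through $|P_g|$ is what introduces the uncontrollable run-count dependence.) With the rank lemma in place, the dyadic decomposition over weight classes $[2^{m-2-i},\,2^{m-1-i}]$ produces exponents proportional to $1 + h_b(2^{-i}) - 2^{-(i+1)}$, maximized at $i=1$ with value $7/4$, which is exactly where $7R/8$ comes from.
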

The proof of the theorem is taken up in Section \ref{sec:rmub}. We note that the upper bound in the theorem above is an improvement over the upper bound in \cite{arnk22isit}, due to the use of better upper bounds on the weight distribution, from \cite{anuprao}, as compared to those in \cite{sam}. Figure \ref{fig:ubplot} shows a comparison between the upper bound in Theorem \ref{thm:rmub}, and the lower bounds of Theorems \ref{thm:rm} and \ref{thm:nonlin}, for the case when $d=1$. From the discussion in Section \ref{sec:rmintro}, we see that Theorem \ref{thm:rmub} shows that the rate achievable over $(1,\infty)$-RLL input-constrained BMS channels, using constrained subcodes of $\{{\mathcal{C}_m}\}_{m\geq 1}$, when the sub-optimal bit-MAP or {block-MAP} decoders of the larger RM codes are used, is bounded above by $\min\left(\frac{7C}{8},{\kappa_1}\right)$, where $C$ is the capacity of the unconstrained BMS channel. Figure \ref{fig:ubcompare} shows comparisons, for the specific case of the $(1,\infty)$-RLL input-constrained BEC, of the upper bound of $\min\left(\frac{7}{8}\cdot (1-\epsilon),{\kappa_1}\right)$, obtained by sub-optimal decoding, in Theorem \ref{thm:rmub}, with the achievable rate of ${\kappa_1}\cdot(1-\epsilon)$ (from \cite{Hanerasure} and \cite{arnk20}), and the numerically computed achievable rates using the Monte-Carlo method in \cite{arnold} (or the stochastic approximation scheme in \cite{arnk21ncc}). For large values of the erasure probability $\epsilon$, we observe that the upper bound of $\min\left(\frac{7}{8}\cdot (1-\epsilon),{\kappa_1}\right)$ lies below the achievable rates of \cite{arnold}, thereby indicating that it is not possible to achieve the capacity of the $(1,\infty)$-RLL input-constrained BEC, using $(1,\infty)$-RLL subcodes of $\{{\mathcal{C}_m}\}_{m\geq 1}$, when the bit-MAP or block-MAP decoders of $\{{\mathcal{C}_m}\}_{m\geq 1}$ are used for decoding. We conjecture that this numerically verified fact is indeed true.

{\subsection{Rates of Subcodes Under Alternative Coordinate Orderings}}
Next, we consider situations where the coordinates of the RM codes follow orderings different from the standard lexicographic ordering. First, we study upper bounds on the rates of \emph{linear} $(d,\infty)$-RLL subcodes of RM codes, whose coordinates are ordered according to a Gray ordering (see Section \ref{sec:perm} for a description of a Gray ordering). For a fixed $R\in (0,1)$, let $\left\{{\mathcal{C}_m^\text{G}}\right\}_{m\geq 1}$ be any sequence of RM codes following a Gray coordinate ordering, such that rate$\left({\mathcal{C}_m^\text{G}}\right)\xrightarrow{m\to \infty} R$. Let ${\overline{\mathsf{R}}^{(d)}(\mathcal{C}^\text{G})}$ denote the largest rate of \emph{linear} $(d,\infty)$-RLL subcodes of $\left\{{\mathcal{C}_m^\text{G}}\right\}_{m\geq 1}$. Formally,
\begin{equation}
	\label{eq:Rubgray}
	{\overline{\mathsf{R}}^{(d)}(\mathcal{C}^\text{G})}:=\limsup_{m\to \infty} \max_{{\overline{\mathcal{H}}_{\text{G}}^{(d)}}\subseteq {\mathcal{C}_m^{\text{G}}}}\frac{\log_2\left \lvert {\overline{\mathcal{H}}_{\text{G}}^{(d)}}\right\rvert}{2^m},
\end{equation}
where the maximization is over linear $(d,\infty)$-RLL subcodes ${\overline{\mathcal{H}}_{\text{G}}^{(d)}}$ of $\mathcal{C}_m^{\text{G}}(R)$.
We obtain the following result:

\begin{theorem}
	\label{thm:grayinf}
	For any $R\in (0,1)$ and for any sequence of RM codes {$\left\{{\mathcal{C}_m^\text{G}}\right\}_{m\geq 1}$ of rate $R$, following a Gray coordinate ordering,}
	\[
	{\overline{\mathsf{R}}^{(d)}(\mathcal{C}^\text{G})}\leq \frac{R}{d+1}.
	\]
\end{theorem}
The proof of Theorem \ref{thm:grayinf} is provided in Section \ref{sec:perm}. Again, Theorem \ref{thm:grayinf} provides an upper bound on the rates achieved over a $(d,\infty)$-RLL input-constrained BMS channel, using {linear} subcodes of Gray-ordered RM codes of rate $R$, when the sub-optimal bit-MAP decoders of $\left\{{\mathcal{C}_m^\text{G}}\right\}_{m\geq 1}$ are used, for decoding.

Now, we consider arbitrary orderings of coordinates, defined by the sequence of permutations $\left(\pi_m\right)_{m\geq 1}$, with $\pi_m: [0:2^m-1]\to [0:2^m-1]$. As with the Gray coordinate ordering, we define the sequence of $\pi$-ordered RM codes $\left\{{\mathcal{C}_m^\pi}\right\}_{m\geq 1}$, with
\begin{align*}{\mathcal{C}_{m}^\pi}:= \big\{(c_{\pi_m(0)},c_{\pi_m(1)},\ldots,c_{\pi_m({2^m}-1)}): (c_0,c_1,\ldots,c_{{2^m}-1})\in {\hat{\mathcal{C}}_m}\big\},\end{align*}
where $\left\{{\hat{\mathcal{C}}_m} = \text{RM}(m,r_m)\right\}_{m\geq 1}$ is any sequence of RM codes of rate $R$. We also define ${\overline{\mathcal{H}}_{\pi}^{(d)}}$ be the largest \emph{linear} $(d,\infty)$-RLL subcode of ${\mathcal{C}_m^\pi}$. The theorem below is then shown to hold:
\begin{theorem}
	\label{thm:genperminf}
	For any $R\in (0,1)$, for large $m$ and for all but a vanishing fraction of coordinate permutations, $\pi_m: [0:2^m-1]\to [0:2^m-1]$, the following rate upper bound holds:
	\[
	\frac{\log_2\left \lvert {\overline{\mathcal{H}}_{\pi}^{(d)}}\right\rvert}{2^m}\leq \frac{R}{d+1}+\epsilon_m,
	\]
	where $\epsilon_m\xrightarrow{m\to \infty} 0$.
\end{theorem}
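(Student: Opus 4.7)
The plan is to combine a Lechner-type structural lemma with a probabilistic argument exploiting the capacity-achieving behaviour of RM codes on the binary erasure channel (BEC) established in \cite{kud1}. I first reduce the question of the maximum dimension of a linear $(d,\infty)$-RLL subcode to one about the typical dimension of intersections of $\hat{\mathcal{C}}_m(R)$ with coordinate subspaces indexed by random subsets of $[N]$, and then analyse these intersections using the BEC-capacity result.

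The structural step (in the style of \cite{lechner}) is as follows. Any linear code $L \subseteq S_{(d,\infty)}^{(N)}$ has its ``active'' coordinates---those where some codeword of $L$ is nonzero---contained in a subset $A_0 \subseteq [N]$ whose elements are pairwise separated by at least $d+1$. Indeed, for every $(d+1)$-window $\{i,\ldots,i+d\} \subseteq [N]$, the projection of $L$ onto this window is a linear subspace of the vectors of weight at most one, and so has dimension at most one; this forces the active positions to be $(d+1)$-spaced, giving $|A_0| \le \lceil N/(d+1) \rceil$ and $L \subseteq V_{A_0}$, where $V_{A_0}$ denotes the coordinate subspace of vectors supported on $A_0$. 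Applied to $L \subseteq \mathcal{C}_m^{\pi}(R)$, this yields $\dim L \le \dim(\mathcal{C}_m^{\pi}(R) \cap V_{A_0}) = \dim(\hat{\mathcal{C}}_m(R) \cap V_{\pi_m^{-1}(A_0)})$, using that $\pi_m$ induces a vector-space isomorphism of $\mathbb{F}_2^N$.

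For a uniformly random $\pi_m$, the set $A := \pi_m^{-1}(A_0)$ is a uniformly random $s$-subset of $[N]$, where $s := |A_0|$. Writing $K := \dim \hat{\mathcal{C}}_m(R) \approx RN$, the identity $\dim(\hat{\mathcal{C}}_m(R) \cap V_A) = K - \mathrm{rank}(\mathrm{proj}_{A^c}(\hat{\mathcal{C}}_m(R)))$, combined with the BEC-capacity result of \cite{kud1} applied to both $\hat{\mathcal{C}}_m(R)$ and its dual RM code of rate $1-R$, shows that this rank concentrates around $\min(K, N-s)$ up to a transition window of width $o(N)$. Consequently $\dim(\hat{\mathcal{C}}_m(R) \cap V_A) \le \max(0, s-(1-R)N) + o(N)$ with high probability over $\pi_m$, and a short calculation using $s \le N/(d+1)$ and $R \le 1$ yields $\max(0, s-(1-R)N) \le RN/(d+1)$, so $\dim L \le RN/(d+1) + o(N)$ for each fixed spaced $A_0$.

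A union bound over the at most $2^{C_0^{(d)} N + o(N)}$ $(d+1)$-spaced subsets $A_0 \subseteq [N]$ would then upgrade the ``high probability for each $A_0$'' to ``high probability simultaneously for all $A_0$'', yielding the theorem. The main obstacle is that the per-$A_0$ failure probability must decay exponentially fast in $N$ to overcome this exponential union bound; this in turn requires an exponentially quantitative version of the capacity-achieving result of \cite{kud1}---either a tight bound on the block-error probability of RM codes on the BEC under bit-MAP decoding, or an exponentially sharp-threshold statement for the monotone symmetric information-set property, invoking the double-transitivity of the automorphism group of the RM code. With such a bound, one obtains $\epsilon_m \to 0$ as $m \to \infty$.
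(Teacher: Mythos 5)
Your structural reduction is sound: the Lechner-type observation that any linear $(d,\infty)$-RLL code is supported on a $(d+1)$-spaced set $A_0$ of size at most $\lceil N/(d+1)\rceil$ is correct, and the arithmetic $\max(0,s-(1-R)N)\le RN/(d+1)$ for $s\le N/(d+1)$ checks out. The genuine gap is exactly where you flag it, and it is not a routine detail: your argument must hold \emph{simultaneously} for all $(d+1)$-spaced supports $A_0$ (since the adversarial subcode chooses $A_0$ after seeing $\pi_m$), forcing a union bound over $2^{C_0^{(d)}N+o(N)}$ sets, which in turn requires the per-set failure probability of the rank-concentration event to be $o\bigl(2^{-C_0^{(d)}N}\bigr)$. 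The capacity results you invoke do not supply this: the sharp-threshold/EXIT arguments of \cite{kud1} (and the area-theorem machinery behind them) give failure probabilities that vanish only as $o(1)$ --- with transition windows of width $O(1/\log N)$ --- and no exponentially small error bound for RM codes at rates approaching capacity on the BEC is currently known. So as written the proof does not close, and the missing ingredient is a substantial open technical question rather than something one can wave through.

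The paper avoids this obstacle entirely by changing which object carries the probabilistic burden. Rather than conditioning on the (subcode-dependent) support $A_0$, it uses Proposition \ref{prop:linub}: if an information set $\mathcal{I}$ of the \emph{full} code contains $t$ disjoint windows of $d+1$ consecutive coordinates, then every linear $(d,\infty)$-RLL subcode has dimension at most $K_m-dt$, because the subcode's restriction to each such window lies in a one-dimensional space. It then shows, via Corollary 18 of \cite{abbe1} (a consequence of RM codes achieving BEC capacity under block-MAP decoding), that for a $1-\delta_m$ fraction of permutations the first $K_m(1+\alpha_m)$ coordinates of $\mathcal{C}_m^{\pi}(R)$ contain an information set; being $K_m$ points packed into an interval of length $K_m(1+\alpha_m)$, this set has at most $K_m\alpha_m+1$ maximal runs and hence at least $K_m/(d+1)-K_m\alpha_m-1$ disjoint $(d+1)$-windows. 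A \emph{single} high-probability event thus controls all linear constrained subcodes at once, so only a $1-o(1)$ probability guarantee is needed and no union bound arises. If you want to salvage your route, you would either need an exponentially quantitative erasure-correction bound for RM codes near capacity, or you should restructure the argument so that the randomness is spent on one code-level event, as the paper does.
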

Section \ref{sec:perm} contains the proof of Theorem \ref{thm:genperminf}.
{\subsection{Rates Using Other Coding Strategies}}
\begin{figure*}[!t]
	\centering
	\includegraphics[width = 0.8\textwidth]{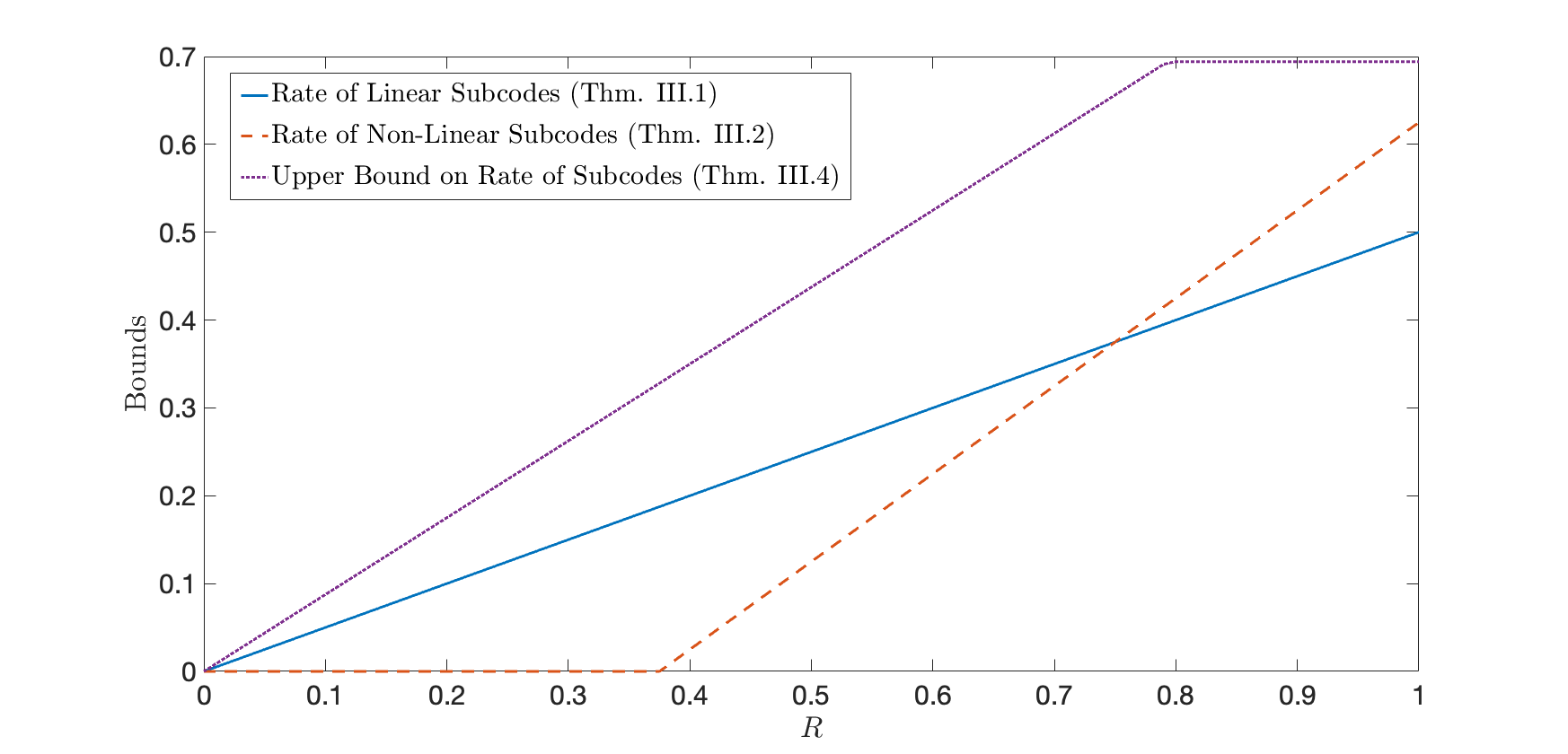}
	\caption{A comparison between the upper bound of Theorem \ref{thm:rmub} and achievable rates of $R/2$ and $\max\left(0,R-\frac38\right)$, from Theorems III.1 and III.2, respectively,  when $d=1$.}
	\label{fig:ubplot}
\end{figure*}

\begin{figure*}[!t]
	\centering
	\includegraphics[width = 0.8\textwidth]{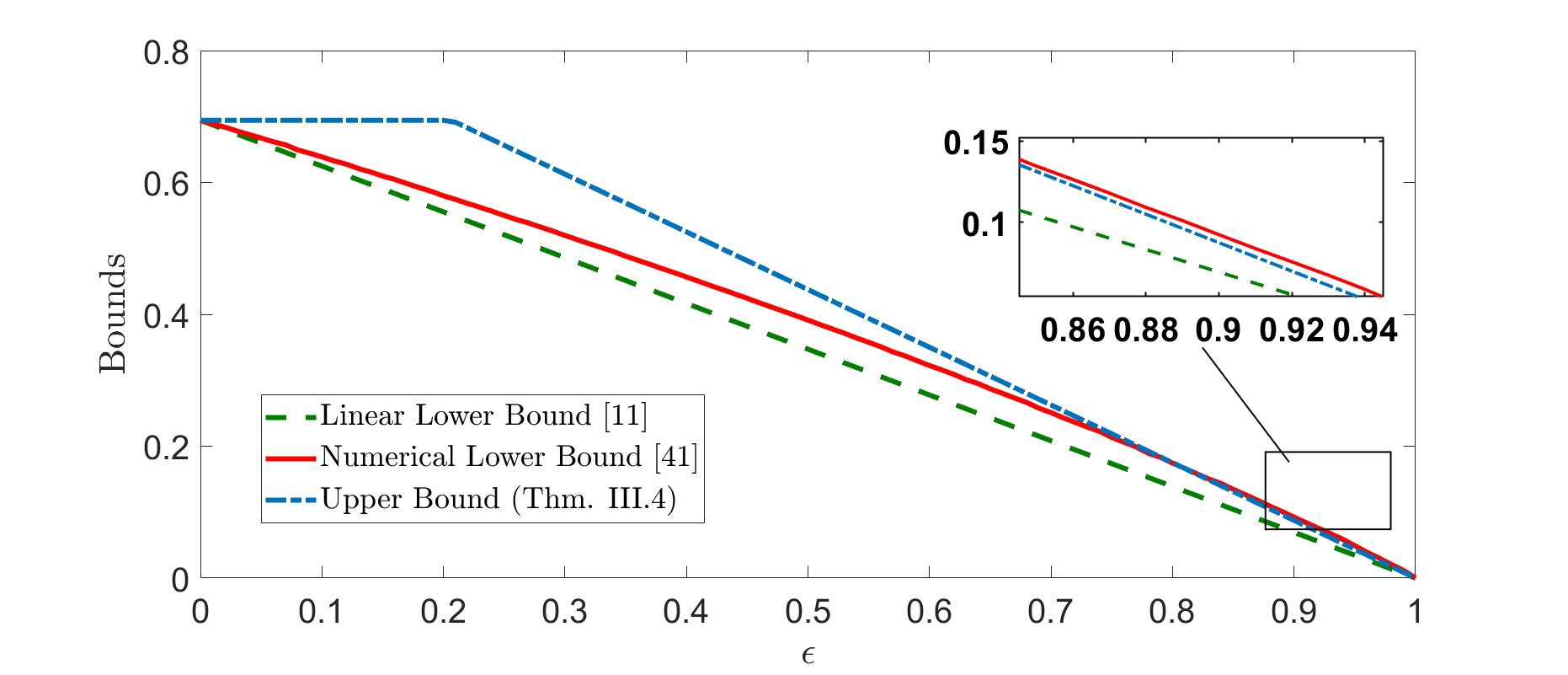}
	\caption{A comparison between the upper bound of Theorem \ref{thm:rmub} and the achievable rates of \cite{Hanerasure} and \cite{arnold}. For large $\epsilon$, the upper bound of Theorem \ref{thm:rmub}, under sub-optimal decoding, lies below the numerically-computed achievable rates in \cite{arnold}.}
	\label{fig:ubcompare}
\end{figure*}

Finally, we design $(d,\infty)$-RLL constrained codes, whose rates improve on those in Theorems \ref{thm:rm} and \ref{thm:nonlin}, by using a two-stage (or concatenated) encoding procedure that employs systematic RM codes of rate $R$. 

\begin{theorem}
	\label{thm:rmcosets}
	For any $R\in (0,1)$, there exists a sequence of $(d,\infty)$-RLL constrained concatenated codes $\left\{{\mathcal{C}}_m^{\text{conc}}\right\}_{m\geq 1}$, constructed using systematic RM codes of rate $R$, such that
	\[
	\liminf_{m\to \infty} \text{rate}(\mathcal{C}_m^{\text{conc}})\geq \frac{{\kappa_d}\cdot R^2\cdot 2^{-\left \lceil \log_2(d+1)\right \rceil}}{R^2\cdot 2^{-\left \lceil \log_2(d+1)\right \rceil} + 1-R+2^{-\tau}},
	\]
	where $\tau$ is an arbitrarily large, but fixed, positive integer. Further, the rate lower bound above is achievable, {under block-MAP decoding}, over a $(d,\infty)$-RLL input-constrained BMS channel, when $R<C$, where $C$ is the capacity of the unconstrained BMS channel.
\end{theorem}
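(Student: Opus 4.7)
The plan is to combine three ingredients into a concatenated scheme. The first is a block $(d,\infty)$-RLL encoder $\mathcal{E}_{\text{RLL}}:\{0,1\}^{k_0}\to S_{(d,\infty)}^{(k_1)}$ of rate $k_0/k_1$ arbitrarily close to $C_0^{(d)}$, which exists for $k_1$ large by the definition $C_0^{(d)}=\inf_k \log_2|S_{(d,\infty)}^{(k)}|/k$. The second is a systematic generator $[I_{k_1}\mid P]$ for an outer code $\hat{\mathcal{C}}_m(R)=\text{RM}(m,r_m)$ of blocklength $N_1=2^m$ and dimension $k_1$, with $k_1/N_1\to R$. The third is a systematic encoder $\mathcal{E}_{\text{inner}}:\{0,1\}^{N_1-k_1}\to\mathcal{C}_{m'}^{(d,\infty)}(R)$ for the linear $(d,\infty)$-RLL RM subcode supplied by Theorem~\ref{thm:rm}, whose rate approaches $R\alpha$ with $\alpha=2^{-\lceil\log_2(d+1)\rceil}$, where $m'$ is chosen so that $N_2=2^{m'}\approx (1-R)N_1/(R\alpha)$ suffices to encode all $N_1-k_1$ parity bits.

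Given info bits $\mathbf{u}\in\{0,1\}^{k_0}$, the encoder sets $\mathbf{m}=\mathcal{E}_{\text{RLL}}(\mathbf{u})$, computes $\mathbf{p}=\mathbf{m}P$, applies $\mathbf{q}=\mathcal{E}_{\text{inner}}(\mathbf{p})\in\mathcal{C}_{m'}^{(d,\infty)}(R)$, and transmits
\[
\mathbf{c}\;=\;\mathbf{m}\,\mathbf{0}^d\,\mathbf{q}.
\]
This lies in $S_{(d,\infty)}$ because $\mathbf{m}$ and $\mathbf{q}$ are both $(d,\infty)$-RLL and are separated by $d$ zeros. Dividing $k_0$ information bits by $k_1+d+N_2$ channel uses, multiplying numerator and denominator by $R\alpha/N_1$, and lumping the three diminishing quantities $C_0^{(d)}-k_0/k_1$, $R\alpha-\text{rate}(\mathcal{C}_{m'}^{(d,\infty)}(R))$, and $d/N_1$ into a single slack that can be made at most $2^{-\tau}$ for $m$ large enough, one arrives at
\[
\text{rate}(\mathcal{C}_m^{\text{conc}})\;\ge\;\frac{C_0^{(d)}\,R^2\alpha}{R^2\alpha+1-R+2^{-\tau}},
\]
which yields the claimed bound in the $\liminf$.

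For achievability over a $(d,\infty)$-RLL input-constrained BMS channel of capacity $C>R$, decoding proceeds in two passes. First, the bit-MAP decoder of the parent RM code of $\mathcal{C}_{m'}^{(d,\infty)}(R)$ (of rate at most $R<C$) is run on the received samples corresponding to $\mathbf{q}$; by Theorem~\ref{thm:Reeves} together with the subcode-inheritance fact recalled in Section~\ref{sec:lincodesBMS}, every bit of $\hat{\mathbf{q}}$, and therefore every bit of $\hat{\mathbf{p}}$ (by the systematic structure of $\mathcal{E}_{\text{inner}}$), is in error with probability $\eta_m\to 0$. Second, the bit-MAP decoder of the outer code $\hat{\mathcal{C}}_m(R)$ is applied to the pair $(\mathbf{y}_1,\hat{\mathbf{p}})$, whose effective per-coordinate channel is the original BMS of capacity $C$ on the $k_1$ message positions and BSC$(\eta_m)$ on the $N_1-k_1$ parity positions; an extension of Theorem~\ref{thm:Reeves} to such non-identical BMS coordinate channels, justified by coordinate-wise stochastic degradation once $\eta_m$ is small enough, gives outer bit-error $\to 0$. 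Composing with the sliding-block-decodable inverse of $\mathcal{E}_{\text{RLL}}$ (which has $O(1)$ error propagation) then yields $\hat{\mathbf{u}}$ with bit-error $\to 0$.

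The main obstacle will be the last decoding step: formally justifying that bit-MAP decoding of the outer RM code preserves its vanishing bit-error guarantee over the mixed BMS--BSC$(\eta_m)$ product channel. The cleanest route is a coordinate-wise degradation argument reducing to the all-BMS setting handled by Theorem~\ref{thm:Reeves}, which requires showing that BSC$(\eta_m)$ is stochastically upgraded over the BMS once $\eta_m$ is sufficiently small. This holds directly for BMS with bounded log-likelihood ratios, and can be arranged for general BMS by truncating large LLRs or by adding a small simulated BMS noise to $\hat{\mathbf{p}}$ before feeding it to the outer decoder. The remainder of the proof is bookkeeping: tuning $m$, $m'$, and the three rate slacks so that their aggregate contribution in the denominator of the rate expression stays at most $2^{-\tau}$.
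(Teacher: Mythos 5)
Your architecture matches the paper's (a systematic outer RM code with the RLL-encoded message placed in the information set, parity bits re-encoded by the linear $(d,\infty)$-RLL subcodes of Theorem \ref{thm:rm}, and two-pass decoding), but there are two genuine gaps. First, the blocklength of your single inner code $\mathcal{C}_{m'}^{(d,\infty)}(R)$ must be a power of $2$, so $N_2$ is not ``$\approx (1-R)N_1/(R\alpha)$'' but rather the smallest power of $2$ whose subcode dimension covers all $N_1-k_1$ parity bits; this introduces a multiplicative overhead $\lambda\in[1,2)$ on the $(1-R)$ term in the denominator of your rate expression. Crucially, $\lambda$ is \emph{not} a diminishing quantity: since $N_1=2^m$, the fractional part of $\log_2\bigl((N_1-k_1)/(R\alpha)\bigr)$ is essentially independent of $m$, so $\lambda$ converges to a constant determined by $R$ and $d$ that can be arbitrarily close to $2$. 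It therefore cannot be lumped into the $2^{-\tau}$ slack together with the three genuinely vanishing quantities you list. The paper's device for exactly this problem is to split the $N_m-K_m$ parity bits into $L$ equal parts, with $L$ and $\tau$ chosen via \eqref{eq:L}, so that each part fits an inner code of blocklength exactly $2^{m-\tau+\left\lceil \log_2(d+1)\right\rceil}$ (see \eqref{eq:part}) and the aggregate rounding overhead is the additive $2^{-\tau}$ appearing in the theorem statement. Your proposal needs this splitting (or an equivalent shortening argument for the inner code) to reach the stated bound.

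Second, your decoding analysis models the inner-decoded parity estimates as outputs of independent BSC$(\eta_m)$ channels and then appeals to an extension of Theorem \ref{thm:Reeves} to non-identically distributed coordinate channels. Neither piece is available: the bit errors produced by a single joint bit-MAP decoding of the inner code are correlated across coordinates, so the effective channel on the parity positions is not memoryless, and no mixed-channel version of Theorem \ref{thm:Reeves} is established (your degradation/LLR-truncation sketch does not address the correlation issue at all). The paper sidesteps both problems by conditioning on the event that all $L$ inner codewords are recovered \emph{exactly} (a union bound gives failure probability at most $L\cdot\eta_m$, with $L$ a constant); on that event the outer decoder is fed the true parity bits, which is an upgraded observation on those coordinates relative to seeing them through the BMS channel, so the standard all-BMS guarantee of Theorem \ref{thm:Reeves} bounds the outer bit-error by some $\delta_m\to 0$. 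You should restructure your error analysis along these lines rather than attempting a per-bit mixed-channel argument.
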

It can be checked that the rates achieved using Theorem \ref{thm:rmcosets} are better than those achieved using Theorem \ref{thm:rm} (and in fact, better than those achieved using any sequence of linear $(d,\infty)$-RLL subcodes of RM codes) and Theorem \ref{thm:nonlin}, for low noise regimes of the BMS channel.  For example, when $d=1$, the rates achieved using the codes in Theorem \ref{thm:rmcosets} are better than those achieved using linear subcodes, for $R\gtrapprox 0.7613$, and are better than those achieved using the subcodes of Theorem \ref{thm:nonlin}, for $R\lessapprox 0.55$ and $R\gtrapprox 0.79$. Figures \ref{fig:first} and \ref{fig:second} show comparisons between the lower bounds (achievable rates), {under block-MAP decoding,} in Theorems \ref{thm:rm}, \ref{thm:nonlin}, and \ref{thm:rmcosets}, with the coset-averaging bound of \cite{pvk}, for $d=1$ and $d=2$, respectively. While \cite{pvk} provides existence results on rates achieved using cosets of RM codes, with the rates calculated therein being better than those in Theorem \ref{thm:rmcosets} in the low noise regimes of the BMS channel, our construction is more explicit. The code construction leading to Theorem \ref{thm:rmcosets}, and the proof of achievability of the rate lower bound, is taken up in Section \ref{sec:cosets}. We mention here that Theorem \ref{thm:rmlinub} and Theorems \ref{thm:grayinf}--\ref{thm:rmcosets} first appeared in \cite{arnk22itw}.

\begin{figure*}%
	\centering	
	\includegraphics[width=0.8\textwidth]{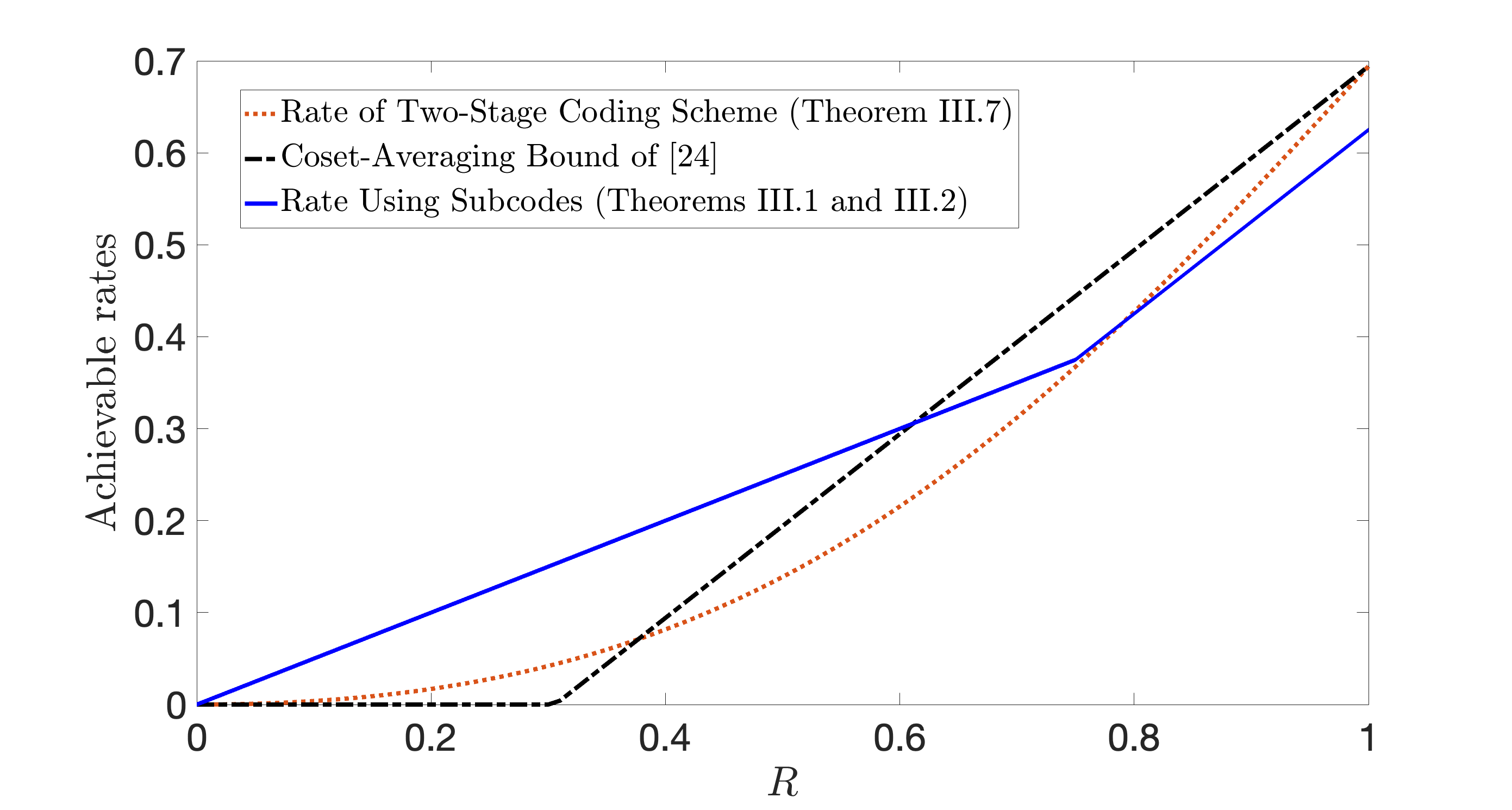}%
	\caption{Plot comparing, for $d=1$, the rate lower bounds of $\max\left(\frac{R}{2},R-\frac38\right)$ achieved using subcodes, from Theorems \ref{thm:rm} and \ref{thm:nonlin}, with the rate lower bound achieved using Theorem \ref{thm:rmcosets}, with $\tau = 50$, and the coset-averaging lower bound of $\max(0,{\kappa_1}+R-1)$, of \cite{pvk}. Here, the noiseless capacity, ${\kappa_1} \approx 0.6942$.}
	\label{fig:first}%
\end{figure*}

We end this section with a remark. Note that the all-ones codeword $\mathbf{1}$ belongs to the RM code. Since any codeword $\mathbf{c}$ that respects the $(0,1)$-RLL constraint can be written as $\mathbf{c} = \mathbf{1}+\mathbf{\hat{c}}$, where $\mathbf{\hat{c}}$ respects the $(1,\infty)$-RLL constraint, the lower and upper bounds of the theorems above hold for the rate of $(0,1)$-RLL subcodes as well. Moreover, since for any $k>1$, a $(0,1)$-RLL subcode of an RM code is a subset of a $(0,k)$-RLL subcode, the lower bounds of Theorems \ref{thm:rm}, \ref{thm:nonlin}, and \ref{thm:rmcosets}, hold over $(0,k)$-RLL input-constrained BMS channels as well.
\section{Achievable Rates Using Subcodes}
\label{sec:rm}

As mentioned in Section \ref{sec:main}, we work with the Reed-Muller (RM) family of codes, $\{{\mathcal{C}_m} = \text{RM}(m,r_m)\}_{m\geq 1}$, under the lexicographic coordinate ordering, with
\begin{equation*}
	r_m = \max \left\{\left \lfloor \frac{m}{2}+\frac{\sqrt{m}}{2}Q^{-1}(1-R)\right \rfloor,0\right\},
\end{equation*}
and rate $R\in (0,1)$. We then select subcodes of these codes that respect the $(d,\infty)$-RLL constraint, and compute their rate. 
We first consider the case when $d=1$. For this situation, we provide a complete characterization of $(1,\infty)$-RLL constrained subcodes of RM codes, which will help us identify $(1,\infty)$-RLL constrained subcodes of $\{{\mathcal{C}_m}\}_{m\geq 1}$. This characterization will also come of use in our derivation of upper bounds on the rates of general (potentially non-linear) $(1,\infty)$-RLL subcodes of RM codes, in Section \ref{sec:rmub}. 

\begin{figure*}%
	\centering

	\includegraphics[width=0.8\textwidth]{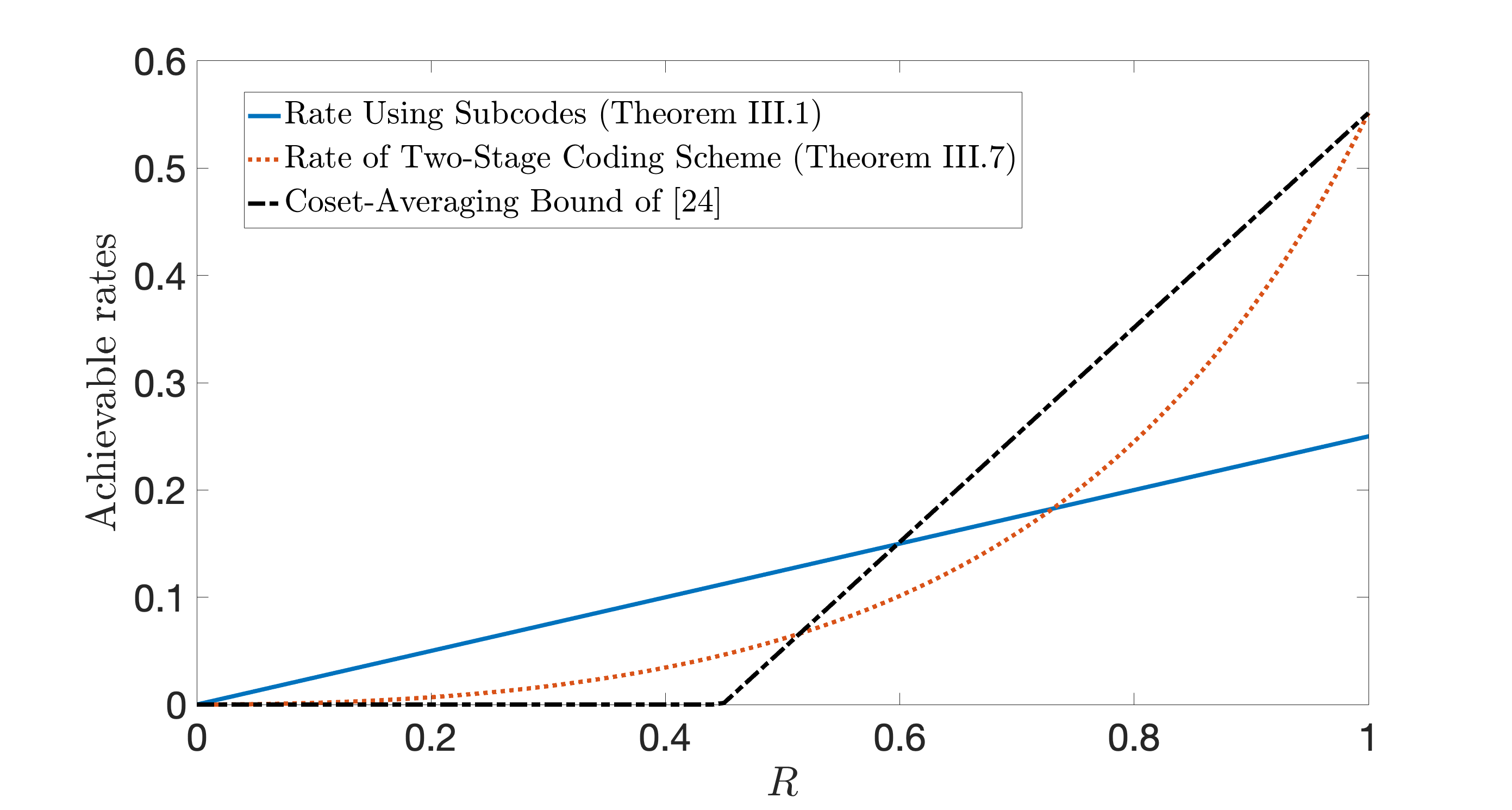}%
	\caption{Plot comparing, for $d=2$, the rate lower bound of $R/4$ achieved using subcodes, from Theorem \ref{thm:rm}, the rate lower bound achieved using Theorem \ref{thm:rmcosets}, with $\tau = 50$, and the coset-averaging lower bound of $\max(0,{\kappa_2}+R-1)$, of \cite{pvk}. Here, the noiseless capacity, ${\kappa_2} \approx 0.5515$.}
	\label{fig:second}%
\end{figure*}

We now set up some definitions and notation for our characterization:  we define a ``run'' of coordinates belonging to a set $\mathcal{A}\in \{0,1\}^m$, to be a contiguous collection of coordinates, $\left(i,i+1,\ldots,i+\ell-1\right)$, {for some integers $\ell \ge 1$ and $i \in [0:2^m-\ell]$}, such that $\mathbf{B}(j)\in \mathcal{A}$, for all $i\leq j\leq i+\ell-1$, and $\mathbf{B}(i-1),\mathbf{B}(i+\ell)\notin \mathcal{A}$. {To cover the corner cases of $i=0$ and $i=2^m-\ell$,} we set $\mathbf{B}(i-1)$ and $\mathbf{B}(i+\ell)$, respectively, to be the dummy symbol $\times$, which does not belong to $\mathcal{A}$. The length of such a run is $\ell$; {note that we allow $\ell$ to be $1$}. For example, a run of $1$s in a vector $\mathbf{v}\in \{0,1\}^m$ is a collection of contiguous coordinates, $(i,\ldots,i+\ell-1)$, such that $v(i+k) = 1$, for $0\leq k\leq \ell-1$. Finally, given the code RM$(m,r)$, for $g$ as in equation \eqref{eq:rmdecomp}, we let {$\Gamma(g)$ denote the set of all coordinates $\mathbf{b}$, excluding the coordinate $(1,1,\ldots,1)$, such that $\mathbf{b}$ is the last coordinate in a run of $0$s in Eval$(g)$, i.e.,}
\[
\Gamma(g):= \{\boldsymbol{b} = (b_1,\ldots,b_{m-1}):\ \boldsymbol{b} \text{ is the end of a run of $0$s in Eval$(g)$,} \text{ $\boldsymbol{b}\neq (1,1,\ldots,1)$}\}.
\]
We now present our characterization of $(1,\infty)$-RLL subcodes of the code RM$(m,r)$:
\begin{proposition}
	\label{lem:necesscond}
	For any Eval$(f)\in \text{RM}(m,r)$, we have that Eval$(f)\in S_{(1,\infty)}$ if and only if the following two conditions (C1) and (C2) are simultaneously satisfied:
	\begin{align*}
		&\text{(C1):  supp$(\text{Eval}(g)) \subseteq \text{supp}(\text{Eval}(h))$}\\
		&\text{(C2): $h(\boldsymbol{b}) = 0$, if $\boldsymbol{b}\in \Gamma(g)$},
	\end{align*}
	where $g,h$ are as in \eqref{eq:rmdecomp}.
\end{proposition}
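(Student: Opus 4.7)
The plan is to apply the Plotkin decomposition \eqref{eq:rmdecomp} and read off the $(1,\infty)$-RLL condition directly from the interleaved structure of $\text{Eval}(f)$ that it induces under the lexicographic coordinate ordering. Since the variable $x_m$ varies fastest in the lex ordering, position $2k$ in $\text{Eval}(f)$ corresponds to the evaluation point $(\mathbf{b},0)$ and position $2k+1$ corresponds to $(\mathbf{b},1)$, where $\mathbf{b}\in\{0,1\}^{m-1}$ is the length-$(m-1)$ lex binary representation of $k$. By \eqref{eq:rmdecomp}, the values at these two positions are $g(\mathbf{b})$ and $g(\mathbf{b})+h(\mathbf{b})$, respectively, so $\text{Eval}(f)$ is simply the coordinatewise interleaving of $\text{Eval}(g)$ and $\text{Eval}(g+h)$.

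I would then split the $(1,\infty)$-RLL requirement (``no two consecutive bits both equal to $1$'') into two families of consecutive pairs. The \emph{intra-block} pairs are $(2k,2k+1)$, with values $g(\mathbf{b})$ and $g(\mathbf{b})+h(\mathbf{b})$; the condition that they are not both $1$ is equivalent to $g(\mathbf{b})=1 \Rightarrow h(\mathbf{b})=1$ for every $\mathbf{b}$, which is exactly (C1). The \emph{inter-block} pairs are $(2k+1,2k+2)$ for $k=0,\ldots,2^{m-1}-2$, with values $g(\mathbf{b})+h(\mathbf{b})$ and $g(\mathbf{b}^\prime)$, where $\mathbf{b}^\prime$ is the length-$(m-1)$ lex representation of $k+1$. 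Assuming (C1), the first entry equals $1$ precisely when $g(\mathbf{b})=0$ and $h(\mathbf{b})=1$, so the inter-block constraint collapses to: whenever $g(\mathbf{b})=0$ and $g(\mathbf{b}^\prime)=1$, we must have $h(\mathbf{b})=0$.

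The final step is to identify ``$g(\mathbf{b})=0$ and $g(\mathbf{b}^\prime)=1$'' with membership in $\Gamma(g)$: it says exactly that $\mathbf{b}$ is the endpoint of a run of $0$s in $\text{Eval}(g)$ that is immediately followed by a $1$. The exclusion $\mathbf{b}\neq(1,\ldots,1)$ in the definition of $\Gamma(g)$ matches cleanly the fact that the inter-block pairs are indexed by $k \leq 2^{m-1}-2$, so the final position of $\text{Eval}(g)$ contributes no constraint. This recovers (C2), and combining the two families of pairs gives the if-and-only-if claimed in the proposition.

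The expected obstacle is essentially bookkeeping: carefully matching the indexing of positions in $\text{Eval}(f)$ with the lex ordering of $(m-1)$-bit strings indexing $g$ and $h$, and verifying that the boundary case $\mathbf{b}=(1,\ldots,1)$ aligns correctly with the cutoff in the inter-block range. Once the interleaving picture is in place, the remainder is a short case analysis with no further substantive ingredient.
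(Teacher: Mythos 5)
Your proposal is correct and follows essentially the same route as the paper: both use the Plotkin decomposition to view $\text{Eval}(f)$ as the interleaving of $\text{Eval}(g)$ and $\text{Eval}(g+h)$, extract (C1) from the intra-block pairs $(\mathbf{b}0,\mathbf{b}1)$ and (C2) from the inter-block pairs $(\mathbf{b}1,\mathbf{b}'0)$, with the same treatment of the boundary case $\mathbf{b}=(1,\ldots,1)$. The only cosmetic difference is that you derive (C2) in the forward direction conditionally on (C1) already holding, whereas the paper obtains it by a direct contradiction; both are valid.
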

\begin{proof}
	First, we shall prove that if Eval$(f)\in S_{(1,\infty)}$, then (C1) and (C2) hold.
	
	To show that (C1) must hold, let us assume the contrary. Suppose that there exists some evaluation point $\mathbf{b} = (b_1,\ldots,b_{m-1}) \in \mathbb{F}_2^{m-1}$ such that $g(\mathbf{b})=1$ and $h(\mathbf{b})=0$. Then it follows that at evaluation points $\mathbf{b}_1,\mathbf{b}_2 \in \mathbb{F}_2^m$ such that $\mathbf{b}_1 = \mathbf{b}0$ and $\mathbf{b}_2 = \mathbf{b}1$, we have $f(\mathbf{b}_1) = f(\mathbf{b}_2) = 1$. Since by construction the points $\mathbf{b}_1$ and $\mathbf{b}_2$ are consecutive in the lexicographic ordering, the codeword Eval$(f)\notin S_{(1,\infty)}$. 
	
	Similarly, to show that (C2) must hold, we assume the contrary. Suppose that there exists some evaluation point $\mathbf{b} = {\mathbf{B}_{m-1}}(i) \in \mathbb{F}_2^{m-1}$, for some $i \in [0:2^{m-1}-2]$, such that $\mathbf{b} \in \Gamma(g)$ and $h(\mathbf{b})=1$. We let $\mathbf{b}^\prime = (b_1^\prime,\ldots,b_{m-1}^\prime)$ be such that $\mathbf{b}^\prime = {\mathbf{B}_{m-1}}(i+1)$. We note that in the code RM$(m,r)$, the coordinates $\mathbf{b}_1= \mathbf{b}1$ and $\mathbf{b}_2=\mathbf{b}^\prime0$ occur successively. Again, by simply evaluating \eqref{eq:rmdecomp} at the coordinates $\mathbf{b}_1$ and $\mathbf{b}_2$, we obtain that $f(\mathbf{b}_1) = f(\mathbf{b}_2) = 1$, implying that the codeword Eval$(f)\notin S_{(1,\infty)}$. 
	
	Next, we shall prove the converse, i.e., that if (C1) and (C2) hold, then Eval$(f) \in S_{(1,\infty)}$. Pick any pair of consecutive coordinates $\mathbf{z}_1 = {\mathbf{B}_m}(i)$ and $\mathbf{z}_2 = {\mathbf{B}_m}(i+1)$, in the lexicographic ordering, for some $i\in [0:2^m-2]$. Note that it suffices to prove that if (C1) and (C2) hold, then it cannot be that $f(\mathbf{z}_1) = f(\mathbf{z}_2) = 1$. Now, consider the following two cases, for $\mathbf{b}, \mathbf{b}^\prime \in \{0,1\}^{m-1}$:
	\begin{enumerate}
		\item $\mathbf{z}_1 = \mathbf{b}0$ \textbf{and} $\mathbf{z}_2 = \mathbf{b}1$: In this case, if $f(\mathbf{z}_1) = f(\mathbf{b}0) = 1$, then, from the Plotkin decomposition in \eqref{eq:rmdecomp}, we see that $g(\mathbf{b}) = 1$. From (C1), it hence holds that $h(\mathbf{b}) = 1$. Thus, $f(\mathbf{z}_2) = f(\mathbf{b}1) = g(\mathbf{b}) + h(\mathbf{b}) = 0$.
		\item $\mathbf{z}_1 = \mathbf{b}1$ \textbf{and} $\mathbf{z}_2 = \mathbf{b}^\prime0$: Suppose that $f(\mathbf{z}_1) = f(\mathbf{b}1) = 1$. This then implies that $\mathbf{b}\notin \Gamma(g)$, since otherwise, $f(\mathbf{z}_1) = g(\mathbf{b})+h(\mathbf{b}) = 0$, by (C2). Further, it cannot be that $g(\mathbf{b}) = 1$, as then, using (C1), we see that $f(\mathbf{z}_1) = g(\mathbf{b})+h(\mathbf{b}) = 0$. Hence, it must be that $g(\mathbf{b}) = 0$ and $g(\mathbf{b}^\prime) = 0$, in which case, it immediately follows that $f(\mathbf{z}_2) = g(\mathbf{b}^\prime) = 0$.
	\end{enumerate}%
\end{proof}

{\subsection{Construction of Linear $(d,\infty)$-RLL Constrained Subcodes}}
Given the characterization in Proposition \ref{lem:necesscond}, our first construction of a \emph{linear} $(1,\infty)$-RLL subcode of ${\mathcal{C}_m}$ is simply to pick those codewords Eval$(f) \in {\mathcal{C}_m}$ such that $g\equiv 0$, where $g$ is as in equation \eqref{eq:rmdecomp}. It is straightforward to verify that both (C1) and (C2) in Proposition \ref{lem:necesscond} are trivially satisfied.

In other words, we define the $(1,\infty)$-RLL constrained subcode ${\mathcal{C}_m^{(1)}}$ of ${\mathcal{C}_m}$ to be
\begin{equation}
	\label{eq:C_1inf}
	{\mathcal{C}_m^{(1)}}:=\Big\{\text{Eval}(f): f = x_m\cdot h(x_{1},\ldots, x_{m-1}), \text{ where } \text{deg}(h)\leq r_m-1\Big\}.
\end{equation}

Towards computing the rates of subcodes we work with in this paper, we state and prove the following lemma:

\begin{lemma}
	\label{lem:rate}
	For $r_m$ as defined in \eqref{eq:rmval} and any sequence of positive integers $(t_m)_{m\geq 1}$ such that $t_m = o(\!\!\sqrt{m})$, we have
	$$
	\lim_{m \to \infty} \frac{1}{2^{m-t_m}} \binom{m-t_m}{\le r_m} \ = \ R.
	$$
	In particular, for any fixed integer $t > 0$, $\lim\limits_{m \to \infty} \frac{1}{2^m} \binom{m-t}{\le r_m} = 2^{-t}R$.
\end{lemma}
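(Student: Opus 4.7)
The plan is to reinterpret the sum $\binom{m-t_m}{\le r_m}/2^{m-t_m}$ probabilistically and apply the central limit theorem (CLT). Specifically, setting $n'_m := m - t_m$, observe that
\[
\frac{1}{2^{n'_m}} \binom{n'_m}{\le r_m} \;=\; \Pr\bigl(S_{n'_m} \le r_m\bigr),
\]
where $S_{n'_m}$ is a Binomial$(n'_m, 1/2)$ random variable, with mean $n'_m/2$ and variance $n'_m/4$. Standardizing, this probability equals
\[
\Pr\!\left(\frac{S_{n'_m} - n'_m/2}{\sqrt{n'_m}/2} \le \alpha_m\right), \quad \text{where } \alpha_m := \frac{r_m - n'_m/2}{\sqrt{n'_m}/2}.
\]

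First, I would show that $\alpha_m \to Q^{-1}(1-R)$. From \eqref{eq:rmval} and the fact that the floor changes the value by at most $1$, we can write $r_m = \frac{m}{2} + \frac{\sqrt{m}}{2} Q^{-1}(1-R) + O(1)$ (for $R \in (0,1)$ fixed, the maximum in \eqref{eq:rmval} is eventually attained by the floor term for all large $m$). Substituting $n'_m = m - t_m$,
\[
r_m - \frac{n'_m}{2} \;=\; \frac{t_m}{2} + \frac{\sqrt{m}}{2} Q^{-1}(1-R) + O(1),
\]
and dividing by $\sqrt{n'_m}/2 = \sqrt{m-t_m}/2$, the term $t_m/\sqrt{m-t_m}$ vanishes because $t_m = o(\sqrt{m})$, the $O(1)/\sqrt{m-t_m}$ term vanishes, and $\sqrt{m}/\sqrt{m-t_m} \to 1$. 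Hence $\alpha_m \to Q^{-1}(1-R)$.

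Second, since $n'_m \to \infty$ (as $t_m = o(\sqrt{m}) = o(m)$), the Berry--Esseen form of the CLT applied to the Binomial$(n'_m, 1/2)$ distribution guarantees that $\Pr(\tfrac{S_{n'_m} - n'_m/2}{\sqrt{n'_m}/2} \le x)$ converges to the standard normal CDF $\Phi(x)$ uniformly in $x \in \mathbb R$. Combining this uniform convergence with $\alpha_m \to Q^{-1}(1-R)$,
\[
\lim_{m\to\infty} \frac{1}{2^{n'_m}} \binom{n'_m}{\le r_m} \;=\; \Phi\!\bigl(Q^{-1}(1-R)\bigr) \;=\; 1 - Q\!\bigl(Q^{-1}(1-R)\bigr) \;=\; R,
\]
proving the first claim. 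For the second claim, with $t$ fixed (so $t_m \equiv t$ trivially satisfies $t_m = o(\sqrt{m})$), we simply write $\frac{1}{2^m}\binom{m-t}{\le r_m} = 2^{-t} \cdot \frac{1}{2^{m-t}}\binom{m-t}{\le r_m}$ and invoke the first part to get the limit $2^{-t}R$.

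The only subtlety worth flagging is handling the floor in the definition of $r_m$ cleanly, and ensuring that the $o(\sqrt{m})$ assumption on $t_m$ (rather than, say, $O(\sqrt{m})$) is precisely what makes the $t_m/\sqrt{n'_m}$ contribution to $\alpha_m$ disappear in the limit; this is the only place the rate of growth of $t_m$ enters.
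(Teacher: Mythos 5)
Your proposal is correct and follows essentially the same route as the paper: both rewrite $\frac{1}{2^{m-t_m}}\binom{m-t_m}{\le r_m}$ as a Binomial$(m-t_m,\tfrac12)$ tail probability and apply the central limit theorem, using $t_m = o(\!\sqrt{m})$ to show the standardized threshold tends to $Q^{-1}(1-R)$. The only cosmetic difference is that you compute the standardized threshold directly and invoke Berry--Esseen for uniformity, whereas the paper sandwiches $r_m$ between $r_{m-t_m}\pm\nu_m$ and cites de Moivre--Laplace; both handle the moving threshold correctly.
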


\begin{proof}
	Let $S_m$ denote a Bin$(m,\frac12)$ random variable, and note that $\frac{1}{2^{m-t_m}} \binom{m-t_m}{\le r_m}$ equals $\Pr[S_{m-t_m} \le r_m]$. Further, note that by our choice of $r_m$, for any integer $t > 0$, we have for all $m$ large enough,
	\begin{align}
		|r_m - r_{m-t}| &\leq \biggl| \frac{m}{2} + \frac{\sqrt{m}}{2}Q^{-1}(1-R) \, - \  \left(\frac{m-t}{2} + \frac{\sqrt{m-t}}{2}Q^{-1}(1-R)\right) + 1 \biggr| \notag\\
		&\leq \frac{t}{2}+\frac{\sqrt{t}}{2}\lvert Q^{-1}(1-R)\rvert + 1. \label{eq:rm_diff}
	\end{align}
	Hence, we have $r_{m-t_m} - \nu_m \le r_m \le r_{m-t_m} + \nu_m$, with $\nu_m := \frac{t_m}{2}+\frac{\sqrt{t_m}}{2}\lvert Q^{-1}(1-R)\rvert + 1$.
	Consequently, $\Pr[S_{m-t_m} \le r_{m-t_m} - \nu_m] \le \Pr[S_{m-t_m} \le r_m] \le \Pr[S_{m-t_m} \le r_{m-t_m} + \nu_m]$. Setting $\overline{S}_{m-t_m} := \frac{S_{m-t_m} - \frac12(m-t_m)}{\frac12\sqrt{m-t_m}}$, we have 
	\begin{align}
		& \Pr[\overline{S}_{m-t_m} \le Q^{-1}(1-R) - \frac{\nu_m}{\frac12\sqrt{m-t_m}}] 
 \ \le \ \Pr[S_{m-t_m} \le r_m] 
\ \le \ \Pr[\overline{S}_{m-t_m} \le Q^{-1}(1-R) + \frac{\nu_m}{\frac12\sqrt{m-t_m}}].
		\label{sandwich}
	\end{align}
	
	Now, by the central limit theorem (or, in this special case, by the de Moivre-Laplace theorem), $\overline{S}_{m-t_m}$ converges in distribution to a standard normal random variable, $Z$. Therefore, via \eqref{sandwich} and the fact that $t_m$ and $\nu_m$ are both $o(\!\!\sqrt{m})$, we obtain that 
	$$
	\lim_{m \to \infty} \Pr[S_{m-t_m} \le r_m]  \ = \ \Pr[Z \le Q^{-1}(1-R)] \ = \ R,
	$$
	which proves the lemma.
\end{proof}

From Lemma \ref{lem:rate}, we calculate the rate of the $(1,\infty)$-RLL constrained subcode ${\mathcal{C}_m^{(1)}}$ in \eqref{eq:C_1inf} as:
\begin{align}
\text{rate}\left({\mathcal{C}_m^{(1)}}\right) \ & =\ \frac{\log_2\left(\left\lvert{\mathcal{C}_m^{(1)}}\right \rvert\right)}{2^m} \notag\\ 
	&= \ \frac{{m-1 \choose \leq r_m-1}}{2^m} 
	\ = \ \frac{\binom{m-1}{\le r_m-1}}{\binom{m-1}{\le r_m}} \, \frac{{m-1 \choose \leq r_m}}{2^m}
	\ \xrightarrow{m \to \infty} \ \frac{R}{2}. \label{eq:rate_1inf}
\end{align}

We now extend our simple construction of linear $(1,\infty)$-RLL subcodes in \eqref{eq:C_1inf} and the rate computation in \eqref{eq:rate_1inf} to general $d$, thereby proving Theorem \ref{thm:rm}.
But before we do so, we state a simple observation, presented below as a lemma. We recall the definition of the support of a vector $\mathbf{c}\in \mathbb{F}_2^{n}$:
$
\text{supp}(\mathbf{c}) = \{i: c_i = 1\}.
$

\begin{lemma}
	\label{lem:simplerm}
	Given $d\geq 1$, if $\hat{\mathbf{c}}$ is such that $\hat{\mathbf{c}}\in S_{(d,\infty)}$, and supp$(\mathbf{c}) \subseteq \text{supp}(\hat{\mathbf{c}})$, then $\mathbf{c}\in S_{(d,\infty)}$.
\end{lemma}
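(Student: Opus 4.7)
The plan is to argue directly from the definition of the $(d,\infty)$-RLL constraint: it requires that between any two $1$s in the word, the number of intervening $0$s be at least $d$, or equivalently, that the index-gap between any two $1$s be at least $d+1$. Since passing from $\hat{\mathbf{c}}$ to $\mathbf{c}$ only turns $1$s into $0$s (as $\text{supp}(\mathbf{c}) \subseteq \text{supp}(\hat{\mathbf{c}})$), it can only increase such gaps, so the constraint should be preserved.

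Concretely, I would proceed as follows. If $\mathbf{c}$ has fewer than two $1$s, then $\mathbf{c} \in S_{(d,\infty)}$ trivially. Otherwise, pick any two positions $i < j$ such that $c_i = c_j = 1$ and $c_\ell = 0$ for all $i < \ell < j$, i.e., two successive $1$s in $\mathbf{c}$. By the hypothesis $\text{supp}(\mathbf{c}) \subseteq \text{supp}(\hat{\mathbf{c}})$, we also have $\hat{c}_i = \hat{c}_j = 1$. Let $i = i_0 < i_1 < \cdots < i_k = j$ enumerate the positions in $\text{supp}(\hat{\mathbf{c}}) \cap [i,j]$, with $k \geq 1$. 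The $(d,\infty)$-RLL property of $\hat{\mathbf{c}}$ gives $i_{\ell+1} - i_\ell \geq d+1$ for each $\ell$, and so
\[
j - i \;=\; \sum_{\ell=0}^{k-1}(i_{\ell+1}-i_\ell) \;\geq\; k(d+1) \;\geq\; d+1.
\]
Hence the number of $0$s in $\mathbf{c}$ strictly between positions $i$ and $j$ equals $j-i-1 \geq d$. Since this holds for every pair of successive $1$s of $\mathbf{c}$, we conclude $\mathbf{c} \in S_{(d,\infty)}$.

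There is really no ``hard part'' here; the lemma is essentially a definitional observation, and the only minor issue is to handle the enumeration of intermediate $1$s of $\hat{\mathbf{c}}$ cleanly, which the telescoping sum above takes care of. The proof can likely be compressed to a single short paragraph in the final write-up.
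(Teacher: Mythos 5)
Your proof is correct: the telescoping-sum argument over the intermediate $1$s of $\hat{\mathbf{c}}$ cleanly establishes that deleting $1$s can only widen the gaps between successive $1$s, which is exactly the content of the lemma. The paper itself states this as a ``simple observation'' and omits any proof, so your write-up is a valid (and slightly more careful) version of the intended definitional argument; as you note, it can safely be compressed to a sentence or two.
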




We are now in a position to prove Theorem~\ref{thm:rm}.

\begin{proof}[Proof of Theorem~\ref{thm:rm}]
	For a fixed $d\geq 1$, let $z := \left \lceil \log_2(d+1)\right \rceil$. Consider the subcode ${\mathcal{C}_m^{(d)}}$ of the code ${\mathcal{C}_m}$, defined as:
	\begin{align*}
	{\mathcal{C}_m^{(d)}}:=\Bigg\{\text{Eval}(f): f = \bigg(\prod_{i=m-z+1}^{m} x_i \bigg)\cdot h(x_{1},\ldots, x_{m-z}), &\text{ where } \text{deg}(h)\leq r_m-z\Bigg\}.
	\end{align*}
It is easy to verify that the polynomial $q(x_{m-z+1},\ldots,x_m):=\prod_{i=m-z+1}^{m} x_i$ is such that its corresponding evaluation vector, Eval$(q)$, obeys Eval$(q)\in S_{(d,\infty)}^{(2^m)}$. This is because ${q(\mathbf{y})} = 1$ if and only if $(y_{m-z+1},\ldots,y_m) = (1,\ldots,1)$, and in the lexicographic ordering, such evaluation points $\mathbf{y}$ are spaced {apart} by $2^z - 1$ coordinates, where $2^z-1\geq d$. Now, for any polynomial $f$ such that $\text{Eval}(f)\in {\mathcal{C}_m^{(d)}}$, it is true that supp$(\text{Eval}(f))\subseteq \text{supp}(\text{Eval}(q))$. Hence, Eval$(f) \in S_{(d,\infty)}^{(2^m)}$ via Lemma \ref{lem:simplerm}. 

Finally, the rate of the subcode ${\mathcal{C}_m^{(d)}}$ can be calculated as follows:
\begin{align*}
	\text{rate}\left({\mathcal{C}_m^{(d)}}\right) \ & =\ \frac{\log_2\left(\left\lvert{\mathcal{C}_m^{(d)}}\right\rvert\right)}{2^m} \\ 
	&= \ \frac{{m-z \choose \leq r_m-z}}{2^m} 
		\ = \ \frac{\binom{m-z}{\le r_m-z}}{\binom{m-z}{\le r_m}} \, \frac{{m-z \choose \leq r_m}}{2^m}
		\ \xrightarrow{m \to \infty} \ 2^{-z}R.
\end{align*}
 To obtain the limit as $m \to \infty$, we have used Lemma~\ref{lem:rate} and the fact that the ratio $\frac{\binom{m-z}{\le r_m-z}}{\binom{m-z}{\le r_m}}$ converges to $1$ as $m \to \infty$. Towards proving this fact, note that 
 \begin{align*}\binom{m-z}{\le r_m-z} &= \binom{m-z}{\le r_m} - \sum_{i=r_m-z+1}^{r_m}\binom{m-z}{i}\\ &\geq \binom{m-z}{\le r_m} - z\cdot \binom{m-z}{\left\lfloor \frac{m-z}{2}\right\rfloor},\end{align*} and hence, we have that
 \[
1-\frac{z\cdot \binom{m-z}{\left\lfloor \frac{m-z}{2}\right\rfloor}}{\binom{m-z}{\leq r_m}}\ \leq\ \frac{\binom{m-z}{\le r_m-z}}{\binom{m-z}{\le r_m}}\ \leq 1.
 \]
Now, consider the expression $\frac{z\cdot \binom{m-z}{\left\lfloor \frac{m-z}{2}\right\rfloor}}{\binom{m-z}{\leq r_m}}$. We have that $\lim_{m \to \infty}\frac{z\cdot \binom{m-z}{\left\lfloor \frac{m-z}{2}\right\rfloor}}{2^m} = 0$ (in fact, $\frac{z\cdot \binom{m-z}{\left\lfloor \frac{m-z}{2}\right\rfloor}}{2^m} = O\left( \frac{1}{\sqrt{m-z}}\right)$) and that $\lim_{m \to \infty}\frac{\binom{m-z}{\leq r_m}}{2^m} = 2^{-z}R$ (from Lemma \ref{lem:rate}). Hence, it follows that $1-\frac{z\cdot \binom{m-z}{\left\lfloor \frac{m-z}{2}\right\rfloor}}{\binom{m-z}{\leq r_m}}$ converges to $1$, as $m\to \infty$, implying that the ratio $\frac{\binom{m-z}{\le r_m-z}}{\binom{m-z}{\le r_m}}$ also converges to $1$ as $m \to \infty$.
\end{proof}

{\subsection{Existence of Larger (Potentially) Non-Linear $(1,\infty)$-RLL Constrained Subcodes}}
We now proceed to proving Theorem \ref{thm:nonlin}, which establishes the existence of $(1,\infty)$-RLL subcodes of rates better than those in Theorem \ref{thm:rm}. Before we do so, we state and prove a useful lemma on the expected number of runs of $1$s in a codeword of a linear code with dual distance at least $3$. Let $\mathcal{C}^{\perp}$ denote the dual code of a given length-$n$ linear code $\mathcal{C}$, and for a binary vector $\mathbf{v} \in \{0,1\}^n$, let $\tau_0(\mathbf{v})$ and $\tau_1(\mathbf{v})$ be the number of runs of $0$s and $1$s, respectively, in $\mathbf{v}$, with $\tau(\mathbf{v}):= \tau_0(\mathbf{v})+\tau_1(\mathbf{v})$. {Further, given a set $A$, we define the indicator function $\mathds{1}\{x\in A\}$ to be $1$ when $x\in A$, and $0$, otherwise.}
\begin{lemma}
	\label{lem:runcount}
	Let $\mathcal{C}$ be an $[N,K]$ linear code with $\text{d}_{\text{min}}(\mathcal{C}^{\perp})\geq 3$. Then, by drawing codewords $\mathbf{c} \in \mathcal{C}$ uniformly at random, we have that
	\[
	\mathbb{E}_{\mathbf{c}\sim \text{Unif}(\mathcal{C})}[\tau(\mathbf{c})] = \frac{N+1}{2}.
	\]
	Further,
	\[
	\mathbb{E}_{\mathbf{c}\sim \text{Unif}(\mathcal{C})}[\tau_0(\mathbf{c})] = \mathbb{E}_{\mathbf{c}\sim \text{Unif}(\mathcal{C})}[\tau_1(\mathbf{c})] =  \frac{N+1}{4}.
	\]
\end{lemma}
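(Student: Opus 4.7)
The plan is to express both $\tau(\mathbf{c})$ and $\tau_1(\mathbf{c})$ as sums of indicators involving at most two coordinates of $\mathbf{c}$, and then to invoke linearity of expectation, exploiting the fact that the dual-distance hypothesis forces every pair of coordinates of a uniformly random codeword to be jointly uniform on $\{0,1\}^2$.

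First, I would write the total run count as
\[
\tau(\mathbf{v}) \;=\; 1 \,+\, \sum_{i=1}^{N-1}\mathds{1}\{v_i \neq v_{i+1}\},
\]
counting one for the leading run and one extra for each adjacent transition. Adopting the boundary convention $v_0 := 0$, I would likewise write
\[
\tau_1(\mathbf{v}) \;=\; \sum_{i=1}^{N}\mathds{1}\{v_{i-1}=0,\; v_i = 1\},
\]
so that $\tau_1(\mathbf{v})$ counts precisely the starts of the runs of $1$s; the identity $\tau_0 = \tau - \tau_1$ then reduces the computation of $\mathbb{E}[\tau_0(\mathbf{c})]$ to the two preceding ones.

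The only structural input is the standard equivalence (see, e.g., \cite{mws}) between the dual distance of $\mathcal{C}$ and its orthogonal-array strength: $\text{d}_{\min}(\mathcal{C}^{\perp})\geq t+1$ if and only if, for every $S\subseteq [N]$ with $|S|\leq t$, the projection $\mathbf{c}_S$ of a uniformly random codeword $\mathbf{c}\sim \text{Unif}(\mathcal{C})$ onto the coordinates in $S$ is uniform on $\{0,1\}^{|S|}$. A direct one-line justification is that this projection is a linear image of the uniform vector $\mathbf{c}$, hence uniform on $\{0,1\}^{|S|}$ iff no nonzero vector supported on $S$ lies in $\mathcal{C}^{\perp}$, which is exactly the dual-distance condition. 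Taking $t=2$, every single coordinate $c_i$ is uniform on $\{0,1\}$, and every pair $(c_{i-1},c_i)$ with $i\geq 2$ is uniform on $\{0,1\}^2$.

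With pairwise uniformity in hand, both expectations reduce to a single line. For each $i\in [1:N-1]$, $\Pr[c_i \neq c_{i+1}] = \tfrac12$, yielding $\mathbb{E}[\tau(\mathbf{c})] = 1 + (N-1)/2 = (N+1)/2$. For $\tau_1$, $\Pr[c_1=1]=\tfrac12$ and $\Pr[c_{i-1}=0,\,c_i=1]=\tfrac14$ for every $i\geq 2$, giving $\mathbb{E}[\tau_1(\mathbf{c})] = \tfrac12 + (N-1)/4 = (N+1)/4$; the matching value for $\mathbb{E}[\tau_0(\mathbf{c})]$ is then immediate from $\tau_0 = \tau - \tau_1$. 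There is no real obstacle in this proof: the only point deserving a sentence of care is articulating why $\text{d}_{\min}(\mathcal{C}^{\perp})\geq 3$ delivers exactly the pairwise uniformity the calculation needs, and once this is said the rest is purely mechanical.
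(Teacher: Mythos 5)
Your proof is correct, and the first half coincides with the paper's: both write $\tau(\mathbf{c}) = 1 + \sum_i \mathds{1}\{c_i \neq c_{i+1}\}$ and use the fact that $\text{d}_{\min}(\mathcal{C}^{\perp})\geq 3$ forces every pair of coordinates of a uniform codeword to be jointly uniform on $\{0,1\}^2$, giving $1 + (N-1)/2$. (Your one-line justification of the dual-distance/pairwise-uniformity equivalence via linear images is sound; the paper simply asserts this standard fact.) The second half is where you diverge. The paper does not compute $\mathbb{E}[\tau_1(\mathbf{c})]$ directly; instead it observes that $\tau_1(\mathbf{c}) - \tau_0(\mathbf{c})$ depends only on the boundary pair $(c_0, c_{N-1})$ — it equals $+1$ if both endpoints are $1$, $-1$ if both are $0$, and $0$ otherwise — so that $\mathbb{E}[\tau_1] - \mathbb{E}[\tau_0] = \tfrac14 - \tfrac14 = 0$, and then splits $\mathbb{E}[\tau]$ evenly. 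You instead count run-of-$1$s starts via $\tau_1(\mathbf{v}) = \sum_{i=1}^{N}\mathds{1}\{v_{i-1}=0,\ v_i=1\}$ with the convention $v_0=0$, getting $\tfrac12 + (N-1)/4 = (N+1)/4$ directly, and recover $\mathbb{E}[\tau_0]$ from $\tau_0 = \tau - \tau_1$. Both arguments use exactly the same probabilistic input (strength-$2$ uniformity) and are equally short; the paper's symmetry argument makes the equality $\mathbb{E}[\tau_0]=\mathbb{E}[\tau_1]$ transparent without a boundary convention, while yours gives $\mathbb{E}[\tau_1]$ by an independent computation that also serves as a consistency check against the first part. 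There is no gap in either.
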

\begin{proof}
	To prove the first part, we note that for any $\mathbf{c}\in \mathcal{C}$, whose coordinates are indexed by $0,1,2,\ldots,N-1$,
	\begin{equation}
		\label{eq:tau}
	\tau(\mathbf{c}) = 1+\#\{0\leq i\leq N-2: (c_i,c_{i+1}) =  (0,1) \text{ or } (c_i,c_{i+1}) =(1,0)\}.
	\end{equation}
	Further, since $\mathcal{C}^{\perp}$ has distance at least $3$, it implies that in any two coordinates $i\neq j$ of $\mathcal{C}$, all binary $2$-tuples occur, and each with frequency $\frac{1}{4}$ {(see e.g.\ \cite[Chapter~5, Theorem~8]{mws} for a proof)}. In particular, from \eqref{eq:tau}, we have that
	\begin{align*}
		\mathbb{E}_{\mathbf{c}\sim \text{Unif}(\mathcal{C})}[\tau(\mathbf{c})] &=1+\sum_{i=0}^{N-2}\mathbb{E}
			[\mathds{1}\{(c_i,c_{i+1}) =  (0,1)\}+\mathds{1}\{(c_i,c_{i+1}) =  (1,0)\}]\\
			&=1+ \sum_{i=0}^{N-2} \bigg(\text{Pr}[(c_i,c_{i+1}) = (0,1)]+ \text{Pr}[(c_i,c_{i+1}) = (1,0)]\bigg)\\
			&= 1+\sum_{i=0}^{N-2}\frac{1}{2}
			= \frac{N+1}{2}.
	\end{align*}
To prove the second part, we note that since $\mathbb{E}[\tau(\mathbf{c})] = \mathbb{E}[\tau_0(\mathbf{c})]+\mathbb{E}[\tau_1(\mathbf{c})]$, it suffices to show that $\mathbb{E}[\tau_0(\mathbf{c})] = \mathbb{E}[\tau_1(\mathbf{c})]$, when $\mathbf{c}$ is drawn uniformly at random from $\mathcal{C}$. To this end, observe that given any codeword $\mathbf{c}\in \mathcal{C}$, we have that
\[
\tau_1(\mathbf{c}) - \tau_0(\mathbf{c}) = 
\begin{cases}
	1,\ \text{if $c_0=c_{N-1} = 1$},\\
	-1,\ \text{if $c_0 = c_{N-1} = 0$},\\
	0,\ \text{otherwise}.
\end{cases}
\]
Hence, when $\mathbf{c}$ is drawn uniformly at random from $\mathcal{C}$,
\begin{align*}
	\mathbb{E}[\tau_1(\mathbf{c})] - \mathbb{E}[\tau_0(\mathbf{c})] &= \mathbb{E}[\mathds{1}\{(c_0,c_{N-1}) = (1,1)\}] - \mathbb{E}[\mathds{1}\{(c_0,c_{N-1}) = (0,0)\}]\\
	&= \frac14 - \frac14 = 0,
\end{align*}
thereby proving that $\mathbb{E}[\tau_0(\mathbf{c})] = \mathbb{E}[\tau_1(\mathbf{c})]$.
\end{proof}
The following standard coding-theoretic fact will also prove useful (see \cite{mws} or \cite{verapless} for discussions on shortening codes):
\begin{lemma}
	\label{lem:shorten}
	Consider an $[N,K]$ linear code $\mathcal{C}$, and let $T\subseteq [0:N-1]$ be a collection of its coordinates. If $\mathcal{C}\big\rvert_T$ denotes the restriction of $\mathcal{C}$ to the coordinates in $T$, and $\mathcal{C}_T$ denotes the code obtained by shortening $\mathcal{C}$ at the coordinates in $T$, then
	\[
	\text{dim}(\mathcal{C}_T) = K - \text{dim}(\mathcal{C}\big\rvert_T).
	\]
	In particular, we have that
	\[
	\text{dim}(\mathcal{C}_T) \geq K - |T|.
	\]
\end{lemma}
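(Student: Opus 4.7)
The plan is to prove the equality by applying the rank-nullity theorem to a natural projection map, and then read off the inequality from the trivial dimension bound on the codomain. Specifically, I would define the $\mathbb{F}_2$-linear map $\pi_T: \mathcal{C} \to \mathbb{F}_2^{|T|}$ that sends a codeword $\mathbf{c}$ to the vector of its entries indexed by $T$. By the very definition of restriction, the image of $\pi_T$ is precisely $\mathcal{C}\big\rvert_T$.

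Next, I would identify the kernel of $\pi_T$ with the shortened code $\mathcal{C}_T$. The kernel consists of those codewords of $\mathcal{C}$ whose entries on $T$ are all zero; the shortened code $\mathcal{C}_T$ is obtained from these codewords by deleting the (all-zero) coordinates in $T$. Since deleting coordinates that are identically zero across a subspace is an $\mathbb{F}_2$-linear isomorphism onto its image, we have $\dim(\ker \pi_T) = \dim(\mathcal{C}_T)$.

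Applying the rank-nullity theorem to $\pi_T$ then yields
\[
K = \dim(\mathcal{C}) = \dim(\ker \pi_T) + \dim(\text{Im } \pi_T) = \dim(\mathcal{C}_T) + \dim(\mathcal{C}\big\rvert_T),
\]
which rearranges to the claimed equality $\dim(\mathcal{C}_T) = K - \dim(\mathcal{C}\big\rvert_T)$. For the inequality, since $\mathcal{C}\big\rvert_T$ is a subspace of $\mathbb{F}_2^{|T|}$, its dimension is at most $|T|$, so $\dim(\mathcal{C}_T) \geq K - |T|$.

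There is essentially no obstacle here; the only small care needed is in justifying that shortening (restricting to codewords zero on $T$ and then deleting those coordinates) corresponds to the kernel of $\pi_T$ under an isomorphism of vector spaces, which is immediate since the deleted coordinates are identically zero on the subspace in question.
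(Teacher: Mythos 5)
Your proof is correct and complete: the rank--nullity argument applied to the coordinate-projection map $\pi_T$, with the kernel identified (via deletion of identically-zero coordinates) with the shortened code $\mathcal{C}_T$ and the image with the restriction $\mathcal{C}\big\rvert_T$, is the standard textbook derivation of this fact. The paper itself gives no proof of this lemma, citing it as a known coding-theoretic result, so there is nothing to compare against; your argument is exactly the one such references supply.
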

Now, we move on to the proof of Theorem \ref{thm:nonlin}.

\begin{proof}[Proof of Theorem \ref{thm:nonlin}]
	Fix any sequence of codes $\left\{{\hat{\mathcal{C}}_m} = \text{RM}(m,r_m)\right\}_{m\geq 1}$ of rate $R\in (0,1)$. Let ${{\mathcal{H}}_m^{(1)}}$ be the largest $(1,\infty)$-RLL subcode of ${\hat{\mathcal{C}}_m}$. Let ${\hat{\mathcal{C}}_{m,+}}:=\text{RM}(m-1,r_m)$ and ${\hat{\mathcal{C}}_{m,-}}:=\text{RM}(m-1,r_m-1)$. The following set of equalities then holds:
	\begin{align}
		\left\lvert{{\mathcal{H}}^{(1)}_m}\right \rvert &= \sum_{f:\ \text{deg}(f)\leq r_m}\mathds{1}\{\text{Eval}(f)\in S_{(1,\infty)}\} \notag\\
		&\stackrel{(a)}{=} \sum_{g: \ \text{deg}(g)\leq r_m} \sum_{h: \ \text{deg}(h)\leq r_m-1} \mathds{1}\{(g,h)\text{ satisfy (C1) and (C2)}\} \notag\\
		&=  \left\lvert{\hat{\mathcal{C}}_{m,+}}\right\rvert \cdot \mathbb{E}_{\text{Eval}(g)\sim \text{Unif}({\hat{\mathcal{C}}_{m,+}})} \left[ \sum_{h: \ \text{deg}(h)\leq r_m-1} \mathds{1}\{(g,h)\text{ satisfy (C1) and (C2)}\}\right] \notag\\
		&=\left\lvert{\hat{\mathcal{C}}_{m,+}}\right\rvert \cdot \mathbb{E}_{\text{Eval}(g)\sim \text{Unif}({\hat{\mathcal{C}}_{m,+}})} \left[\#\{h:(g,h)\text{ satisfy (C1) and (C2)} \}\right], \label{eq:temp1}
	\end{align}
where (a) holds from Proposition \ref{lem:necesscond}. The following fact will be of use to us: if ${\hat{R}_{m,+}}:=\text{rate}\left({\hat{\mathcal{C}}_{m,+}}\right)$ and ${\hat{R}_{m,-}}:=\text{rate}\left({\hat{\mathcal{C}}_{m,-}}\right)$, then
\begin{align*}
	 {\hat{R}_{m,+}}+{\hat{R}_{m,-}}&= \frac{{m-1\choose \leq r_m}+{m-1\choose \leq r_m-1}}{2^{m-1}}  
	\xrightarrow{m\to \infty} 2R. 
\end{align*}
Now, from the definitions of (C1) and (C2) in Proposition \ref{lem:necesscond}, we have that for a fixed Eval$(g) \in {\hat{\mathcal{C}}_{m,+}}$,
\begin{align*}
\#\{h:(g,h)\text{ satisfy (C1) and (C2)} \} &=\#\{h: h(\mathbf{b}) = 1\ \forall\ \mathbf{b}\in \text{supp}(\text{Eval}(g)),\text{ and }h(\mathbf{b}^\prime) = 0\ \forall\ \mathbf{b}^\prime\in \Gamma(g)\}
\end{align*}
The right-hand side of the above equality is precisely the number of codewords in the code obtained by shortening ${\hat{\mathcal{C}}_{m,-}}$ at the coordinates in $S:= \text{supp}(\text{Eval}(g)) \cup \Gamma(g)$. Note that $$|S| = \left\lvert \text{supp}(\text{Eval}(g))\right\rvert + \left\lvert \Gamma(g) \right \rvert \leq \text{wt}(\text{Eval}(g))+\tau_0(\text{Eval}(g)).$$
Hence, from the second part of Lemma \ref{lem:shorten}, we have that
\begin{align}
	\label{eq:shorten}
\#\{h:(g,h)\text{ satisfy (C1) and (C2)} \} \geq \text{exp}_2\left({2^{m-1}\cdot {\hat{R}_{m,-}}-\left(\text{wt}(\text{Eval}(g))+\tau_0(\text{Eval}(g))\right)}\right)
\end{align}
Plugging \eqref{eq:shorten} back in equation \eqref{eq:temp1}, we get that for $m$ large enough,
\begin{align}
	\left\lvert{\mathcal{H}_m^{(1)}}\right \rvert &\geq \left\lvert{\hat{\mathcal{C}}_{m,+}}\right\rvert \cdot \mathbb{E}_{\text{Eval}(g)\sim \text{Unif}({\hat{\mathcal{C}}_{m,+}})}  \left[\text{exp}_2\left({2^{m-1}\cdot\left({\hat{R}_{m,-}}-\frac{\text{wt}(\text{Eval}(g))}{2^{m-1}}-\frac{\tau_0(\text{Eval}(g))}{2^{m-1}}\right)}\right)\right] \label{eq:nonlinimprov}\\
	&\stackrel{(b)}{\geq} \left\lvert{\hat{\mathcal{C}}_{m,+}}\right\rvert \cdot \text{exp}_2\left({2^{m-1}\cdot\left({\hat{R}_{m,-}} - \frac{\mathbb{E}[\text{wt}(\text{Eval}(g))]}{2^{m-1}}-\frac{\mathbb{E}[\tau_0(\text{Eval}(g))]}{2^{m-1}}\right)}\right) \notag\\
	&\stackrel{(c)}{=} \left\lvert{\hat{\mathcal{C}}_{m,+}}\right\rvert \cdot \text{exp}_2\left({2^{m-1}\cdot\left({\hat{R}_{m,-}} - \frac12-\frac14 -\delta_m\right)}\right) \notag\\
	&\stackrel{(d)}{=} \text{exp}_2\left(2^{m}\cdot \left( \frac{{\hat{R}_{m,+}}+{\hat{R}_{m,-}}}{2}-\frac38 -\frac{\delta_m}{2}\right)\right), \notag
\end{align}
where $\delta_m:=\frac{1}{4\cdot2^{m-1}} \xrightarrow{m\to \infty} 0$. Here, (b) holds by an application of Jensen's inequality and the linearity of expectation. {To see why (c) holds, we note that the RM code $\hat{\mathcal{C}}_{m,+}$ has no coordinate that is identically $0$. Thus, in a randomly chosen codeword $\text{Eval}(g)\sim \text{Unif}(\hat{\mathcal{C}}_{m,+})$, every coordinate is equally likely to be $0$ or $1$, and hence, 
$$\mathbb{E}[\text{wt}(\text{Eval}(g))] = \sum_{\mathbf{z}\in \{0,1\}^{m-1}}\Pr[g(\mathbf{z}) = 1] = 2^{m-2}.$$
Moreover, by Lemma \ref{lem:runcount}, we have that
$$\mathbb{E}[\tau_0(\text{Eval}(g))] = \frac{2^{m-1}+1}{4}.$$}
Finally, (d) holds from the fact that $\left\lvert{\hat{\mathcal{C}}_{m,+}}\right\rvert = \text{exp}_2\left(2^{m-1}\cdot {\hat{R}_{m,+}}\right)$.

Hence, we get that the largest rate of $(1,\infty)$-RLL constrained subcodes of $\left\{{\hat{\mathcal{C}}_m}\right\}_{m\geq 1}$ obeys
\begin{align*}
	{\mathsf{R}^{(1)}(\hat{\mathcal{C}})} &= \limsup_{m\to \infty} \max_{{\mathcal{H}_m^{(1)}}\subseteq {\hat{\mathcal{C}}_m}}\frac{\log_2\left\lvert{\mathcal{H}_m^{(1)}}\right\rvert}{2^m}\\
	&\geq \limsup_{m\to \infty}\frac{2^{m}\cdot \left(\frac{{\hat{R}_{m,+}}+{\hat{R}_{m,-}}}{2}-\frac38-\frac{\delta_m}{2}\right)}{2^m}\\
	&= R-\frac38.
\end{align*}
Thus, there exists a sequence of $(1,\infty)$-RLL constrained subcodes of any sequence of RM codes $\{{\hat{\mathcal{C}}_m}\}_{m\geq 1}$ of rate $R$, such that the subcodes are of rate of at least $\max\left(0,R-\frac38\right)$.
\end{proof}

Although Theorem \ref{thm:nonlin} proves the existence of non-linear $(1,\infty)$-RLL subcodes of rate larger than (for high rates $R$) that of the linear subcodes of Theorem \ref{thm:rm}, it is of interest to check if further improvements on the rates of $(1,\infty)$-RLL constrained subcodes are possible, by performing numerical computations. For a fixed $R\in (0,1)$, we work with the sequence of RM codes $\left\{{\hat{\mathcal{C}}_m} = \text{RM}(m,v_m)\right\}_{m\geq 1}$, where 
\[
v_m:= \min \left\{u: \sum_{i=0}^{u} \binom{m}{i} \geq \left\lfloor2^m\cdot R\right\rfloor\right\}.
\]
It can be checked that rate$\left({\hat{\mathcal{C}}_m}\right)\xrightarrow{m\to \infty}R$. 
We then have from inequality \eqref{eq:nonlinimprov} that
\begin{equation}
	\label{eq:nonlinlogexp}
{\mathsf{R}^{(1)}(\hat{\mathcal{C}})} \geq \max \left\{0,\limsup_{m\to \infty} \left( \frac{{\hat{R}_{m,+}}+{\hat{R}_{m,-}}}{2}+\frac{\log_2\left(\mathbb{E}\left[\text{exp}_2\left(-\text{wt}(\text{Eval}(g))-\tau_0(\text{Eval}(g))\right)\right]\right)}{2^m}\right)\right\},
\end{equation}
where the expectation is taken over codewords $\text{Eval}(g)\sim \text{Unif}({\hat{\mathcal{C}}_{m,+}})$. Inequality \eqref{eq:nonlinlogexp} suggests that one can estimate a lower bound on ${\mathsf{R}^{(1)}(\hat{\mathcal{C}})}$, by picking a large $m$ and replacing the expectation by a sample average over codewords $\text{Eval}(g)$ chosen uniformly at random from ${\hat{\mathcal{C}}_{m,+}}$. We can then obtain a new (numerical) lower bound, which does not make use of a further lower bounding argument via Jensen's inequality. We performed this Monte-Carlo sampling and estimation procedure, with $m=11$, and for varying values of $R$, by averaging over $10^4$ uniformly random samples of codewords, $\text{Eval}(g)$. Figure \ref{fig:lbnumerical} shows a plot comparing the lower bound in \eqref{eq:nonlinlogexp}, with the lower bound that is approximately $\hat{R}_m -\frac38$, from Theorem \ref{thm:nonlin}, where $\hat{R}_m := $ rate$\left({\hat{\mathcal{C}}_m}\right)$. We observe that there is a noticeable improvement in the numerical rate lower bound, as compared to the bound in Theorem \ref{thm:nonlin}, for some values of $\hat{R}_m$.

\begin{figure*}%
	\centering

	\includegraphics[width=0.8\textwidth]{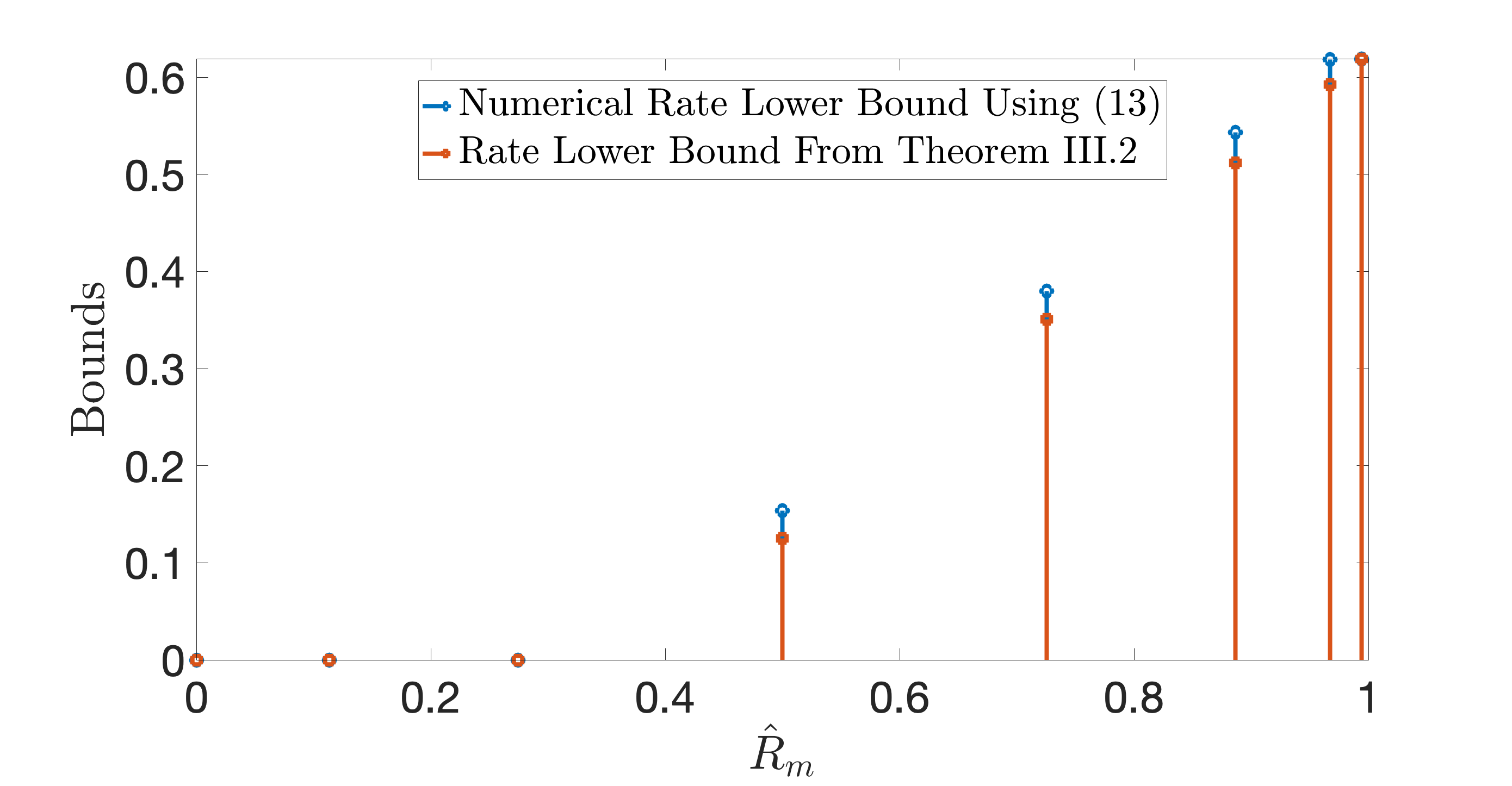}%
	\caption{Plot comparing, for $d=1$, the rate lower bound of approximately $\hat{R}_m - \frac38$, from Theorem \ref{thm:nonlin}, with the numerical lower bound obtained by Monte-Carlo simulation, using \eqref{eq:nonlinlogexp}.}
	\label{fig:lbnumerical}%
\end{figure*}
\section{Upper Bounds on Rates of Constrained Subcodes}
\label{sec:ub}
In this section, we derive upper bounds on the rates of $(d,\infty)$-RLL constrained subcodes of RM codes of rate $R\in (0,1)$. In the first subsection, we restrict our subcodes to be linear, while in the second subsection, we fix $d$ to be $1$ and {drop} the assumption of linearity of the subcodes. Additionally, in the first two subsections, we assume that the RM codes follow a lexicographic coordinate ordering. We consider RM codes under other coordinate orderings in the last subsection, and derive upper bounds on the rates of linear $(d,\infty)$-RLL constrained subcodes.
{\subsection{Linear $(d,\infty)$-RLL Constrained Subcodes Under Lexicographic Coordinate Ordering}}
\label{sec:rmlinub}
We fix a sequence of codes $\left\{{\hat{\mathcal{C}}_m} = \text{RM}(m,r_m)\right\}_{m\geq 1}$ of rate $R$, whose coordinates follow a lexicographic ordering. We first state and prove a fairly general proposition on the rates of linear $(d,\infty)$-RLL subcodes of linear codes. Recall that for an $[N,K]$ linear code $\mathcal{C}$ over $\mathbb{F}_2$, of blocklength $N$ and dimension $K$, an inrformation set is a collection of $K$ coordinates in which all possible $K$-tuples over $\mathbb{F}_2$ can appear. Equivalently, if $G$ is any generator matrix for $\mathcal{C}$, an information set is a set of $K$ column indices such that $G$ restricted to those columns is a full-rank matrix. As usual, we index the coordinates of the code from $0$ to $N-1$.

\begin{proposition}
	\label{prop:linub}
	Let ${\mathcal{C}}$ be an $[N,K]$ binary linear code. If $\mathcal{I}$ is an information set of ${\mathcal{C}}$ that contains $t$ disjoint $(d+1)$-tuples of consecutive coordinates $(i_1, i_1 + 1,\ldots,i_1+d), (i_2, i_2 + 1,\ldots,i_2+d), ..., (i_t, i_t + 1,\ldots,i_t+d)$, with $i_1\geq 0$, $i_j>i_{j-1}+d$, for all $j\in [2:t]$, and $i_t\leq N-1-d$, then the dimension of any linear $(d,\infty)$-RLL subcode of ${\mathcal{C}}$ is at most $K-dt$. 
\end{proposition}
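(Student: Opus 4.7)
The plan is to reduce the proof to a simple structural observation about how a linear $(d,\infty)$-RLL code can look on $d+1$ consecutive coordinates, and then combine this with the information-set hypothesis via the rank-nullity theorem. The core observation is: any linear subspace $\mathcal{L} \subseteq \mathbb{F}_2^{d+1}$ all of whose vectors have Hamming weight at most $1$ satisfies $\dim(\mathcal{L}) \leq 1$. Indeed, if $\mathcal{L}$ contained two linearly independent codewords, both would be distinct nonzero vectors of weight at most $1$ (hence exactly $1$), and their sum would be a nonzero vector of weight $2$, contradicting the assumption. Since any $(d,\infty)$-RLL word, restricted to $d+1$ consecutive positions, has weight at most $1$, this observation applies to the restriction of any linear $(d,\infty)$-RLL subcode of $\mathcal{C}$ to each tuple $T_j := \{i_j, i_j+1, \ldots, i_j+d\}$.

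With this in hand, let $\mathcal{C}^{(d,\infty)}$ be any linear $(d,\infty)$-RLL subcode of $\mathcal{C}$, and let $\rho : \mathbb{F}_2^N \to \mathbb{F}_2^{\mathcal{I}}$ denote the coordinate projection onto the information set $\mathcal{I}$. Since $\mathcal{I}$ is an information set of $\mathcal{C}$, the restriction of $\rho$ to $\mathcal{C}$ is a vector-space isomorphism, and hence $\dim(\mathcal{C}^{(d,\infty)}) = \dim(\rho(\mathcal{C}^{(d,\infty)}))$. Setting $T := \bigcup_{j=1}^{t} T_j \subseteq \mathcal{I}$ (which has cardinality $t(d+1)$ by the disjointness hypothesis), I would then study the further projection $\pi_T : \rho(\mathcal{C}^{(d,\infty)}) \to \mathbb{F}_2^T$. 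By the core observation, the projection of the image $\pi_T(\rho(\mathcal{C}^{(d,\infty)}))$ onto the $T_j$-coordinates has dimension at most $1$ for each $j$; since a vector in $\mathbb{F}_2^T$ is determined by its restrictions to the disjoint blocks $T_1, \ldots, T_t$, assembling the $t$ per-block images yields $\dim(\pi_T(\rho(\mathcal{C}^{(d,\infty)}))) \leq t$. The kernel $\ker \pi_T$ consists of elements of $\rho(\mathcal{C}^{(d,\infty)})$ that vanish on $T$, hence embeds into $\mathbb{F}_2^{\mathcal{I}\setminus T}$, giving $\dim(\ker \pi_T) \leq K - t(d+1)$. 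Rank-nullity then delivers $\dim(\mathcal{C}^{(d,\infty)}) \leq t + K - t(d+1) = K - dt$, as claimed.

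The main conceptual step---and essentially the entire content of the argument---is the weight-$\leq 1$ observation for linear subspaces of $\mathbb{F}_2^{d+1}$; everything else is routine bookkeeping via rank-nullity. The only care-point is cleanly combining the per-block dimension bounds into the global bound $\dim(\pi_T(\rho(\mathcal{C}^{(d,\infty)}))) \leq t$, which I would handle by noting injectivity of the natural map from $\pi_T(\rho(\mathcal{C}^{(d,\infty)}))$ into $\prod_{j=1}^{t} \mathbb{F}_2^{T_j}$ formed by the block projections. The argument invokes no structure specific to RM codes (no weight distributions, no Plotkin decomposition), so it applies uniformly to any linear code possessing an information set of the stated form, which is exactly the generality Proposition~\ref{prop:linub} claims.
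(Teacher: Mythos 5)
Your proof is correct and rests on exactly the same key idea as the paper's: linearity forces the restriction of the subcode to each $(d+1)$-block of consecutive coordinates to contain at most one nonzero (necessarily weight-$1$) pattern, and the information-set hypothesis converts this per-block restriction into a global bound. The only difference is presentational---the paper multiplies cardinalities ($2^t$ choices on the blocks times $2^{K-t(d+1)}$ off the blocks, giving at most $2^{K-dt}$ codewords) where you phrase the same bookkeeping via rank-nullity---so this is essentially the paper's own argument.
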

\begin{proof}
	Suppose that the information set $\mathcal{I}$ contains exactly $t$ disjoint $(d+1)$-tuples of consecutive coordinates as in the statement of the proposition. By definition, all possible $K$-tuples appear in the coordinates in $\mathcal{I}$. Now, consider any linear $(d,\infty)$-RLL subcode $\overline{\mathcal{C}}$ of ${\mathcal{C}}$, and any $(d+1)$-tuple of consecutive coordinates $\{i_j,i_j+1,\ldots,i_j+d\} \in \mathcal{I}$, for $j\in [t]$. Since the $(d,\infty)$-RLL constraint requires that successive $1$s be separated by at least $d$ $0$s, the only possible tuples of $d+1$ consecutive symbols in any codeword of the linear subcode are $(0,0,\ldots,0)$ and at most one of $\mathbf{e}_i^{(d+1)}$, $i\in [d+1]$. This is because, if $\mathbf{e}_i^{(d+1)}$ and $\mathbf{e}_j^{(d+1)}$ both occur in a collection of $d+1$ consecutive positions, then, by linearity of the subcode $\overline{\mathcal{C}}$, we have $\mathbf{e}_i^{(d+1)}+\mathbf{e}_j^{(d+1)}$ (where the addition is over vectors in $\mathbb{F}_2^{d+1}$) must occur in some codeword of the subcode, thereby making the codeword not $(d,\infty)$-RLL compliant. Hence, for every $(d+1)$-tuple of consecutive coordinates in $\mathcal{I}$, only a $2^{-d}$ fraction of the $2^{d+1}$ possible tuples are allowed. Thus, overall, the number of possible $K$-tuples that can appear in the information set $\mathcal{I}$ in the codewords of the linear subcode is at most $\frac{2^K}{2^{dt}}$. Hence, the number of codewords in the linear subcode is at most $2^{K-dt}$.
	
\end{proof}
In order to obtain an upper bound, as in Theorem \ref{thm:rmlinub}, on the rate of linear $(d,\infty)$-RLL subcodes of the sequence of codes $\left\{{\hat{\mathcal{C}}_m}\right\}_{m\geq 1}$, we shall first identify an information set  of ${{\hat{\mathcal{C}}_m}} = \text{RM}(m,r_m)$. We then compute the number of disjoint $(d+1)$-tuples of consecutive coordinates in the information set, and apply Proposition \ref{prop:linub} to get an upper bound on the dimension of the linear constrained subcodes.

Given the integers $m$ and $r$, consider the binary linear code $\tilde{\mathcal{C}}(m,r)$ (which is a subspace of $\mathbb{F}_2^{2^m}$), spanned by the codewords in the set 
\begin{equation}
	\label{eq:Bmr}
	\mathcal{B}_{m,r}:=\left\{\text{Eval}\left(\prod_{i\in S}x_i\right): S\subseteq [m]\ \text{with } |S|\geq r+1\right\}.
\end{equation}

From the discussion in Section \ref{sec:rmintro}, we observe that the vectors in $\mathcal{B}_{m,r}$ are linearly independent, and, hence, $\mathcal{B}_{m,r}$ forms a basis for $\tilde{\mathcal{C}}(m,r)$, with dim$\left(\tilde{\mathcal{C}}(m,r)\right) = {m \choose \geq r+1}$. Moreover, the codewords in $\tilde{\mathcal{C}}(m,r)$ are linearly independent from codewords in RM$(m,r)$, since the evaluation vectors of all the distinct monomials in the variables $x_1,\ldots,x_m$ are linearly independent over $\mathbb{F}_2$.

The following lemma identifies an alternative basis for $\tilde{\mathcal{C}}(m,r)$, which will prove useful in our analysis, later on.

\begin{lemma}
	\label{lem:quotient}
	Consider the code $\tilde{\mathcal{C}}(m,r)=\text{span}\left(\mathcal{B}_{m,r}\right)$, where $\mathcal{B}_{m,r}$ is as in \eqref{eq:Bmr}. We then have that $\tilde{\mathcal{C}}(m,r) = \text{span}\left(\{\mathbf{e}_{\mathbf{b}}: \text{wt}(\mathbf{b})\geq r+1\}\right)$.
	\end{lemma}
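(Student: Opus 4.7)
The plan is to prove the claimed equality of subspaces by establishing containment in one direction and then matching dimensions. First, I would observe that for any $S \subseteq [m]$, the evaluation vector $\mathrm{Eval}\left(\prod_{i \in S} x_i\right)$ is supported exactly on those $\mathbf{b} \in \{0,1\}^m$ satisfying $b_i = 1$ for all $i \in S$, i.e., on $\mathbf{b}$ with $\mathrm{supp}(\mathbf{b}) \supseteq S$. In particular, every such $\mathbf{b}$ has $\mathrm{wt}(\mathbf{b}) \geq |S|$. Hence, whenever $|S| \geq r+1$, we may write
\[
\mathrm{Eval}\Bigl(\prod_{i \in S} x_i\Bigr) \;=\; \sum_{\mathbf{b}:\, \mathrm{supp}(\mathbf{b}) \supseteq S} \mathbf{e}_{\mathbf{b}},
\]
which is a sum of standard basis vectors $\mathbf{e}_{\mathbf{b}}$ with $\mathrm{wt}(\mathbf{b}) \geq r+1$. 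This gives the inclusion $\tilde{\mathcal{C}}(m,r) \subseteq \mathrm{span}\left(\{\mathbf{e}_{\mathbf{b}} : \mathrm{wt}(\mathbf{b}) \geq r+1\}\right)$.

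Next, I would close the argument in one of two equivalent ways. The dimension-count route is the shortest: the right-hand side has dimension $\binom{m}{\geq r+1}$ since distinct $\mathbf{e}_{\mathbf{b}}$'s are linearly independent, and $\tilde{\mathcal{C}}(m,r)$ has the same dimension $\binom{m}{\geq r+1}$ since $\mathcal{B}_{m,r}$ is a basis (as noted just before the lemma, using the linear independence of evaluation vectors of distinct monomials). Equality of dimensions forces equality of subspaces. Alternatively, one can exhibit the reverse inclusion explicitly via Möbius inversion over the Boolean lattice: for $T \subseteq [m]$ with $|T| \geq r+1$, the identity
\[
\mathbf{e}_{\mathbf{1}_T} \;=\; \sum_{S \supseteq T}\mathrm{Eval}\Bigl(\prod_{i \in S} x_i\Bigr)
\]
holds over $\mathbb{F}_2$ (Möbius inversion has signs $(-1)^{|S|-|T|}$, which reduce to $1$ in characteristic $2$); every $S$ in the sum satisfies $|S| \geq |T| \geq r+1$, so $\mathbf{e}_{\mathbf{1}_T} \in \tilde{\mathcal{C}}(m,r)$, as required.

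I expect no serious obstacle here. The only point that needs mild care is checking the Möbius-inversion identity over $\mathbb{F}_2$, which follows from $\sum_{T \subseteq U \subseteq S} 1 = 2^{|S|-|T|}$ being $0 \pmod 2$ whenever $S \supsetneq T$. Given this, I would present the proof by combining the one-line containment with the dimension argument, which is both the shortest and the least error-prone route.
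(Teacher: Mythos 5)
Your proposal is correct and follows essentially the same route as the paper: one explicit containment plus a dimension count via $\dim\tilde{\mathcal{C}}(m,r)=\binom{m}{\geq r+1}$. The only cosmetic difference is that you establish $\tilde{\mathcal{C}}(m,r)\subseteq\mathrm{span}\{\mathbf{e}_{\mathbf{b}}\}$ explicitly while the paper establishes the reverse containment by writing $\mathbf{e}_{\mathbf{b}}=\mathrm{Eval}\bigl(\prod_{i\in\mathrm{supp}(\mathbf{b})}x_i\cdot\prod_{j\notin\mathrm{supp}(\mathbf{b})}(1+x_j)\bigr)$ --- which, expanded over $\mathbb{F}_2$, is precisely your M\"obius-inversion identity.
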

\begin{proof}
	
	Note that any standard basis vector $\mathbf{e}_{\mathbf{b}}$, with  wt$(\mathbf{b})\geq r+1$, can be written as Eval$(f)$, where 
	\[
	f(x_1,\ldots,x_m) = \prod_{i\in \text{supp}(\mathbf{b})}x_i \cdot \prod_{i\notin \text{supp}(\mathbf{b})}(1+x_j).
	\]
	From the fact that wt$(\mathbf{b})\geq r+1$, we have that the degree of any monomial in $f$ is at least $r+1$, and hence, Eval$(f) = \mathbf{e}_\mathbf{b} \in \text{span}(\mathcal{B}_{m,r}) = \tilde{\mathcal{C}}(m,r)$. The result follows by noting that $\{\mathbf{e}_{\mathbf{b}}: \text{wt}(\mathbf{b})\geq r+1\}$ is a collection of linearly independent vectors of size ${m\choose \geq r+1}$, which, in turn, equals dim$\left(\tilde{\mathcal{C}}(m,r)\right)$.
\end{proof}
We now introduce some further notation: given a $p\times q$ matrix $M$, we use the notation $M[\mathcal{U},\mathcal{V}]$ to denote the submatrix of $M$ consisting of the rows in the set $\mathcal{U}\subseteq [p]$ and the columns in the set $\mathcal{V}\subseteq [q]$. We recall the definition of the generator matrix $G_{\text{Lex}}(m,r)$, of RM$(m,r)$, and the indexing of columns of the matrix, from Section \ref{sec:rmintro}. Finally, towards identifying an information set of RM$(m,r)$, we define the set of coordinates
\begin{equation}
\label{eq:Imr}
\mathcal{I}_{m,r}:= \{\mathbf{b} = (b_1,\ldots,b_m)\in \mathbb{F}_2^m: \text{wt}(\mathbf{b})\leq r\}.
\end{equation}
\begin{lemma}
	\label{lem:infoset}
	The set of coordinates $\mathcal{I}_{m,r}$ is an information set of $\text{RM}(m,r)$.
\end{lemma}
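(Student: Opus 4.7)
The plan is to use the auxiliary code $\tilde{\mathcal{C}}(m,r)$ and Lemma~\ref{lem:quotient} as a ``dual'' to the information-set condition. First I would observe that $|\mathcal{I}_{m,r}| = \binom{m}{\leq r}$, which equals $\dim(\mathrm{RM}(m,r))$, so it suffices to prove that the projection map $\pi: \mathrm{RM}(m,r) \to \mathbb{F}_2^{\mathcal{I}_{m,r}}$ that restricts each codeword to the coordinates in $\mathcal{I}_{m,r}$ is injective. Once injectivity is shown, a dimension count forces $\pi$ to be bijective, which is precisely the statement that $\mathcal{I}_{m,r}$ is an information set.

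To prove injectivity, I would argue by contradiction (or equivalently, kernel-chasing). Suppose $\mathrm{Eval}(f) \in \mathrm{RM}(m,r)$ lies in $\ker(\pi)$, i.e., $\mathrm{Eval}_{\mathbf{z}}(f) = 0$ for every $\mathbf{z}\in \mathcal{I}_{m,r}$. Then $\mathrm{supp}(\mathrm{Eval}(f)) \subseteq \mathcal{I}_{m,r}^{c} = \{\mathbf{b}\in\mathbb{F}_2^m : \mathrm{wt}(\mathbf{b}) \geq r+1\}$, and so $\mathrm{Eval}(f)$ is a $\mathbb{F}_2$-linear combination of the standard basis vectors $\{\mathbf{e}_\mathbf{b} : \mathrm{wt}(\mathbf{b}) \geq r+1\}$. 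By Lemma~\ref{lem:quotient}, this places $\mathrm{Eval}(f)$ in $\tilde{\mathcal{C}}(m,r)$.

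Now I would invoke the fact, recorded just after the definition \eqref{eq:Bmr}, that the evaluation vectors of the distinct monomials form a linearly independent set, which gives $\mathrm{RM}(m,r) \cap \tilde{\mathcal{C}}(m,r) = \{0\}$. Combined with the previous step, this forces $\mathrm{Eval}(f) = \mathbf{0}$, establishing that $\ker(\pi) = \{0\}$. Together with the matching dimension count, this shows $\pi$ is an isomorphism, i.e., every binary $K$-tuple appears in the coordinates of $\mathcal{I}_{m,r}$ across the codewords of $\mathrm{RM}(m,r)$, which is the definition of an information set.

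I do not anticipate a real obstacle here: the argument is essentially a direct-sum decomposition $\mathbb{F}_2^{2^m} = \mathrm{RM}(m,r) \oplus \tilde{\mathcal{C}}(m,r)$ read off with respect to the coordinate partition $\{\mathcal{I}_{m,r}, \mathcal{I}_{m,r}^{c}\}$. The only place care is needed is in citing Lemma~\ref{lem:quotient} cleanly to identify the span of $\{\mathbf{e}_\mathbf{b}: \mathrm{wt}(\mathbf{b}) \geq r+1\}$ with $\tilde{\mathcal{C}}(m,r)$; once that identification is made, the contradiction is immediate. This approach has the pleasant feature of being purely structural and avoids any explicit manipulation of the generator matrix $G_{\mathrm{Lex}}(m,r)$ or its minors.
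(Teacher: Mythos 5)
Your proof is correct and rests on the same two pillars as the paper's: Lemma~\ref{lem:quotient} identifying $\tilde{\mathcal{C}}(m,r)$ with $\mathrm{span}\left(\{\mathbf{e}_{\mathbf{b}}:\mathrm{wt}(\mathbf{b})\ge r+1\}\right)$, and the direct-sum decomposition $\mathbb{F}_2^{2^m}=\mathrm{RM}(m,r)\oplus\tilde{\mathcal{C}}(m,r)$ that follows from the linear independence of the monomial evaluation vectors. The paper packages this as a rank computation on the stacked matrix $\mathsf{H}$ after Gaussian elimination, while you package it as injectivity of the restriction map to $\mathcal{I}_{m,r}$ plus a dimension count; these are the same argument in matrix and map language, respectively.
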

\begin{proof}
	 In order to prove that $\mathcal{I}_{m,r}$ is an information set of RM$(m,r)$, it is sufficient to show that $G_{\text{Lex}}(m,r)$ restricted to the columns in $\mathcal{I}_{m,r}$ is full rank.
	
	Now, consider the generator matrix $\tilde{G}(m,r)$, of $\tilde{\mathcal{C}}(m,r)$, consisting of rows that are vectors in $\mathcal{B}_{m,r}$. We build the $2^m\times 2^m$ matrix $$\mathsf{H}:= \begin{bmatrix} \begin{array}{c}
		\tilde{G}(m,r)\\
		\hline\\
		G_{\text{Lex}}(m,r)
		\end{array}
	\end{bmatrix},$$
with $\mathsf{H}$ being full rank. Note that, from Lemma \ref{lem:quotient}, any standard basis vector $\mathbf{e}_{\mathbf{b}}$, with $\mathbf{b} \in \mathcal{I}_{m,r}^c$, belongs to rowspace$(\tilde{G}(m,r))$. By Gaussian elimination, it is then possible to replace the first ${m\choose \geq r+1}$ rows of $\mathsf{H}$, corresponding to the submatrix $\tilde{G}(m,r)$, with the standard basis vectors $\mathbf{e}_\mathbf{b}$, with $\mathbf{b}\in \mathcal{I}_{m,r}^c$. Clearly, from the fact that $\mathsf{H}$ is full rank, this then means that $\mathsf{H}\left[\left[{m\choose \geq r+1}+1:2^m\right],\mathcal{I}_{m,r}\right]$ is full rank, or equivalently, $G_{\text{Lex}}(m,r)$ restricted to columns in $\mathcal{I}_{m,r}$ is full rank.
\end{proof}

We thus have that an information set of ${\hat{\mathcal{C}}_m}$ is $\mathcal{I}_{m,r_m}$. Note that in order to get an upper bound on the rate of linear $(d,\infty)$-RLL subcodes of ${\hat{\mathcal{C}}_m}$, we need only calculate the number of disjoint $(d+1)$-tuples of consecutive coordinates in $\mathcal{I}_{m,r_m}$.  


To this end, for any $0\leq r\leq m-1$, we define
\begin{align}
	\label{eq:gammam}
\Gamma_{m,r} := \left\{s\in [0:2^m-2]: \mathbf{B}(s)\in \mathcal{I}_{m,r},\text{ but } \mathbf{B}(s+1)\notin \mathcal{I}_{m,r}\right\},
\end{align}
to be the set of right end-point coordinates of runs that belong to $\mathcal{I}_{m,r}$. The number of such runs is $\left \lvert \Gamma_{m,r}\right \rvert$. Observe that since $r\leq m-1$, we have $2^{m}-1\notin \mathcal{I}_{m,r}$.  In the case where $r=m$, we have that $\mathcal{I}_{m,r} = [0:2^m-1]$, and we define $\Gamma_{m,r}$ to be $\{2^m-1\}$. However, this special case need not be considered, for our purposes.
\begin{lemma}
	\label{lem:runs}
For $0\leq r\leq m-1$, the equality $\left \lvert \Gamma_{m,r}\right \rvert={m-1\choose r}$ holds.
\end{lemma}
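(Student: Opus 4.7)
The plan is to characterize $\Gamma_{m,r}$ explicitly via an analysis of how the Hamming weight of $\mathbf{B}(s)$ changes under the $s \mapsto s+1$ transition, and then count directly.

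The key observation is the standard description of binary incrementation. If $\mathbf{B}(s) = (b_1, \ldots, b_m)$ has exactly $k$ trailing $1$s (meaning $b_m = b_{m-1} = \cdots = b_{m-k+1} = 1$ and, when $k < m$, $b_{m-k} = 0$), then $\mathbf{B}(s+1)$ is obtained by flipping those $k$ trailing $1$s to $0$s and flipping the preceding $0$ to a $1$. Since $s \le 2^m - 2$ forces $\mathbf{B}(s) \neq (1,\ldots,1)$, such a $k \in [0:m-1]$ is always well defined, and
\[
\text{wt}(\mathbf{B}(s+1)) - \text{wt}(\mathbf{B}(s)) \;=\; 1 - k.
\]

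I would then use this to establish the equivalence: $s \in \Gamma_{m,r}$ if and only if $\mathbf{B}(s)$ has last coordinate $0$ and $\text{wt}(\mathbf{B}(s)) = r$. In one direction, membership in $\Gamma_{m,r}$ requires $\text{wt}(\mathbf{B}(s)) \le r$ yet $\text{wt}(\mathbf{B}(s+1)) \ge r+1$, so the weight must \emph{strictly} increase. Since the weight change $1 - k$ is at most $1$, this forces $k = 0$ (so $\mathbf{B}(s)$ ends in $0$) and a jump of exactly $+1$, pinning $\text{wt}(\mathbf{B}(s)) = r$. The converse is immediate from the same identity.

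With the characterization in hand, the count is straightforward: the number of length-$m$ binary strings with $b_m = 0$ and Hamming weight $r$ is, by dropping the trailing $0$, the number of length-$(m-1)$ binary strings of weight $r$, namely $\binom{m-1}{r}$. There is no real obstacle here; the only subtlety worth flagging is the bit convention, since under the lexicographic ordering used in the paper $b_1$ is the most significant and $b_m$ the least significant bit, so ``trailing'' refers to coordinates $b_m, b_{m-1}, \ldots$, which is exactly the setting in which ordinary binary arithmetic applies to $\mathbf{B}(\cdot)$.
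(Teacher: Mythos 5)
Your proof is correct and follows essentially the same route as the paper's: both arguments characterize $\Gamma_{m,r}$ as the set of $s$ with $\mathbf{B}(s)$ ending in a $0$ and having weight exactly $r$, by observing that incrementing a string ending in $1$ cannot increase its weight, and then count $\binom{m-1}{r}$ such strings. Your formulation via the identity $\mathrm{wt}(\mathbf{B}(s+1)) - \mathrm{wt}(\mathbf{B}(s)) = 1-k$, with $k$ the number of trailing $1$s, is a slightly more quantitative phrasing of the same observation, and the bit-significance convention you flag is indeed the one the paper uses.
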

\begin{proof}
	Let $r\in [0:m-1]$. Note that every right end-point of a run, $s\in \Gamma_{m,r}$,  with $s\in [0:2^m-2]$, is such that wt$(\mathbf{B}(s))\leq r$, but wt$(\mathbf{B}(s+1))\geq r+1$. We now claim that an integer $s \in \Gamma_{m,r}$ iff \textbf{B}$(s)=(b_1,\ldots ,b_{m-1},0)$, for $b_1,\ldots,b_{m-1}\in \{0,1\}$, with wt$((b_1,\ldots ,b_{m-1},0)) = r$. 
	
	To see this, note that if \textbf{B}$(s)=(b_1,\ldots, b_{m-1},0)$, then \textbf{B}$(s+1) = (b_1,\ldots, b_{m-1},1)$. Hence, if wt$((b_1,\ldots, b_{m-1})) = r$, then $s\in \Gamma_{m,r}$. Conversely, if $s\in \Gamma_{m,r}$, then \textbf{B}$(s)$ cannot end in a $1$. Indeed, if this were the case, then we would have \textbf{B}$(s)$ being of the form $(b_1,\ldots, b_\ell, 0,1, \ldots, 1)$, with $b_1,\ldots,b_{\ell}\in \{0,1\}$, so that \textbf{B}$(s+1)$ would be $(b_1,\ldots, b_\ell, 1,0, \ldots, 0)$, the weight of which does not exceed that of \textbf{B}$(s)$. So, \textbf{B}$(s)$ must be of the form $(b_1, \ldots, b_{m-1}, 0)$, and so, \textbf{B}$(s+1) = (b_1,\ldots, b_{m-1},1)$. From wt$(\mathbf{B}(s)) \le r$ and wt$(\mathbf{B}(s+1)) \ge r+1$, we obtain that wt$(b_1 \ldots b_{m-1}) = r$. 
	
	This then implies that the number of runs, which is equal to the number of right end-points of runs, exactly equals ${m-1\choose r}$.
\end{proof}
With the ingredients in place, we are now in a position to prove Theorem \ref{thm:rmlinub}.
\begin{proof}[Proof of Theorem \ref{thm:rmlinub}]
	Fix a sequence of codes $\left\{{\hat{\mathcal{C}}_m} = \text{RM}(m,r_m)\right\}_{m\geq 1}$ of rate $R\in (0,1)$, with $r_m\leq m-1$, for all $m$. We use the notation $K_m:={m\choose \leq r_m}$ to denote the dimension of ${{\hat{\mathcal{C}}_m}}$. 
		
	Now, for a given $m$, consider the information set $\mathcal{I}_{m,r_m}$ (see \eqref{eq:Imr}). We know from Lemma \ref{lem:runs} that the number of runs, $\left \lvert \Gamma_{m,r_m}\right \rvert$, of coordinates that lie in $\mathcal{I}_{m,r_m}$, is exactly ${m-1\choose r_m}$. Now, note that the $i^{\text{th}}$ run $(s_i,\ldots,s_i+\ell_i-1)$, of length $\ell_i$, with $s_i\in \Gamma_{m,r_m}$ and $i\in \left[\left \lvert \Gamma_{m,r_m}\right \rvert\right]$, contributes $\left \lfloor \frac{\ell_i}{d+1}\right \rfloor$ disjoint $(d+1)$-tuples of consecutive coordinates in $\mathcal{I}_{m,r}$. It then holds that the overall number of disjoint $(d+1)$-tuples of consecutive coordinates in $\mathcal{I}_{m,r}$ is $t_m$, where
	\begin{align*}
		t_m &= \sum_{i=1}^{\left \lvert \Gamma_{m,r_m}\right \rvert}\left \lfloor \frac{\ell_i}{d+1} \right \rfloor\\
		&\geq \sum_{i=1}^{\left \lvert \Gamma_{m,r_m}\right \rvert} \left(\frac{\ell_i}{d+1} -1\right)\\
		&= \frac{K_m}{d+1} - \left \lvert \Gamma_{m,r_m} \right \rvert
		= \frac{K_m}{d+1} - {m-1 \choose r_m},
	\end{align*}
where the last equality follows from Lemma \ref{lem:runs}.
	
	Using Proposition \ref{prop:linub}, it follows that the dimension of any linear $(d,\infty)$-RLL subcode ${\overline{\mathcal{H}}^{(d)}} \subseteq {\hat{\mathcal{C}}_m}$ is at most $K_m-dt_m$. It then holds that
	\begin{align*}
			{\overline{\mathsf{R}}^{(d)}(\hat{\mathcal{C}})}&=\limsup_{m\to \infty} \max_{{\overline{\mathcal{H}}^{(d)}}\subseteq {\hat{\mathcal{C}}_m}} \frac{\log_2\left \lvert {\overline{\mathcal{H}}^{(d)}}\right\rvert}{2^m}\\
		&\leq \limsup_{m\to \infty}\frac{K_m-dt_m}{2^m}\\
		&\leq \limsup_{m\to \infty}\frac{K_m-\frac{dK_m}{d+1}+d\cdot {m-1\choose r_m}}{2^m}\\
		&\leq \lim_{m\to \infty}\frac{\frac{K_m}{d+1}+d\cdot {m-1\choose \left \lfloor \frac{m-1}{2}\right \rfloor}}{2^m}\\
		&= \frac{R}{d+1},
	\end{align*}
	where the last equality holds from the fact that ${m-1\choose \left \lfloor \frac{m-1}{2}\right \rfloor}$ is $O\left(\frac{2^m}{\sqrt{m-1}}\right)$, and $\lim_{m\to \infty} \frac{K_m}{2^m} = R$.
\end{proof}

{\subsection{General $(1,\infty)$-RLL Constrained Subcodes Under Lexicographic Coordinate Ordering}}
\label{sec:rmub}
In this section, we provide upper bounds on the rates of $(1,\infty)$-RLL subcodes of $\{{\mathcal{C}_m} = \text{RM}(m,r_m)\}_{m\geq 1}$, where
\[
r_m = \max \left\{\left \lfloor \frac{m}{2}+\frac{\sqrt{m}}{2}Q^{-1}(1-R)\right \rfloor,0\right\}.
\]

Recall that  $\{{\mathcal{C}_m} = \text{RM}(m,r_m)\}_{m\geq 1}$ is a sequence of RM codes of rate $R$. We fix the coordinate ordering to be the standard lexicographic ordering. 

Recall also, from Lemma \ref{lem:necesscond}, that any $(1,\infty)$-RLL subcode of ${\mathcal{C}_m}$ must be such that both conditions (C1) and (C2) are simultaneously satisfied. Therefore, to obtain an upper bound on the number of codewords Eval$(f) \in {{\mathcal{C}_m}}$ that respect the $(1,\infty)$-RLL constraint, it is sufficient to obtain an upper bound on the number of Eval$(f) \in {{\mathcal{C}_m}}$ that satisfy (C1) alone. In other words, we wish to obtain an upper bound on the number of pairs of polynomials $(g,h)$ (see the Plotkin decomposition in \eqref{eq:rmdecomp}), with Eval$(g) \in {\mathcal{C}_{m,+}}$ and Eval$(h)\in {\mathcal{C}_{m,-}}$, such that supp$(\text{Eval}(g)) \subseteq \text{supp}(\text{Eval}(h))$.

The following two lemmas from the literature will be useful in the proof of Theorem \ref{thm:rmub}.
\begin{lemma}[\cite{abbe1}, Lemma 36]
	\label{lem:short}
	Let $\mathcal{V} \subseteq \mathbb{F}_2^m$ be such that $|\mathcal{V}|\geq 2^{m-u}$, for some $u\in \mathbb{N}$. Then,
	\[
	\text{rank}({G_{m,r}}[\mathcal{V}])> {m-u \choose \leq r},
	\]
	where ${G_{m,r}}$ is a generator matrix of RM$(m,r)$, and ${G_{m,r}}[\mathcal{V}]$ denotes the set of columns of ${G_{m,r}}$, indexed by $\mathcal{V}$.
\end{lemma}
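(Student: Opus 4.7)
The plan is to first reinterpret $\text{rank}(G(m,r)[\mathcal{V}])$ as the dimension of a space of polynomial restrictions. Since the rows of $G(m,r)$ are the evaluation vectors over $\mathbb{F}_2^m$ of the monomials of degree at most $r$, the column span of $G(m,r)[\mathcal{V}]$ coincides with the linear image of the space $\mathcal{P}_{m,r}$ of degree-$\le r$ polynomials in $\mathbb{F}_2[x_1,\ldots,x_m]$ under the restriction-to-$\mathcal{V}$ map. Thus
\[
\text{rank}(G(m,r)[\mathcal{V}]) \ = \ \dim\bigl\{f\big|_{\mathcal{V}}: f \in \mathcal{P}_{m,r}\bigr\} \ = \ \binom{m}{\le r} - \dim\bigl\{f \in \mathcal{P}_{m,r} : f\big|_{\mathcal{V}} \equiv 0\bigr\},
\]
and the goal reduces to upper-bounding the dimension of the space of degree-$\le r$ polynomials vanishing identically on $\mathcal{V}$.

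I would then proceed by induction on $u \ge 1$, using the Plotkin-type decomposition from \eqref{eq:rmdecomp}. Partitioning $\mathcal{V} = \mathcal{V}_0 \sqcup \mathcal{V}_1$ according to the value of $x_m$ and writing any degree-$\le r$ polynomial $f$ as $f = g(x_1,\ldots,x_{m-1}) + x_m\,h(x_1,\ldots,x_{m-1})$ with $\deg g \le r$ and $\deg h \le r-1$, the evaluation of $f$ on $\mathcal{V}_0$ depends only on $g$ restricted to the set $\mathcal{V}_0'\subseteq \mathbb{F}_2^{m-1}$ obtained by dropping the last coordinate from $\mathcal{V}_0$, while the evaluation on $\mathcal{V}_1$ depends on $g+h$ restricted to $\mathcal{V}_1'$. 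By pigeonhole, at least one of $|\mathcal{V}_0'|, |\mathcal{V}_1'|$ is at least $|\mathcal{V}|/2 \ge 2^{(m-1)-u}$, and the inductive hypothesis applied on $\mathbb{F}_2^{m-1}$ supplies a lower bound on the rank of the corresponding block. Combining this with the additional degrees of freedom contributed by $h$ on the other part (which is independent of the constraints on $g$), one recovers a lower bound of $\binom{(m-1)-u}{\le r} + \binom{(m-1)-u}{\le r-1} = \binom{m-u}{\le r}$, and should inherit at least one unit of slack from the strict inequality in the inductive hypothesis.

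The principal obstacle is propagating the strict inequality. The target bound $\binom{m-u}{\le r}$ is attained with equality when $\mathcal{V}$ is an $(m-u)$-dimensional affine subspace of $\mathbb{F}_2^m$, since by the affine invariance of RM codes the restriction of $\text{RM}(m,r)$ to such a subspace is isomorphic to $\text{RM}(m-u,r)$, which has dimension exactly $\binom{m-u}{\le r}$. The strict bound in the lemma must therefore exploit structural information beyond mere cardinality — either that $|\mathcal{V}| > 2^{m-u}$, so that some stage of the induction is applied to a strictly larger set than the boundary case, or, when $|\mathcal{V}| = 2^{m-u}$, that $\mathcal{V}$ fails to be an affine subspace and so some additional monomial evaluation becomes independent of the ones forced by the subspace structure. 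Careful pigeonhole at the splitting step — to ensure that this slack is preserved through every level of the induction (in particular when the degree $r$ also decreases in the $h$-block) and to handle low-degree base cases (such as $r=0$ or $u=1$) without collapsing the strict gap — is the main technical hurdle of the argument.
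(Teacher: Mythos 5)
The paper does not actually prove this statement: it is imported verbatim as Lemma~36 of \cite{abbe1}, so there is no in-paper proof to compare against, and your sketch has to stand on its own. Your opening reduction is fine --- identifying $\text{rank}(G(m,r)[\mathcal{V}])$ with $\dim(\mathcal{P}_{m,r}|_{\mathcal{V}})$, equivalently $\binom{m}{\le r}$ minus the dimension of the degree-$\le r$ polynomials vanishing on $\mathcal{V}$, is correct and is the standard starting point. The problem is the combination step of your induction. To obtain $\binom{(m-1)-u}{\le r} + \binom{(m-1)-u}{\le r-1} = \binom{m-u}{\le r}$ you need \emph{both} $|\mathcal{V}_0'| \ge 2^{(m-1)-u}$ and $|\mathcal{V}_1'| \ge 2^{(m-1)-u}$, but pigeonhole gives you only one of the two. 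In an unbalanced split, say $|\mathcal{V}_0'| = 2^{m-u}-1$ and $|\mathcal{V}_1'| = 1$, the $h$-block contributes only $1$ dimension, which is far short of $\binom{m-1-u}{\le r-1}$, and the $g$-block cannot make up the deficit from the inductive hypothesis alone. Handling arbitrary splits $(a,b)$ with $a+b \ge 2^{m-u}$ requires showing a genuine combinatorial inequality of the form $\binom{m-1-u_0}{\le r} + \binom{m-1-u_1}{\le r-1} \ge \binom{m-u}{\le r}$ with $2^{-u_0}+2^{-u_1} \ge 2^{1-u}$; this Kruskal--Katona-type step (equivalently, the footprint-bound argument that any down-closed set of $2^{m-u}$ multilinear monomials contains at least $\binom{m-u}{\le r}$ of degree $\le r$) is the actual content of the lemma, and your sketch replaces it with the assertion that the two blocks ``combine.''

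On the strictness issue: you correctly sense the danger, but the situation is worse than a technical hurdle. Your own example --- $\mathcal{V}$ an affine subspace of dimension $m-u$ --- satisfies the hypothesis $|\mathcal{V}| \ge 2^{m-u}$ with equality and gives $\text{rank}(G(m,r)[\mathcal{V}]) = \binom{m-u}{\le r}$ exactly, so the statement as transcribed (non-strict hypothesis, strict conclusion) is false and no proof can establish it; the source's hypothesis is strict, or else the conclusion must be weakened to $\ge$. (The same failure occurs for $r=0$, where the rank is always $1 = \binom{m-u}{\le 0}$.) This is harmless for the paper, since the application in the proof of Theorem~\ref{thm:rmub} only uses the non-strict bound $\text{rank} \ge \binom{m-u}{\le r}$ inside an upper bound on $N_w(g)$, but it means your proposal is chasing an unprovable inequality, and even the non-strict version is not closed by the argument as sketched.
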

The lemma below follows from Theorem 1 in \cite{anuprao}, by noting that the Reed-Muller code is transitive:
\begin{lemma}
	\label{lem:anup}
	{Let the weight distribution of RM$(m,r)$ be $\left(A_{m,r}(w): 0\leq w\leq 2^m\right)$.} Then,
	\[
	A_{m,r}(w)\leq \text{exp}_2\left({m\choose \leq r}\cdot h_b\left(\frac{w}{2^m}\right)\right),
	\]
	where $h_b(\cdot)$ is the binary entropy function.
\end{lemma}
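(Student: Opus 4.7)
My plan is to derive the lemma essentially verbatim from Theorem 1 of \cite{anuprao}, which upper bounds the weight enumerator of an arbitrary \emph{transitive} linear code $\mathcal{C} \subseteq \mathbb{F}_2^n$ of dimension $K$ by $\log_2 A(w) \leq K \cdot h_b(w/n)$. Setting $\mathcal{C} = \text{RM}(m,r)$, for which $n = 2^m$ and $K = \binom{m}{\leq r}$, the right-hand side becomes exactly $\binom{m}{\leq r}\cdot h_b(w/2^m)$, which matches the claim.

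The one piece of work required is to verify that $\text{RM}(m,r)$ is coordinate-transitive, i.e., that its coordinate-permutation automorphism group acts transitively on $\{0,1,\ldots,2^m-1\}$, identified with $\mathbb{F}_2^m$. I would establish this using the polynomial-evaluation description of RM codes: for any $\mathbf{v} \in \mathbb{F}_2^m$, the translation $T_{\mathbf{v}}: \mathbf{z} \mapsto \mathbf{z}+\mathbf{v}$ of the evaluation points acts on a codeword $\text{Eval}(f)$ by the substitution $f(x_1,\ldots,x_m) \mapsto f(x_1+v_1,\ldots,x_m+v_m)$, which preserves polynomial degree and hence sends $\text{RM}(m,r)$ to itself. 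Since the translation group $\{T_{\mathbf{v}} : \mathbf{v} \in \mathbb{F}_2^m\}$ already acts transitively on $\mathbb{F}_2^m$, the full automorphism group of $\text{RM}(m,r)$ does as well. (In fact the affine group $\text{GA}(m,2)$ sits inside this automorphism group, but only transitivity is needed here.)

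The remaining step is simply to substitute $n = 2^m$ and $K = \binom{m}{\leq r}$ into the general bound. The main potential obstacle is bibliographic rather than mathematical: I would want to confirm that the hypotheses of Theorem 1 of \cite{anuprao} --- for instance, possible restrictions on the weight $w$, on the minimum distance, or on the rate of the code --- do not exclude any of the parameter settings we need, namely all weights $w \in [2^{m-r}, 2^m]$ appearing in $\text{RM}(m,r)$. Assuming the cited theorem is stated in a form general enough to cover transitive binary linear codes in this generality (which it is), no further work is required beyond the transitivity verification above.
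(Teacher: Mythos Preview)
Your proposal is correct and matches the paper's approach exactly: the paper does not give a separate proof but simply states that the lemma follows from Theorem~1 of \cite{anuprao} by noting that the Reed-Muller code is transitive. Your explicit verification of transitivity via the translation group is a welcome elaboration but not required beyond what the paper itself provides.
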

We now provide the proof of Theorem \ref{thm:rmub}.
\begin{proof}[Proof of Theorem~\ref{thm:rmub}]
	Fix the sequence $\{{{\mathcal{C}_m}} = \text{RM}(m,r_m)\}_{m\geq 1}$ of RM codes, with
	\[
		r_m = \max \left\{\left \lfloor \frac{m}{2}+\frac{\sqrt{m}}{2}Q^{-1}(1-R)\right \rfloor,0\right\},
	\]
 as in the statement of the theorem.
	Now, for any codeword Eval$(g) \in {\mathcal{C}_{m,+}}$ of weight $w$, we shall first calculate the number, $N_w(g)$, of codewords Eval$(h) \in {\mathcal{C}_{m,+}}$ such that supp$(\text{Eval}(g)) \subseteq \text{supp}(\text{Eval}(h))$. Note that $N_w(g)$ serves as an upper bound on the number of codewords Eval$(h) \in {\mathcal{C}_{m,-}}$ such that the same property holds.
	
	To this end, suppose that for any weight $w$, the integer  $u=u(w)$ is the smallest integer such that wt$(\text{Eval}(g)) = w \geq2^{m-1-u}$. Note that for any polynomial $g$ as above of weight $w$, the number of codewords in the code produced by shortening ${\mathcal{C}_{m,+}}$ at the indices in supp$(\text{Eval}(g))$, equals $N_w(g)$.  Now, since dim$\left({\mathcal{C}_{m,+}}\right) = {m-1\choose \leq r_m}$, and from the fact that $w\geq 2^{m-1-u}$, we obtain by an application of Lemmas \ref{lem:shorten} and \ref{lem:short}, that
	\begin{align}
		\label{eq:nw}
	N_w(g)&\leq \text{exp}_2\left({m-1\choose \leq r_m} - {m-1-u \choose \leq r_m}\right)\\
	&=:M_{u(w)}. \notag
	\end{align}
	
	Let ${\mathcal{H}^{(1)}_m}$ denote the largest $(1,\infty)$-RLL subcode of ${{\mathcal{C}_m}}$, and let  ${\mathsf{R}_{m}^{(1)}(\mathcal{C})}$ denote its rate, i.e.,
	\[
	{\mathsf{R}_{m}^{(1)}(\mathcal{C})} = \frac{\log_2\left(\left\lvert{\mathcal{H}^{(1)}_m} \right\rvert\right)}{2^m}.
	\]
	Then, 
	\begin{align}
\left\lvert{\mathcal{H}^{(1)}_m} \right\rvert&\leq \sum_{g: \text{Eval}(g)\in {\mathcal{C}_{m,+}}}{N_w(g)} \notag	\\
		&\stackrel{(a)}{\leq} \sum\limits_{w=2^{m-1-r_m}}^{2^{m-1}}A_{m-1,r_m}(w){M_{u(w)}} \notag\\
		&\stackrel{(b)}{\leq} \left\{\sum\limits_{w=2^{m-1-r_m}}^{2^{m-2}}A(w){M_{u(w)}}\right\}+\frac{1}{2}\cdot \text{exp}_2\left(m-1\choose \leq r_m\right)\cdot
		\text{exp}_2\left({m-1\choose \leq r_m} - {m-2\choose \leq r_{m}}\right)\notag \\
		&\stackrel{(c)}{\leq} \left\{\sum\limits_{w=2^{m-1-r_m}}^{2^{m-2}}A(w){M_{u(w)}}\right\}+
		\frac{1}{2}\cdot \text{exp}_2\left({m-1\choose \leq r_m}+{m-2\choose \leq r_{m}}\right)\notag\\
		&\stackrel{(d)}{\leq} \Bigg\{\sum\limits_{i=1}^{{r_m-1}}A\left(\left[2^{m-2-i}: 2^{m-1-i}\right]\right)\cdot \text{exp}_2\left({m-1\choose \leq r_m} - {m-2-i \choose \leq r_m}\right)\Bigg\}+  \frac{1}{2}\cdot\text{exp}_2\left({m-1\choose \leq r_m}+{m-2\choose \leq r_{m}}\right) \label{eq:sub1},
	\end{align}
where, for ease of reading, we write $A(w):=A_{m-1,r_m}(w)$, in inequalities (b)--(d). Further, $A([a:b])$ is shorthand for $\sum_{{w}=a}^{b}A(w)$. Here,
\begin{enumerate}[label = (\alph*)]
	\item follows from \eqref{eq:nw}, and
	\item holds due to the following fact: since the all-ones codeword $\mathbf{1}$ is present in ${\mathcal{C}_{m,+}} = \text{RM}(m-1,r_m)$, it implies that $A(w) = A(2^{m-1}-w)$, i.e., that the weight distribution of ${\mathcal{C}_{m,+}}$ is symmetric about {the} weight $w=2^{m-2}$. Therefore,
	\begin{equation}
		\label{eq:Aw}
		A\left(\left[2^{m-2}+1:2^{m-1}\right]\right)\leq \frac{1}{2}\cdot \text{exp}_2{m-1\choose \leq r_m}.
	\end{equation}%
\end{enumerate}
Next,
\begin{enumerate}
	\item[(c)] follows from the fact that for positive integers $n,k$ with $n>k$:
	\[
	{n-1\choose k}+{n-1\choose k-1} = {n\choose k}.
	\]
	Picking $n=m-1$, we obtain that for $k\leq m-2$,
	\[
	{m-1\choose k}-{m-2\choose k} = {m-2\choose k-1},
	\]
	and hence that
	\[
	{m-1\choose \leq r_m}-{m-2\choose \leq r_m} < {m-2\choose \leq r_m}, \ \text{and}
	\]
	
	\item[(d)] holds again by \eqref{eq:nw}.
\end{enumerate}
It is clear that a simplification of \eqref{eq:sub1} depends crucially on good upper bounds on the weight distribution function. 
Recall the notation ${R_{m,+}}:= \frac{{m-1\choose \leq r_m}}{2^{m-1}}$. We now use the result in Lemma \ref{lem:anup}, to get that for $1\leq i\leq r_m-1$,
\begin{align}
	A_{m-1,r_m}\left(\left[2^{m-2-i}: 2^{m-1-i}\right]\right)
	&\leq \sum_{w=2^{m-2-i}}^{2^{m-1-i}} \text{exp}_2\left(2^{m-1}\cdot {R_{m,+}}\cdot h_b\left(\frac{w}{2^{m-1}}\right)\right)\notag\\
	&\stackrel{(e)}{\leq} \sum_{w=2^{m-2-i}}^{2^{m-1-i}} \text{exp}_2\left(2^{m-1}\cdot {R_{m,+}}\cdot h_b(2^{-i})\right) \notag\\
	&= \text{exp}_2\left(2^{m-1}\cdot {R_{m,+}}\cdot h_b(2^{-i}) + o(2^m)\right):=B_i(m), \label{eq:sub2}
\end{align}
where inequality (e) above follows from the fact that the binary entropy function, $h_b(p)$, is increasing for arguments $p\in \left[0,\frac{1}{2}\right)$.
Therefore, putting \eqref{eq:sub2} back in \eqref{eq:sub1}, we get that
\begin{align}
	2^m{\mathsf{R}^{(1)}_{m}(\mathcal{C})}
	& = \log_2\left\lvert{\mathcal{H}^{(1)}_m} \right\rvert \notag\\
	&\leq {m-1 \choose \leq r_m}+
	 \log_2\Bigg\{\frac{1}{2}\cdot \text{exp}_2{m-2 \choose \leq r_m}+ \sum_{i=1}^{r_m-1}B_i(m)\cdot \text{exp}_2\left( -{m-2-i \choose \leq r_m}\right)\Bigg\} \notag\\
	  &= {m-1 \choose \leq r_m}+\log_2\left(\alpha(m)+\beta(m)\right),\label{eq:subs3}
\end{align}
where we define
\begin{align}
	\alpha(m)&:= \frac12 \cdot \text{exp}_2{m-2 \choose \leq r_m}, \ \text{and} \notag\\
	\beta(m)&:= \sum_{i=1}^{r_m-1}B_i(m)\cdot \text{exp}_2\left( -{m-2-i \choose \leq r_m}\right). \label{eq:beta}
\end{align}
In Appendix A, we show that for all $\delta\in (0,1)$ sufficiently small and for $m$ sufficiently large, we have
\begin{align}
\beta(m) \ &\le \ \text{exp}_2\left(2^{m-1}R\cdot\left(\frac34 +\delta\right)+o(2^m)\right) \notag\\
& =: \ \theta(m).
\label{eq:inter}
\end{align}
Now, using Lemma~\ref{lem:rate}, we have
\[
\lim_{m\to \infty} \frac{1}{2^m} {m-2 \choose \leq r_{m}} = \frac{R}{4}.
\]
Hence, for small $\delta\in (0,1)$, and for $m$ large enough,
\[
{m-2 \choose \leq r_{m}}\leq (1+\delta)\cdot 2^{{m}-2}\cdot R.
\]
Therefore, we get that
\begin{align}
\alpha(m) \ \leq \ \text{exp}_2\left((1+\delta)\cdot2^{{m}-2}\cdot R\right) \ =: \ \eta(m). \label{eq:inter2}
\end{align}
Now, substituting \eqref{eq:inter} and \eqref{eq:inter2} in \eqref{eq:subs3}, we get that 
\begin{align}
2^m{\mathsf{R}^{(1)}_{m}(\mathcal{C})}\leq {m-1 \choose \leq r_m}+\log_2\left(\eta(m)+\theta(m)\right).
\label{eq:inter3}
\end{align}
Putting everything together, we see that
\begin{align}
	{\mathsf{R}^{(1)}(\mathcal{C})} &= \limsup_{m\to \infty}{\mathsf{R}^{(1)}_{m}(\mathcal{C})}\notag\\
	&\leq \lim_{m\to \infty} \frac{1}{2^m} \left[{m-1 \choose \leq r_m}+\log_2\left(\eta(m)+\theta(m)\right)\right] \notag\\
	&\stackrel{(p)}{\leq} \lim_{m\to \infty} \frac{1}{2^m} \left[{m-1 \choose \leq r_m}+\log_2\left(2\cdot \theta(m)\right)\right] \notag\\
	&\stackrel{(q)} =\frac{R}{2} + \lim_{m\to \infty} \frac{1}{2^m} \cdot \log_2 \theta(m) \notag \\
	&= \frac{R}{2}+\frac{3R}{8}+\frac{\delta R}{2}\notag\\
	&= \frac{7R}{8}+\frac{\delta R}{2} \label{eq:final}.
\end{align}
Note that inequality (p) follows from the fact for any $R \in (0,1)$, $\eta(m)\leq \theta(m)$ holds for all sufficiently small $\delta > 0$. Further, equation (q) is valid because $\lim_{m\to \infty} \frac{1}{2^m} {m-1 \choose \leq r_m} = \frac{R}{2}$, by Lemma~\ref{lem:rate}. Since \eqref{eq:final} holds for all $\delta>0$ sufficiently small, we can let $\delta\to 0$, thereby obtaining that
\begin{align*}
	{\mathsf{R}^{(1)}(\mathcal{C})} \le \frac{7R}{8}.
\end{align*}
Moreover, since for any $m\geq 1$, we have that ${\mathcal{H}^{(1)}_m} \subseteq S_{(1,\infty)}^{(2^m)}$, with $\lim_{m\to \infty} \frac{\log_2 \left\lvert S_{(1,\infty)}^{(2^m)} \right \rvert}{2^m} = {\kappa_1}$, the inequality ${\mathsf{R}^{(1)}(\mathcal{C})} \leq {\kappa_1}$ holds.
\end{proof}

{\subsection{Linear $(d,\infty)$-RLL Constrained Subcodes Under Alternative Coordinate Orderings}}
\label{sec:perm}

Throughout the previous subsections, we have assumed that the coordinates of the RM codes are ordered according to the standard lexicographic ordering. In this subsection, we shall address the question of whether alternative coordinate orderings allow for larger rates of \emph{linear} $(d,\infty)$-RLL constrained subcodes of RM codes of rate $R$, as compared to the upper bound of $\frac{R}{d+1}$ derived for RM codes under the lexicographic coordinate ordering, in Theorem \ref{thm:rmlinub}.

First, we consider a Gray ordering of coordinates of the code RM$(m,r)$. In such an ordering, consecutive coordinates $\mathbf{b} = (b_1,\ldots,b_m)$ and $\mathbf{b}^\prime = (b_1^\prime,\ldots,b_m^\prime)$ are such that for some bit index $i\in [m]$, $b_i\neq b_i^\prime$, but $b_j = b_j^\prime$, for all $j\neq i$. In words, consecutive coordinates in a Gray ordering, when represented as $m$-tuples, differ in exactly one bit index. Note that there are multiple orderings that satisfy this property. Indeed, Gray orderings correspond to Hamiltonian paths (see, for example, \cite{diestel}, Chap. 10) on the $m$-dimensional unit hypercube. 

In what follows, we work with a fixed sequence of Gray orderings defined as follows: let $(\pi_{\text{G},m})_{m\geq 1}$ be a sequence of permutations, with $\pi_{\text{G},m}: [0:2^m-1]\rightarrow [0:2^m-1]$, for any $m\geq 1$, having the property that \textbf{B}$(\pi_{\text{G},m}(j))$ differs from \textbf{B}$(\pi_{\text{G},m}(j-1))$ in exactly one bit index, for any $j\in [1:2^m-1]$.

Now, as before, fix a sequence of codes $\left\{{\hat{\mathcal{C}}_m} = \text{RM}(m,r_m)\right\}_{m\geq 1}$ of rate $R\in (0,1)$. Note that for large enough $m$, we have that $r_m\leq m-1$. We again use the notation $K_m:={m\choose \leq r_m}$ to denote the dimension of ${{\hat{\mathcal{C}}_m}}$. We then define the sequence of Gray-ordered RM codes $\left\{{\mathcal{C}_m^{\text{G}}}\right\}_{m\geq 1}$, via
\begin{align*}{\mathcal{C}_{m}^\text{G}}:= \big\{(c_{\pi_{\text{G},m}(0)},c_{\pi_{\text{G},m}(1)},\ldots,c_{\pi_{\text{G},m}(2^m-1)}):(c_0,c_1,\ldots,c_{2^m-1})\in {\hat{\mathcal{C}}_m}\big\}.\end{align*}
Clearly, the sequence of codes $\left\{{\mathcal{C}_m^{\text{G}}}\right\}_{m\geq 1}$ is also of rate $R\in (0,1)$. In order to obtain an upper bound on the rate of the largest linear $(d,\infty)$-RLL subcode of the code ${\mathcal{C}_m^{\text{G}}}$, as in Theorem \ref{thm:rmlinub}, we shall again work with the information set $\mathcal{I}_{m,r_m}$ (see \eqref{eq:Imr} for the definition of the set $\mathcal{I}_{m,r}$ of coordinates of RM$(m,r)$). 

Analogous to \eqref{eq:gammam}, we define the set
\begin{align}
	\label{eq:gammamgray}
	\Gamma_{m}^\text{G}:=\{s\in [0:2^m-1]:\ \textbf{B}(\pi_{\text{G},m}(s))\in \mathcal{I}_{m,r_m},\ \text{but}\  \textbf{B}(\pi_{\text{G},m}(s+1))\notin \mathcal{I}_{m,r_m}\}.
\end{align}
{to be the collection of right end-points of runs of coordinates in the information set $\mathcal{I}_{m,r_m}$, where the coordinates are now ordered according to the fixed Gray ordering.} We use the convention that when $s=2^m-1$,  $\textbf{B}(\pi_{\text{G},m}(s+1))$ is defined to be a dummy symbol `$\times$' that does not belong to $\mathcal{I}_{m,r_m}$. The number of such runs is $\left \lvert \Gamma_{m}^\text{G}\right \rvert$. Note now that unlike in \eqref{eq:gammam}, it is possible that $2^{m}-1\in \Gamma_{m}^\text{G}$, since, depending on the specific Gray coordinate ordering chosen, it is possible that $\textbf{B}(\pi_{\text{G},m}(2^m-1)) \in \mathcal{I}_{m,r_m}$.

We now state and prove a lemma analogous to Lemma \ref{lem:runs}:
\begin{lemma}
	\label{lem:runsgray}
	Under a fixed Gray ordering defined by $\pi_m^\text{G}$, the inequality $\left \lvert\Gamma_{m,r_m}^\text{G} \right \rvert \leq {m\choose r_m+1}+1$ holds, for $0\leq r_m\leq m-1$.
\end{lemma}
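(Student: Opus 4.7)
The plan is to exploit the fact that in a Gray ordering, consecutive positions have binary representations that differ in exactly one bit, and hence their Hamming weights differ by exactly $\pm 1$. This is the key rigidity that will restrict how many runs can end.

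First I would unwind the definition of $\Gamma_{m,r_m}^\text{G}$. Fix any $s \in \Gamma_{m,r_m}^\text{G}$ with $s < 2^m-1$. By definition, $\mathbf{B}(\pi_{\text{G},m}(s)) \in \mathcal{I}_{m,r_m}$ while $\mathbf{B}(\pi_{\text{G},m}(s+1)) \notin \mathcal{I}_{m,r_m}$, so $\text{wt}(\mathbf{B}(\pi_{\text{G},m}(s))) \leq r_m$ and $\text{wt}(\mathbf{B}(\pi_{\text{G},m}(s+1))) \geq r_m+1$. Since a Gray step flips a single bit, the two weights differ by exactly one, forcing $\text{wt}(\mathbf{B}(\pi_{\text{G},m}(s))) = r_m$ and $\text{wt}(\mathbf{B}(\pi_{\text{G},m}(s+1))) = r_m+1$. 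So every such $s$ corresponds to a ``weight-increase'' step from a weight-$r_m$ vertex to a weight-$(r_m+1)$ vertex in the Hamiltonian path on $\{0,1\}^m$ determined by $\pi_{\text{G},m}$.

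Next I would bound the number of such weight-increase steps. Since $\pi_{\text{G},m}$ is a permutation, each vertex of $\{0,1\}^m$ is visited exactly once; in particular, each vertex $\mathbf{b}'$ with $\text{wt}(\mathbf{b}') = r_m+1$ has exactly one predecessor along the Hamiltonian path, and only this predecessor can give rise to a step terminating at $\mathbf{b}'$. Therefore, the map sending $s \mapsto \mathbf{B}(\pi_{\text{G},m}(s+1))$ injects the subset of $\Gamma_{m,r_m}^\text{G}$ with $s<2^m-1$ into the set of weight-$(r_m+1)$ vectors of $\{0,1\}^m$, which has size $\binom{m}{r_m+1}$. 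Finally, the remaining endpoint $s = 2^m-1$ can contribute at most one element to $\Gamma_{m,r_m}^\text{G}$ (by the convention that $\mathbf{B}(\pi_{\text{G},m}(2^m))$ is taken to be the dummy symbol $\times \notin \mathcal{I}_{m,r_m}$). Adding these two contributions gives the desired bound
\[
\left\lvert \Gamma_{m,r_m}^\text{G} \right\rvert \ \leq \ \binom{m}{r_m+1} + 1.
\]

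There is no real obstacle here; the proof is essentially a pigeonhole-style argument riding on the ``weight changes by $\pm 1$'' property of Gray orderings. The only point requiring a little care is handling the boundary case $s=2^m-1$ correctly, which is absorbed in the additive $+1$ of the bound.
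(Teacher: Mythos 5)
Your proof is correct and follows essentially the same route as the paper's: both use the fact that a Gray step changes the weight by exactly one to force $\text{wt}(\mathbf{B}(\pi_{\text{G},m}(s+1))) = r_m+1$, then count such endpoints against the $\binom{m}{r_m+1}$ weight-$(r_m+1)$ vectors, adding $1$ for the boundary case $s=2^m-1$. Your explicit injectivity observation (each weight-$(r_m+1)$ vertex has a unique predecessor on the Hamiltonian path) just makes precise what the paper leaves implicit.
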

\begin{proof}
	For any $s\in [0:2^m-2]$ that belongs to $\Gamma_{m}^\text{G}$, we have wt$(\mathbf{B}(\pi_{\text{G},m}(s)))\leq r_m$, but wt$(\mathbf{B}(\pi_{\text{G},m}(s+1)))\geq r_m+1$. In fact, since consecutive coordinates differ in exactly one bit index in the Gray ordering, it must be the case that wt$(\mathbf{B}(\pi_{\text{G},m}(s+1)))= r_m+1$. Thus, the number of integers $s\in [0:2^m-2]$ belonging to $\Gamma_{m}^\text{G}$ is bounded above by ${m\choose r_m+1}$, which is the number of coordinates whose binary representation has weight exactly $r_m+1$. In order to account for the possibility that $2^m-1 \in \Gamma_{m}^\text{G}$, we state the overall upper bound on the number of runs as ${m\choose r_m+1}+1$.
\end{proof}

With Lemma \ref{lem:runsgray} established, we now {prove} Theorem \ref{thm:grayinf}. {Our proof strategy is similar to that for Theorem \ref{thm:rmlinub}: for a given code ${\mathcal{C}_m^{\text{G}}}$, we first calculate a lower bound on the number of $(d+1)$-tuples of consecutive coordinates, ordered according to the fixed Gray ordering, in the information set $\mathcal{I}_{m,r_m}$. Via Proposition \ref{prop:linub}, this gives us an upper bound on the rate of any linear $(d,\infty)$-RLL subcode of ${\mathcal{C}_m^{\text{G}}}$. We then let the blocklength go to infinity, to obtain our desired result.}
\begin{proof}[Proof of Theorem \ref{thm:grayinf}]
	Similar to the proof of Theorem \ref{thm:rmlinub}, the calculation of the overall number, $t_m^\text{G}$, of disjoint $(d+1)$-tuples of consecutive coordinates in $\mathcal{I}_{m,r_m}$, for large enough $m$, results in
	\begin{align*}
		t_m^\text{G} \geq \frac{K_m}{d+1}-{m\choose r_m+1}-1.
	\end{align*}
	{Note that here too, as in the proof of Theorem \ref{thm:rmlinub}, we use the notation $K_m$ to denote the dimension of ${{\mathcal{C}_m^\text{G}}}$.} Again, using Proposition \ref{prop:linub}, it follows that the dimension of any linear $(d,\infty)$-RLL subcode ${\overline{\mathcal{H}}_{\text{G}}^{(d)}} \subseteq {{\mathcal{C}_m^\text{G}}}$ is at most $K_m-dt_m^\text{G}$. Now, recalling the definition of ${\overline{\mathsf{R}}^{(d)}(\mathcal{C}^\text{G})}$, from equation \eqref{eq:Rubgray}, we see that
	\begin{align*}
		{\overline{\mathsf{R}}^{(d)}(\mathcal{C}^\text{G})}&\leq \limsup_{m\to \infty}\frac{K_m-dt_m^\text{G}}{2^m}\\
		&\leq \limsup_{m\to \infty}\frac{K_m-\frac{dK_m}{d+1}+d\cdot {m\choose r_m+1}+d}{2^m}\\
		&\leq \lim_{m\to \infty}\frac{\frac{K_m}{d+1}+d\cdot {m\choose \left \lfloor \frac{m}{2}\right \rfloor}+d}{2^m}\\
		&= \frac{R}{d+1},
	\end{align*}
	where the last equality holds for reasons similar to those in the proof of Theorem \ref{thm:rmlinub}. 
\end{proof}

Next, we shift our attention to $\pi$-ordered RM codes $\left\{{\mathcal{C}_m^\pi}\right\}_{m\geq 1}$, defined by the sequence of arbitrary permutations $(\pi_m)_{m\geq 1}$, with $\pi_m: [0:2^m-1]\to [0:2^m-1]$ (see the discussion preceding Theorem \ref{thm:genperminf} in Section \ref{sec:main}). {Note that the code ${\mathcal{C}_m^\pi}$ has the same dimension $K_m$. as the code ${\hat{\mathcal{C}}_m = \text{RM}(m,r_m)}$.} Also recall the definition of ${\overline{\mathcal{H}}_{\pi}^{(d)}}$ being the largest \emph{linear} $(d,\infty)$-RLL subcode of ${\mathcal{C}_m^\pi}$.

We shall now prove Theorem \ref{thm:genperminf}. {The proof goes as follows: we first show that for large $m$, for almost all coordinate permutations $\pi_m$, the first block of $K_m(1+\alpha_m)$ coordinates in the $\pi$-ordered code ${\mathcal{C}_m^\pi}$ contains an information set $\mathcal{J}_{m,r_m}$, where $\alpha_m \to 0$ as $m \to \infty$. This then allows us to arrive at a lower bound on the number of disjoint $(d+1)$-tuples of consecutive coordinates in this information set $\mathcal{J}_{m,r_m}$. Again, using Proposition \ref{prop:linub}, we arrive at the desired upper bound on the rate of any $(d,\infty)$-RLL constrained linear subcode of ${\mathcal{C}_m^\pi}$, for almost all permutations $\pi_m$.}
\begin{proof}[Proof of Theorem \ref{thm:genperminf}]
	We wish to prove that for ``most'' orderings, and for large $m$, the rate of ${\overline{\mathcal{H}}_{\pi}^{(d)}}$ is bounded above by $\frac{R}{d+1}+{\epsilon_m}$, where ${\epsilon_m} \xrightarrow{m\to \infty}0$.
	
	To this end, we first make the observation that the sequence of RM codes $\left\{{\hat{\mathcal{C}}_m} = \text{RM}(m,r_m)\right\}_{m\geq 1}$ achieves a rate $R$ over the BEC, under block-MAP decoding too (see \cite{kud1} and \cite{kud3}). Hence, for large enough $m$, the (linear) RM code ${\hat{\mathcal{C}}_m}$ can correct erasures that are caused by a BEC$(1-R-\gamma_m)$, with $\gamma_m>0$, and $\gamma_m\xrightarrow{m\to \infty}0$. This then means that for large $m$, ${\hat{\mathcal{C}}_m}$ can correct $2^m(1-R-\gamma_m)-c\cdot \sqrt{2^m (1-R-\gamma_m)}$ erasures, with high probability (see Lemma 15 of \cite{abbe1}), for $c>0$ suitably small. Finally, from Corollary 18 of \cite{abbe1}, it then holds that for large enough $m$, any collection of $2^m R(1+\beta_m)$ columns of $G_\text{Lex}(m,r_m)$, chosen uniformly at random, must have full row rank, $K_m{:= {m\choose \leq r_m}}$, with probabilty $1-\delta_m$, with $\beta_m, \delta_m>0$ and $\beta_m,\delta_m\xrightarrow{m\to \infty}0$.

	In other words, the discussion above implies that for large enough $m$, a collection of $K_m(1+\alpha_m)$ coordinates, chosen uniformly at random, contains an information set, with probabilty $1-\delta_m$, where, again, $\alpha_m>0$, with $\alpha_m\xrightarrow{m\to \infty}0$. An equivalent view of the above statement is that for large enough $m$, for a $1-\delta_m$ fraction of the possible permutations $\pi_m: [0:2^m-1]\to [0:2^m-1]$, the first block of $K_m(1+\alpha_m)$ coordinates of the code ${\mathcal{C}_{m}^\pi}$, contains an information set, ${\mathcal{J}}_{m,r_m}${, with dim$({\mathcal{C}_{m}^\pi}) = K_m$}. Now, within these ``good'' permutations, since $|{\mathcal{J}}_{m,r_m}| = K_m$, it follows that the number of runs, $\left\lvert\Gamma_{m}^\pi\right \rvert$, of consecutive coordinates that belong to ${\mathcal{J}}_{m,r_m}$, obeys $\left\lvert\Gamma_{m}^\pi\right \rvert \leq K_m\alpha_m+1$, with $\Gamma_{m}^\pi$ defined similar to equation \eqref{eq:gammamgray}. This is because the number of runs, $\left\lvert\Gamma_{m}^\pi\right \rvert$, equals the number of coordinates $s$,  such that $s\in {\mathcal{J}}_{m,r_m}$, but $s+1\notin {\mathcal{J}}_{m,r_m}$. Hence, accounting for the possibility that the last coordinate in the $K_m(1+\alpha_m)$-block belongs to ${\mathcal{J}}_{m,r_m}$, we get that the number of such coordinates $s$ is at most $K_m(1+\alpha_m)+1-K_m$, which equals $K_m\alpha_m+1$.
	
	Hence, the overall number, $t_m^\pi$, of disjoint $(d+1)$-tuples of consecutive coordinates in $\mathcal{J}_{m,r}$, satisfies (see the proof of Theorem \ref{thm:rmlinub})
	\[
	t_m^\pi\geq \frac{K_m}{d+1}-K_m\alpha_m-1,
	\]
	for a $1-\delta_m$ fraction of permutations $\pi_m$. Again, applying Proposition \ref{prop:linub}, we have that for a $1-\delta_m$ fraction of permutations, with $\delta_m\xrightarrow{m\to \infty}0$, the rate of the largest $(d,\infty)$-RLL subcode obeys, for $m$ large,
	\begin{align*}
		\frac{\log_2\left \lvert {\overline{\mathcal{H}}_{\pi}^{(d)}}\right\rvert}{2^m}&\leq \frac{K_m-dt_m^\pi}{2^m}\\
		&\leq \frac{K_m-\frac{dK_m}{d+1}+dK_m\alpha_m+d}{2^m}\\
		&= \frac{R}{d+1}+\epsilon_m,
	\end{align*}
	where $\epsilon_m\xrightarrow{m\to \infty}0$, {with the last equality following from the fact that $\lim_{m\to \infty} \frac{K_m}{2^m} = R$. The theorem thus follows.}
\end{proof}
\section{A Two-Stage Constrained Coding Scheme}
\label{sec:cosets}
The results summarized in the previous sections provide lower and upper bounds on rates of $(d,\infty)$-RLL constrained {subcodes} of RM codes of rate $R$. In particular, Theorem \ref{thm:rm} identifies linear subcodes of RM codes of rate $2^{-\left \lceil \log_2(d+1)\right \rceil} \cdot R$, and for the special case when $d=1$, Theorem \ref{thm:nonlin} improves upon this lower bound, for a range of $R$ values, by proving the existence of (non-linear) subcodes of rate at least $\max\left(0,R-\frac38\right)$. Furthermore, using the bit-MAP or block-MAP decoders corresponding to the parent RM codes, we observe that these rates are achievable over $(d,\infty)$-RLL input-constrained BMS channels, so long as $R<C$, where $C$ is the capacity of the unconstrained BMS channel. In this section, we provide another explicit construction of $(d,\infty)$-RLL constrained codes via a concatenated (or two-stage) coding scheme, the outer code of which is a systematic RM code of rate $R$, and the inner code of which employs the $(d,\infty)$-RLL constrained subcodes identified in Theorem \ref{thm:rm}. The strategy behind our coding scheme is very similar to the ``reversed concatenation'' scheme that is used to limit error propagation while decoding constrained codes over noisy channels (see \cite{Bliss, Mansuripur, Imm97, FC98} and Section 8.5 in \cite{Roth}). However, to keep the exposition self-contained, we describe the scheme from first principles here. We then derive a rate lower bound for this scheme (see Theorem \ref{thm:rmcosets}), and prove that this lower bound is achievable, under \emph{block-MAP} decoding over $(d,\infty)$-RLL input-constrained BMS channels, if $R<C$. Hence, the lower bound given in Theorem \ref{thm:rmcosets} is achievable, when $R\in (0,1)$ is replaced by $C$. {Note that here we use the fact that RM codes achieve any rate $R<C$, under block-MAP decoding (see \cite{abbesandon}).}

We now describe our two-stage coding scheme. Fix a rate $R\in (0,1)$ and any sequence $\left\{{\hat{\mathcal{C}}_m} = \text{RM}(m,r_m)\right\}_{m\geq 1}$ of RM codes of  rate $R$. We interchangeably index the coordinates of any codeword in ${\hat{\mathcal{C}}_m}$ by $m$-tuples in the lexicographic order, and by integers in $[0:2^m-1]$. Recall, from Lemma \ref{lem:infoset} and \eqref{eq:Imr}, that the set $\mathcal{I}_{m,r_m}:= \{\mathbf{b} = (b_1,\ldots,b_m)\in \mathbb{F}_2^m: \text{wt}(\mathbf{b})\leq r_m\}$ is an information set of ${\hat{\mathcal{C}}_m}$. For the remainder of this section, we let $m$ be a large positive integer.

{We first set up some notation.} Let $K_m=$ dim$\left({\hat{\mathcal{C}}_m}\right) = {m\choose \leq r_m}$ and $N_m:=2^m$ and note that $N_m-K_m = (1+\beta_m)N_m(1-R)$, where $\beta_m$ is a correction term that vanishes as $m\to \infty$.
For the outer code, or the first stage, of our coding scheme, we shall work with an RM code in systematic form, in which the first $K_m$ positions form the information set $\mathcal{I}_{m,r_m}$. Consider any permutation $\pi_m: [0:N_m-1]\to [0:N_m-1]$ with the property that $\pi_m([0:K_m-1]) = \mathcal{I}_{m,r_m}$, where, for a permutation $\sigma$, and a set $\mathcal{A}\subseteq [0:N_m-1]$, we define the notation $\sigma(\mathcal{A}):=\{\sigma(i):i\in \mathcal{A}\}$. We then define the equivalent systematic RM code ${\mathcal{C}^{\pi}_m}$ as
\begin{align*}
{{\mathcal{C}}^{\pi}_m} = \big\{(c_{\pi_m(0)},c_{\pi_m(1)},\ldots,c_{\pi_m(N_m-1)}):(c_0,c_1,\ldots,c_{N_m-1})\in {\hat{\mathcal{C}}_m}\big\}.
\end{align*}
We let $G^\pi_m$ be a \emph{systematic} generator matrix for ${\mathcal{C}^{\pi}_m}$. We then recall the definition of the subcode ${\mathcal{C}_n^{(d)}}$ of the code ${\mathcal{C}_n}$ (see \eqref{eq:rmval} and the proof of Theorem \ref{thm:rm}). We let a generator matrix of the linear code ${\mathcal{C}_n^{(d)}}$ be denoted by $G_{n}^{(d)}$. 

\subsection*{{Encoding}}

{Our encoding algorithm is shown as Algorithm \ref{alg:coset}, where the values of the parameters $L$ and $n^\star$ will be specified later.}
As mentioned earlier, our $(d,\infty)$-RLL constrained coding scheme comprises two stages: an outer encoding stage (\textbf{E1}) and an inner encoding stage (\textbf{E2}) as given below.
\begin{enumerate}
	\item[(\textbf{E1})] Pick a $(d,\infty)$-RLL constrained $K_m$-tuple,  {$\mathbf{w}$}. Encode {$\mathbf{w}$} into a codeword $\mathbf{c} \in {\mathcal{C}_m^\pi}$, using the systematic generator matrix $G_m^\pi$: $ \mathbf{c} = {\mathbf{w}} \cdot G_m^\pi$. Note that $c_0^{K_m-1} = {\mathbf{w}}$ is $(d,\infty)$-RLL constrained. {This is shown in Steps 2 and 3 in Algorithm \ref{alg:coset}. Note that choosing an RLL constrained word in Step (\textbf{E1}) above can be accomplished using well-known constrained encoders (see, for example, \cite{adler} and Chapters 4 and 5 of \cite{Roth}), of rates arbitrarily close to the noiseless capacity ${\kappa_d}$ of the $(d,\infty)$-RLL constraint.}
	\item[(\textbf{E2})] Encode the last $N_m-K_m$ bits, $c_{K_m}^{N_m-1}$, of $\mathbf{c}$, using $(d,\infty)$-RLL constrained codewords of RM codes of rate $R$, {as shown in Steps 5--7 in Algorithm~\ref{alg:coset}. In what follows, we elaborate on the choices of $L$ and $n^\star$ in this part of the algorithm.}
\end{enumerate}
\begin{algorithm}[t]
	\caption{Construction of $(d,\infty)$-RLL constrained code $\mathcal{C}^{\text{conc}}_m$}
	\label{alg:coset}
	\begin{algorithmic}[1]	
		\Procedure{Coding-Scheme}{$G^{\pi}_m$, $G_{n^\star}^{(d)}$}       
		\State Pick a $(d,\infty)$-RLL constrained $K_m$-tuple {$\mathbf{w}$}
		\State Obtain $\mathbf{c} \in {\mathcal{C}_m^\pi}$ as $\mathbf{c} = {\mathbf{w}} \cdot G_m^\pi$.
		\State Set $\mathbf{x}_1:= {\mathbf{w}}$.
		\State Divide $c_{K_m}^{N_m-1}$ into $L$ equal-length blocks, ${\mathbf{c}}_1,\ldots, {\mathbf{c}}_L$.
		\For{$i=1:L$}
		\State Set $\mathbf{x}_{2,i} = {\mathbf{c}}_i \cdot G_{n^\star}^{(d)}$.
		\EndFor
		\State Set $\mathbf{x}_2=\mathbf{x}_{2,1}\ldots \mathbf{x}_{2,L}$.
		\State Transmit $\mathbf{x} = \mathbf{x}_1\mathbf{x}_2$.
		\EndProcedure
		
	\end{algorithmic}
\end{algorithm} 

{The main idea is to encode $c_{K_m}^{N_m-1}$ using the family of linear $(d,\infty)$-RLL subcodes $\bigl\{\mathcal{C}_n^{(d)}\bigr\}_{n\geq 1}$ of rate-$R$ RM codes, given by Theorem \ref{thm:rm}. Recall that these subcodes achieve rate $2^{-\left \lceil \log_2(d+1)\right \rceil}\cdot R$ as $n \to \infty$. So, encoding the $N_m-K_m$ bits in $c_{K_m}^{N_m-1}$ using a code $\mathcal{C}_n^{(d)}$ will result in an encoded blocklength of roughly $\bigl(\frac{N_m-K_m}{R}\bigr) \, 2^{\left \lceil \log_2(d+1)\right \rceil}$ bits. However, the codes $\mathcal{C}_n^{(d)}$ have blocklength equal to $2^n$, so the $\bigl(\frac{N_m-K_m}{R}\bigr) \, 2^{\left \lceil \log_2(d+1)\right \rceil}$ encoded bits must be formed by concatenating an integer number of codewords of blocklength $2^n$. In other words, $\bigl(\frac{N_m-K_m}{R}\bigr) \, 2^{\left \lceil \log_2(d+1)\right \rceil}$ must be an integer multiple of a power of $2$. Writing $N_m - K_m \approx N_m(1-R)$ and recalling that $N_m = 2^m$, we observe that $\bigl(\frac{N_m-K_m}{R}\bigr) \, 2^{\left \lceil \log_2(d+1)\right \rceil}$ is (approximately) expressible as $\bigl(\frac{1-R}{R}\bigr) 2^{m + \left \lceil \log_2(d+1)\right \rceil}$. For this to be an integer multiple of a power of $2$, $\frac{1-R}{R}$ should be well-approximated by a dyadic rational of the form $\frac{L}{2^\tau}$. Then, choosing $n^{\star} = m-\tau + \left \lceil \log_2(d+1)\right \rceil$, we obtain that $\bigl(\frac{1-R}{R}\bigr) 2^{m + \left \lceil \log_2(d+1)\right \rceil}$ is (approximately) equal to $L \cdot 2^{n^\star}$. Thus, it should be possible to encode the $N_m-K_m$ bits $c_{K_m}^{N_m-1}$ by first chopping it up into $L$ equal-length blocks, and encoding each block using the code $\mathcal{C}_{n^\star}^{(d)}$. We formalize this argument below.}

{Pick an arbitrarily small $\epsilon > 0$, and choose large positive integers $m_0$ and $\tau$, and a positive integer $L$, such that
\begin{align}
	\label{eq:L}
	\frac{(1-R)(1+\beta_m)}{R(1-\epsilon)} \subseteq \left[\frac{L-1}{2^\tau}, \frac{L}{2^\tau}\right],\ \text{for all $m\geq m_0$}.
\end{align}
Note that $L$ and $\tau$, though large, are constants.
We then set $n^\star:= m-\tau+\left \lceil \log_2(d+1)\right \rceil$. Now, partition the $N_m-K_m$ bits, $c_{K_m}^{N_m-1}$, into $L$ blocks $\mathbf{c}_1,\ldots,\mathbf{c}_L$, each $\mathbf{c}_i$ having $\frac{N_m-K_m}{L}$ bits\footnote{For ease of description, we assume that $m$ is such that $L$ divides $N_m-K_m$. The general case can be handled by appending at most $L-1$ $0$s at the end of the $N_m-K_m$ bits, so that the overall length is divisible by $L$, thereby giving rise to the same lower bound in {Theorem \ref{thm:rmcosets}}.}. As indicated by Step~7 of Algorithm~\ref{alg:coset}, to encode each $\mathbf{c}_i$, $i = 1,\ldots,L$, we use a code $\mathcal{C}_{n^\star}^{(d)}$ from the family of linear $(d,\infty)$-RLL RM subcodes $\bigl\{{\mathcal{C}_n^{(d)}}\bigr\}_{n\geq 1}$ of rate $2^{-\left \lceil \log_2(d+1)\right \rceil}\cdot R$ given by Theorem \ref{thm:rm}. We choose $n^\star$ large enough (by taking $m$ large enough) that the rate of the subcode ${\mathcal{C}_{n^\star}^{(d)}}$ is at least $2^{-\left \lceil \log_2(d+1)\right \rceil}\cdot R (1-\epsilon)$.
With this, the dimension of the code $\mathcal{C}_{n^\star}^{(d)}$ is 
\begin{equation}
\dim\left(\mathcal{C}_{n^\star}^{(d)}\right) \ge 
2^{n^\star-\left \lceil \log_2(d+1)\right \rceil}\cdot R (1-\epsilon)
= 2^{m-\tau} \cdot R(1-\epsilon).
\label{ineq:dimCnd}
\end{equation}
From \eqref{eq:L}, we find that $2^{-\tau}\cdot R(1-\epsilon) \ge \frac{1}{L} (1-R)(1+\beta_m)$, so that carrying on from \eqref{ineq:dimCnd}, we have
$$
\dim\left(\mathcal{C}_{n^\star}^{(d)}\right)\ge \frac{1}{L} N_m (1-R)(1+\beta_m) = \frac{N_m-K_m}{L}.
$$
This means that each block $\mathbf{c}_i$ can indeed be encoded into a unique codeword of $\mathcal{C}_{n^\star}^{(d)}$, as encapsulated in Step~7 of Algorithm~\ref{alg:coset}.\footnote{{It may be necessary to pad $\mathbf{c}_i$ with some extra $0$s to make its blocklength match the dimension of $\mathcal{C}_{n^\star}^{(d)}$.}}
Thus, each $\mathbf{c}_i$ is encoded into a codeword of $\mathcal{C}_{n^\star}^{(d)}$, having blocklength $N_{\text{part}}:=2^{n^\star}$. Hence, the total encoded blocklength for all the blocks $\mathbf{c}_1,\ldots,\mathbf{c}_L$ is $N_{\text{part}}\cdot L$, and the total number of channel uses for transmission (see Step 9 of Algorithm \ref{alg:coset}) is $N_{\text{tot}}:=K_m+N_{\text{part}}\cdot L$. 
}

{Moreover, we note from the construction of ${\mathcal{C}_n^{(d)}}$ in the proof of Theorem \ref{thm:rm} in Section \ref{sec:rm} that the first $d$ symbols in $\mathbf{x}_{2,i}$ are $0$s, for all $i\in [L]$. Hence, the $(d,\infty)$-RLL input constraint is also satisfied at the boundaries of the concatenations in Steps 8 and 9 of Algorithm 1.}

\subsection*{{Decoding}}

Since we intend transmitting the $(d,\infty)$-RLL constrained code $\mathcal{C}^{\text{conc}}_m$ over a noisy BMS channel, we now specify the decoding strategy. Let $y_0^{N_\text{tot}-1}$ be the vector of symbols received by the decoder. Decoding is, as encoding was, a two-stage procedure. In the first stage, the {block}-MAP decoder of the code ${\mathcal{C}_{n^\star}} \supseteq {\mathcal{C}_{n^\star}^{(d)}}$, is used for each of the $L$ parts $\mathbf{c}_1,\ldots,\mathbf{c}_L$, to obtain the estimate $\hat{c}_{K_m}^{N_m-1}:=(\hat{c}_{K_m},\ldots,\hat{c}_{N_m-1})$ of the last $N_m-K_m$ bits $c_{K_m}^{N_m-1}$. In the second stage, the {block}-MAP decoder of the systematic RM code ${\mathcal{C}}^{\pi}_m(R)$ takes as input the vector $y_0^{K_m-1}\hat{c}_{K_m}^{N_m-1}$, and produces as (the final) estimate, $\hat{c}_0^{K_m-1} := (\hat{c}_0\ldots,\hat{c}_{K_m-1})$, of the information bits $c_0^{K_m-1} = \mathbf{w}$. The decoding strategy is summarized below.

\begin{itemize}
	\item [(\textbf{D1})] Decode each of the $L$ parts $\mathbf{c}_1,\ldots,\mathbf{c}_L$, using the {block}-MAP decoder of ${\mathcal{C}_{n^\star}}$, to obtain the estimate $\hat{c}_{K_m}^{N_m-1}$.
	\item [(\textbf{D2})] Using the vector $y_0^{K_m-1}\hat{c}_{K_m}^{N_m-1}$ as input to the {block}-MAP decoder of ${\mathcal{C}_m^\pi}$, obtain the estimate $\hat{c}_0^{K_m-1}$ of the information bits ${\mathbf{w}}$.
\end{itemize} 
%
%

A rate lower bound of the coding scheme in Algorithm \ref{alg:coset} and a lower bound on the probability of correct decoding is provided in the proof of Theorem \ref{thm:rmcosets} below.
\begin{proof}[Proof of Theorem \ref{thm:rmcosets}]
	Consider the code $\mathcal{C}_m^\text{conc}$, given in Algorithm 1, for large values of $m$, and the decoding procedure given in (\textbf{D1})--(\textbf{D2}).
	By picking $m$ large enough (and hence $K_m$ large enough), we note that for Step 2 of Algorithm 1, there exist constrained coding schemes (see \cite{adler} and Chapters 4 and 5 of \cite{Roth}) of rate ${\kappa_d}- \alpha_m$, for $\alpha_m>0$, with $\alpha_m\xrightarrow{m\to \infty} 0$. Hence, we see that for large $m$, the number of possible $K_m$-tuples ${\mathbf{w}}$ that can be picked, equals $2^{K_m({\kappa_d}-\alpha_m)}$. Since the codeword $\mathbf{c}$ and the words $\mathbf{x}_1$ and $\mathbf{x}_2$ are determined by $\mathbf{w}$, we have that for large $m$, the rate of the code $\mathcal{C}_m^{\text{cos}}$ obeys
	\begin{align*}
	\label{eq:ratecalc}
	\text{rate}(\mathcal{C}_m^{\text{conc}}) \geq \frac{\log_2\left(2^{K_m({\kappa_d}-\alpha_m)}\right) }{K_m+N_{\text{part}}\cdot L},
\end{align*}
where the denominator, $K_m+N_{\text{part}}\cdot L$, is the total number of channel uses. The following statements then hold true:
\begin{align*}
	\text{rate}(\mathcal{C}_m^{\text{conc}}) &\geq \frac{\log_2\left(2^{K_m({\kappa_d}-\alpha_m)}\right)}{K_m+N_{\text{part}}\cdot L}\\
	&\stackrel{(a)}{=} \frac{\frac{\left({\kappa_d}-\alpha_m\right)\cdot{K_m}}{N_m}}{\frac{K_m}{N_m}+L\cdot 2^{-\tau+\left \lceil \log_2(d+1)\right \rceil}}\\
	&\stackrel{(b)}{\geq} \frac{\frac{\left({\kappa_d}-\alpha_m\right)\cdot{K_m}}{N_m}}{\frac{K_m}{N_m}+2^{\left \lceil \log_2(d+1)\right \rceil}\cdot \left(\frac{(1-R)(1+\beta_m)}{R(1-\epsilon)}+2^{-\tau}\right)},
\end{align*}
where (a) follows from the definition of {$N_{\text{part}}$} and (b) holds due to equation \eqref{eq:L}, with $L\cdot 2^{-\tau}\leq \left(\frac{(1-R)(1+\beta_m)}{R(1-\epsilon)}+2^{-\tau}\right)$. Hence, by taking $\liminf_{m\to \infty}$ on both sides of the inequality (b) above, we get
\begin{align*}
	\liminf_{m\to \infty} \text{rate}(\mathcal{C}_m^{\text{cos}}) &\geq \frac{{\kappa_d}\cdot R}{R+2^{\left \lceil \log_2(d+1)\right \rceil}\cdot \left(\frac{1-R}{R(1-\epsilon)}\right)+2^{\left \lceil \log_2(d+1)\right \rceil-\tau}}\\
	&= \frac{(1-\epsilon)\cdot {\kappa_d}\cdot R^2\cdot 2^{-\left \lceil \log_2(d+1)\right \rceil}}{(1-\epsilon)R^2\cdot 2^{-\left \lceil \log_2(d+1)\right \rceil} + 1-R+R\cdot 2^{-\tau}(1-\epsilon)}\\
	&\geq \frac{(1-\epsilon)\cdot {\kappa_d}\cdot R^2\cdot 2^{-\left \lceil \log_2(d+1)\right \rceil}}{(1-\epsilon)R^2\cdot 2^{-\left \lceil \log_2(d+1)\right \rceil} + 1-R+2^{-\tau}(1-\epsilon)}
\end{align*}
where the first inequality holds since $\frac{K_m}{N_m}\xrightarrow{m\to \infty}R$ and $\alpha_m,\beta_m\xrightarrow{m\to \infty}0$, and the last inequality holds since $R\in (0,1)$. Finally, taking $\epsilon\downarrow 0$, we obtain the rate lower bound in the statement of the Lemma.

We now prove that the rate lower bound derived above, is achievable over a $(d,\infty)$-RLL input-constrained BMS channel, using the decoding procedure given in (\textbf{D1})--(\textbf{D2}), {as} long as $R<C$, {where $C$ is the capacity of the unconstrained channel}. To this end, note that Step (\textbf{D1}) decodes the $L$ parts, $\mathbf{c}_1,\ldots,\mathbf{c}_L$, each with probability of error at most $\eta_m>0$, with $\eta_m\xrightarrow{m\to \infty} 0$, if $R<C$. Hence, the {block} probability of error Pr$\left[\hat{c}_{K_m}^{N_m-1}\neq {c}_{K_m}^{N_m-1}\right]$ of the decoding stage (\textbf{D1}), is at most $L\cdot \eta_m$. Moreover, given the event $\left\{\hat{c}_{K_m}^{N_m-1} = {c}_{K_m}^{N_m-1}\right\}$, Step (\textbf{D2}) determines the information bits $\mathbf{w}$, with {conditional block} probability of error Pr$\left[\hat{c}_0^{K_m-1}\neq \mathbf{w} \big\vert\ \hat{c}_{K_m}^{N_m-1} = {c}_{K_m}^{N_m-1}\right] \leq \delta_m$, with $\delta_m>0$ such that $\delta_m\xrightarrow{m\to \infty} 0$, if $R<C$. Hence, the overall  probability of correct estimation of the information bits $\mathbf{w}$ can be bounded as
\begin{align*}
\text{Pr}\left[\hat{c}_0^{K_m-1}= \mathbf{w}\right]&\geq \text{Pr}\left[\hat{c}_{K_m}^{N_m-1}= {c}_{K_m}^{N_m-1}\right]\cdot \text{Pr}\left[\hat{c}_0^{K_m-1}= \mathbf{w} \ \big\vert \ \hat{c}_{K_m}^{N_m-1} = {c}_{K_m}^{N_m-1}\right]\\
&\geq (1-L\cdot \eta_m)\cdot (1-\delta_m).
\end{align*}
As $L$ is a constant, the lower bound on the probability of correct estimation converges to $1$, as $m\to \infty$.
\end{proof}

In Section \ref{sec:main}, we compared the achievable rate of $\frac{C}{2}$, for $d=1$, obtained using linear subcodes of RM codes in Theorem \ref{thm:rm} ({using the block-MAP decoder of the larger RM codes}), with rates achievable by the two-stage coding scheme above. An equivalent way of stating our observations there is that the two-stage coding scheme achieves a higher rate for erasure probabilities $\epsilon\lessapprox 0.2387$, when used over the $(1,\infty)$-RLL input-constrained BEC, and for crossover probabilities $p$ that are in the approximate interval $(0,0.0392)\cup(0.9608,1)$, when used over the $(1,\infty)$-RLL input-constrained BSC.


\section{Conclusion}
\label{sec:conclusion}
This work proposed explicit, deterministic coding schemes, without feedback, for binary memoryless symmetric (BMS) channels with $(d,\infty)$-runlength limited (RLL) constrained inputs. In particular, achievable rates were calculated by identifying specific constrained linear subcodes and proving the existence of constrained, potentially non-linear subcodes, of a sequence of RM codes of rate $R$. Furthermore, upper bounds were derived on the rate of the largest linear $(d,\infty)$-RLL subcodes of RM codes of rate $R$, showing that the linear subcodes identified in the constructions were essentially rate-optimal. A novel upper bound on the rate of general $(1,\infty)$-RLL subcodes was derived using properties of the weight distribution of RM codes, and a new explicit two-stage coding scheme was proposed, whose rates are better than coding schemes using linear subcodes, for large $R$.




There is much scope for future work in this line of investigation. Firstly, following the close relationship between the sizes of $(1,\infty)$-RLL subcodes and the weight distributions of RM codes established in this work, a more sophisticated analysis of achievable rates can be performed with the availability of better lower bounds on the weight distributions of RM codes. Likewise, sharper upper bounds on the weight distributions of RM codes will also lead to better upper bounds on the rate of any $(1,\infty)$-RLL subcodes of a certain canonical sequence of RM codes. It would also be of interest to derive good upper bounds on the rates of general $(d,\infty)$-RLL subcodes of RM codes, via weight distributions or otherwise.

Further, we wish to explore the construction of general doubly-transitive linear codes that contain a large number of $(d,\infty)$-RLL constrained codewords. Such constructions will also help lend insight into the capacity of input-constrained BMS channels without feedback, which is a well-known open problem.

\appendices
\section{Proof of Inequality \eqref{eq:inter}}
In this section, we show that the inequality $\beta(m)\leq \theta(m)$ holds, for large $m$ and sufficiently small $\delta>0$, with $\beta(m)$ and $\theta(m)$ defined in equations \eqref{eq:beta} and \eqref{eq:inter}, respectively. 

We start with the expression 
\begin{equation}
	\beta(m) = 2^{o(2^m)} \cdot \sum_{i=1}^{r_m-1} \exp_2 \left(2^{m-1}\cdot {R_{m,+}} \cdot h_b(2^{-i}) - {m-2-i \choose \leq r_m}\right)
	\label{eq:beta:2}
\end{equation}
We will split the sum $\sum\limits_{i=1}^{r_m-1}$ into two parts: $\sum\limits_{i=1}^{t_m}$ and $\sum\limits_{i={t_m}+1}^{r_m-1}$, where $t_m := \lfloor m^{1/3} \rfloor$.  For $i \in [t_m+1:r_m-1]$, we have
$$
2^{m-1}\cdot {R_{m,+}} \cdot h_b(2^{-i}) - {m-2-i \choose \leq r_m} \ \leq \ 2^{m-1}\cdot {R_{m,+}} \cdot h_b(2^{-i})  \ = \ o(2^m),
$$
since $h_b(2^{-i})\leq h_b(2^{-t_m-1})$, and $h_b(2^{-t_m-1})\xrightarrow{m\to \infty} 0$, by the continuity of entropy. Thus, the contribution of each term of the sum $\sum\limits_{i={t_m}+1}^{r_m-1}$ is $2^{o(2^m)}$, and since there are at most $r_m = O(m)$ terms in the sum, the total contribution from the sum is $2^{o(2^m)}$. 

Turning our attention to $i \in [t_m]$, we note first that for $m$ large enough, we have ${R_{m,+}}\leq R(1+\epsilon)$, for $\epsilon \in (0,1)$ suitably small. Also, we write
\begin{align}
2^{m-1}\cdot {R_{m,+}} \cdot & h_b(2^{-i}) - {m-2-i \choose \leq r_m} \notag \\ & \ \ \ \ \leq \ 2^{m-1} \cdot \left[ R(1+\epsilon)\cdot h_b(2^{-i}) - 2^{-(i+1)}\cdot \frac{1}{2^{m-2-i}} {m-2-i \choose \leq r_m}\right]. \label{i_in_tm_ineq}
\end{align}
By Lemma~\ref{lem:rate}, we obtain that $\frac{1}{2^{m-2-i}} {m-2-i \choose \leq r_m}$ converges to $R$ for all $i \in [t_m]$. In fact, with a bit more effort, we can show that this convergence is uniform in $i$. 
Indeed, since $i \le t_m$, by virtue of \eqref{eq:rm_diff}, we have $|r_m - r_{m-2-i}|  \le  \frac{t_m+2}{2} + \frac{\sqrt{t_m+2}}{2} \lvert Q^{-1}(1-R)\rvert + 1 =: \nu_m$. Using the notation in the proof of Lemma~\ref{lem:rate}, we have $\frac{1}{2^{m-2-i}} {m-2-i \choose \leq r_m} = \Pr[S_{m-2-i} \le r_m]$. Thus, analogous to \eqref{sandwich}, we have for all sufficiently large $m$,
\begin{align*}
& \Pr[\overline{S}_{m-2-i} \le Q^{-1}(1-R) - \frac{\nu_m}{\frac12 \sqrt{m-2-t_m}}]  \le \ \Pr[S_{m-2-i} \le r_m] \le \ \Pr[\overline{S}_{m-2-i} \le Q^{-1}(1-R) + \frac{\nu_m}{\frac12 \sqrt{m-2-t_m}}].
\end{align*}
Now, we apply the Berry-Esseen theorem (see e.g., \cite[Theorem~3.4.17]{durrett}) which, in this case, asserts that \newline $\left\lvert\Pr[\overline{S}_m \le x] - \Pr[Z \le x]\right\rvert \le 3/\sqrt{m}$, for all $x \in \mathbb{R}$ and positive integers $m$, where $Z \sim N(0,1)$. Thus, \newline $\left\lvert\Pr[\overline{S}_{m-2-i} \le x] - \Pr[Z \le x]\right\rvert \le \frac{3}{\sqrt{m-2-i}} \le \frac{3}{\sqrt{m-2-t_m}}$ holds for all $x \in \mathbb{R}$ and $i \in [t_m]$. This yields
\begin{align*}
& \Pr[Z \le Q^{-1}(1-R) - \frac{\nu_m}{\frac12 \sqrt{m-2-t_m}}] - \frac{3}{\sqrt{m-2-t_m}} \\
& \ \ \le \ \Pr[S_{m-2-i} \le r_m]  \\
& \ \ \ \ \le \ \Pr[Z \le Q^{-1}(1-R) + \frac{\nu_m}{\frac12 \sqrt{m-2-t_m}}] + \frac{3}{\sqrt{m-2-t_m}}.
\end{align*}
Since $t_m$ and $\nu_m$ are both $o(\!\!\sqrt{m})$, we deduce that, as $m \to \infty$, $\Pr[S_{m-2-i} \le r_m] = \frac{1}{2^{m-2-i}} {m-2-i \choose \leq r_m}$ converges to $R$ uniformly in $i \in [t_m]$. 

Hence, for small $\epsilon \in (0,1)$ and $m$ large enough, we have that for all $i \in [t_m]$ that
\[
\frac{1}{2^{m-2-i}} {m-2-i \choose \leq r_m} \geq (1 - \epsilon) R,
\]
so that, carrying on from \eqref{i_in_tm_ineq}, we have that
\begin{align}
R(1+\epsilon)\cdot h_b(2^{-i})-2^{-(i+1)}\cdot \frac{1}{2^{m-2-i}}{m-2-i\choose\leq r_m} \leq R\left[(1+\epsilon)\cdot h_b(2^{-i})-2^{-(i+1)}\cdot (1-\epsilon)\right]. \label{tm_ineq}
\end{align}
Now, we claim that for any $i\in \mathbb{N}$, the expression within square brackets above can be bounded above as:
\begin{align}
	(1+\epsilon)\cdot h_b(2^{-i})-2^{-(i+1)}\cdot (1-\epsilon)&\leq 
\frac{3}{4}+2\epsilon,
	\label{eq:interineq}
\end{align}
The proof of equation \eqref{eq:interineq} relies on simple calculus, and is relegated to Appendix B.

To put it all together, recall that we split the sum $\sum\limits_{i=1}^{r_m-1}$ in \eqref{eq:beta:2} into two parts: $\sum\limits_{i=1}^{t_m}$ and $\sum\limits_{i={t_m}+1}^{r_m-1}$, where $t_m := \lfloor m^{1/3} \rfloor$. Given an arbitrarily small $\delta>0$, settng $\epsilon = \delta/2$, we obtain via \eqref{i_in_tm_ineq}--\eqref{eq:interineq} that, for all sufficiently large $m$, the contribution from the sum $\sum_{i=1}^{t_m}$ is at most
$$
m^{1/3} \cdot \exp_2\left(2^{m-1}R\cdot\left(\frac34 +\delta\right) \right).
$$
We noted earlier that the contribution from the sum $\sum_{i=t_m+1}^{r_{m}-1}$ is $\text{exp}_2(o(2^m))$. Therefore, the overall sum $\sum\limits_{i=1}^{r_m-1}$ in \eqref{eq:beta:2} can be bounded above, for all sufficiently large $m$, by 
$$
2m^{1/3} \cdot \exp_2\left(2^{m-1}R\cdot\left(\frac34 +\delta\right) \right),
$$
Consequently,
$$
\beta(m) \ \le \ \text{exp}_2\left(2^{m-1}R\cdot\left(\frac34 +\delta\right)+o(2^m)\right) \ =: \ \theta(m).
$$
\section{Proof of inequality \eqref{eq:interineq}}
In this section, we show that the inequality in \eqref{eq:interineq} holds. 

Firstly, we note that the expression on the left of \eqref{eq:interineq} obeys, for $i\geq 1$,
\begin{align}
	(1+\epsilon)\cdot h_b(2^{-i})-2^{-(i+1)}\cdot (1-\epsilon)&= h_b(2^{-i})-2^{-(i+1)}+\epsilon\left(h_b(2^{-i})+2^{-(i+1)}\right) \notag\\
	&\leq h_b(2^{-i})-2^{-(i+1)}+2\epsilon. \label{eq:appb1}
\end{align}

Now, consider the function $f(x) = h_b(2^{-x})-2^{-(x+1)}$, where $x\in [0,\infty)$. By taking the derivative with respect to $x$ on both sides, we get
\[
f^\prime(x)=(2^{-x}\ln 2) \cdot \left[\frac{1}{2}-\log_2\left(\frac{1-2^{-x}}{2^{-x}}\right)\right].
\]
The term within square brackets above is positive when $x\in \left(0,\log_2(1+\sqrt{2})\right)$, and is negative for $x\in \left(\log_2(1+\sqrt{2}),\infty\right)$. Importantly, this implies that $f$ is decreasing in the interval $[2,\infty)$. Furthermore, we note that $f(1) = h_b(\frac12)-\frac14 = \frac34$, and $f(2) = h_b(\frac14)-\frac18 \approx 0.68 < f(1)$. Hence, $f$ is decreasing over integers $i\in \mathbb{N}$.

With this, we obtain that the right side of the inequality in \eqref{eq:appb1} is at most $\frac34+2\epsilon$.


\ifCLASSOPTIONcaptionsoff
  \newpage
\fi



%
\bibliographystyle{IEEEtran}
{\footnotesize
	\bibliography{references}}

\end{document}